\newcommand{\bra}[1]{{\left\langle{#1}\right\vert}}
\newcommand{\ket}[1]{{\left\vert{#1}\right\rangle}}
\def\be{\begin{eqnarray}}
\def\ee{\end{eqnarray}}
\def\bee{\begin{eqnarray*}}
\def\eee{\end{eqnarray*}}
\newtheorem{thm}{Theorem}
\newtheorem{conj}{Conjecture}
\theoremstyle{remark}
\newtheorem*{alg_in}{Input}
\newtheorem*{alg_out}{Output}
\begin{document}

\title{The computational power of  normalizer circuits  \\ over black-box groups}
\author{Juan Bermejo-Vega, Cedric Yen-Yu Lin, Maarten Van den  Nest}

\maketitle

\begin{abstract}
This work presents a precise connection between Clifford circuits,  Shor's factoring algorithm and several other famous quantum algorithms with exponential quantum speed-ups for solving Abelian hidden subgroup problems. We show that all these different forms of quantum computation belong to a common new restricted model of quantum operations that we call \emph{black-box normalizer circuits}. To define these, we extend the previous model of normalizer circuits  \cite{VDNest_12_QFTs, BermejoVega_12_GKTheorem,BermejoLinVdN13_Infinite_Normalizers}, which are built of quantum Fourier transforms, group automorphism and quadratic phase gates associated with an Abelian group $G$. In \cite{VDNest_12_QFTs, BermejoVega_12_GKTheorem,BermejoLinVdN13_Infinite_Normalizers}, the group is always given in an explicitly decomposed form. In our model, we remove this assumption and allow $G$ to be a black-box group \cite{BabaiSzmeredi_Complexity_MatrixGroup_Problems_I}. While standard normalizer circuits were shown to be efficiently classically simulable \cite{VDNest_12_QFTs, BermejoVega_12_GKTheorem,BermejoLinVdN13_Infinite_Normalizers}, we find that normalizer circuits are powerful enough to factorize and solve classically-hard problems in the black-box setting. We further set upper limits to their computational power by showing that decomposing finite Abelian groups is complete for the associated complexity class. In particular, solving this problem renders black-box normalizer circuits efficiently classically simulable by exploiting the generalized stabilizer formalism  in \cite{VDNest_12_QFTs, BermejoVega_12_GKTheorem,BermejoLinVdN13_Infinite_Normalizers}. Lastly, we employ our connection to draw a few practical implications for quantum algorithm design: namely,  we give a no-go theorem for finding new quantum algorithms with black-box normalizer circuits, a universality result for low-depth normalizer circuits, and identify two other complete problems.
\end{abstract}

\tableofcontents

\section{Introduction}
 
\emph{What are the potential uses and limitations of quantum computers?} Arguably, the most attractive feature of quantum computers is their ability to efficiently solve problems for which no efficient classical solution is known. To date, a successful number of quantum algorithms have been discovered  \cite{Mosca09_Quantum_Algorithms_REVIEW,childs_vandam_10_qu_algorithms_algebraic_problems,Bacon10_Recent_Progres_Quantum_Algorithms,VanDamSasaki12_Q_algorithms_number_theory_REVIEW,MoscaSmith12_Algorithms_Quantum_Computers_REVIEW,Jordan_Quantum_Algorithm_Zoo}. Yet it remains  one of the greatest challenges of the field of quantum computing to understand for which precise problems quantum algorithms can be \emph{exponentially} (or \emph{super-polynomially}) faster than their classical counterparts. Some reasons why this question seems  to be hard to answer have been discussed by Shor \cite{Shor_04_Progress_Quantum_Algorithms}.

A fruitful approach to understand the emergence and the structure of exponential quantum speed-ups  is to study \emph{restricted} models of quantum computation. Ideally, the latter should exhibit interesting quantum features  and, at the same time, have less power than universal quantum computers (up to reasonable computational complexity assumptions). To date, several  models studied in the literature seem to have these desirable properties, including Clifford circuits
\cite{Gottesman_PhD_Thesis,Gottesman99_HeisenbergRepresentation_of_Q_Computers,Knill96non-binaryunitary,Gottesman98Fault_Tolerant_QC_HigherDimensions}, nearest-neighbor matchgates \cite{Valiant02_matchgates,Knill01_Fermionic_Linear_Optics,Terhal02_Simulation_noninteracting_fermion_circuits,Jozsa08_Matchgates_classical_simulation},  Gaussian operations  \cite{LloydBraunstein99_QC_over_CVs,Bartlett02Continuous-Variable-GK-Theorem,BartlettSanders02Simulations_Optical_QI_Circuits},  the one-clean qubit (DQC1) model \cite{KnillLaflamme98_DQC1}, and commuting circuits \cite{Shepherd10_PhD_thesis,ShepherdBremner09_Temporally_Unstructured_QC,BremnerJozsaShepherd08,Ni13Commuting_Circuits} (a more complete list is given at the end of this section). 

In this work, we introduce \emph{\textbf{black-box normalizer circuits}}, a \emph{new restricted family} of quantum operations, and characterize their computational power. Our circuit model extends the so-called model of  \emph{\textbf{normalizer circuits over Abelian groups}} \cite{VDNest_12_QFTs,BermejoVega_12_GKTheorem,BermejoLinVdN13_Infinite_Normalizers}, which, in turn, generalizes the  better-known model of  Clifford circuits \cite{Gottesman_PhD_Thesis,Gottesman99_HeisenbergRepresentation_of_Q_Computers,Knill96non-binaryunitary,Gottesman98Fault_Tolerant_QC_HigherDimensions}. In \cite{VDNest_12_QFTs,BermejoVega_12_GKTheorem,BermejoLinVdN13_Infinite_Normalizers}, normalizer circuits act in  high and infinite dimensional systems associated with an Abelian group $G$. The Hilbert space $\mathcal{H}$ has a standard basis $\{\ket{g}\}_{g\in G}$ labeled by the elements of $G$. The allowed operations, called \emph{normalizer gates}, can be of three types.
\begin{enumerate*}
\item Quantum Fourier transforms over subgroups of $G$.
\item Group automorphism gates: permutation-like gates that implement automorphisms of the group $G$.
\item Quadratic phase gates: diagonal gates which multiply standard basis states by quadratic phases.
\end{enumerate*}
In all previous work \cite{VDNest_12_QFTs, BermejoVega_12_GKTheorem,BermejoLinVdN13_Infinite_Normalizers}, the group $G$ is assumed to be given in a factorized form (\ref{eq:Elementary Group}), which endows the Hilbert space of the computation  with a tensor-product structure:
\begin{equation}\label{eq:Elementary Group}
G=\Z^a\times \T^b\times \DProd{N}{c}\quad \longleftrightarrow \quad \mathcal{H} =\mathcal{H}_{\Z}^{\otimes (a+b)}\otimes \mathcal{H}_{N_1}\otimes \cdots \otimes \mathcal{H}_{N_c}.
\end{equation} 
In (\ref{eq:Elementary Group}) $\Z$ is the group of \emph{integers}, $\Z_N$ the group of integers modulo $N$,  and $\T$ is the \emph{circle group}, consisting of angles from 0 to 1 (in  units of $2\uppi$) with the addition modulo 1. The Hilbert space $\mathcal{H}_\Z$ has a standard basis labeled by integers ($\Z$ basis) and a Fourier-basis labeled by angles ($\T$ basis).

A \emph{normalizer circuit over $G$}  \cite{VDNest_12_QFTs, BermejoVega_12_GKTheorem,BermejoLinVdN13_Infinite_Normalizers} is any quantum circuit composed of  normalizer gates. In particular, $n$-qubit Clifford circuits are normalizer circuits over the group $\Z_2^n$.

Despite containing arbitrary numbers of quantum Fourier transforms (which play an important role in Shor's algorithms \cite{Shor}) and entangling gates (automorphism, quadratic phase gates), it was shown in \cite{VDNest_12_QFTs, BermejoVega_12_GKTheorem,BermejoLinVdN13_Infinite_Normalizers} that normalizer circuits can be \emph{efficiently simulated} by classical computers. This result exploits an extended \textbf{stabilizer formalism} to track the evolution of normalizer circuits and generalizes the celebrated Gottesman-Knill theorem \cite{Gottesman_PhD_Thesis,Gottesman99_HeisenbergRepresentation_of_Q_Computers}.

The key new element in our work are normalizer circuits that can be associated with Abelian \emph{\textbf{black-box groups}} \cite{BabaiSzmeredi_Complexity_MatrixGroup_Problems_I}, which we may simply call ``black-box normalizer circuits''. An  group $\mathbf{B}$ (always Abelian in this work) is a black-box group if it is \emph{finite}, its elements are uniquely encoded by strings of some length $n$ and the group operation is performed by a black-box (the \emph{group oracle}) in one time-step. We define  \emph{black-box normalizer circuits} to be a normalizer circuits associated with groups of the form $G=G_\textrm{prev}\times \mathbf{B}$, where   $G_\textrm{prev}$ is of form (\ref{eq:Elementary Group}). 

The \textbf{\emph{key new feature}} in our work is that the black-box  group $\mathbf{B}$ is \emph{not} given to us in a factorized form. This is a subtle yet tremendously important difference: although such a  decomposition \emph{always} exists  for any finite Abelian group \cite{Humphrey96_Course_GroupTheory}, finding just one is regarded as a \emph{hard computational problem}; indeed, it is provably at least as hard as \emph{factoring}\footnote{\label{footnote:Hardness Group Decomposition}Knowing $\mathbf{B}\cong \DProd{d}{m}$ implies that the order of the group $|G|=d_1 d_2\cdots d_m$. Hardness results for computing orders \cite{BabaiSzmeredi_Complexity_MatrixGroup_Problems_I,Babai98apolynomial_time_theory_of_Black_Box_Groups} imply that the problem is provably hard for classical computers in the black-box setting. For groups  $\Z_N^\times$, computing $\varphi(N):=|\Z_N^\times|$ (the Euler totient function) is equivalent to factoring \cite{Shoup08_A_Computational_Introducttion_to_Number_Theory_and_Algebra}.}. Our \textbf{\emph{motivation}} to adopt the notion of black-box group is to study Abelian groups for which the group multiplication can be performed in classical polynomial-time while no efficient classical algorithm to decompose them is known. A key example is $\mathbb{Z}^{\times}_N$, the multiplicative group of integers modulo\cref{footnote:Hardness Group Decomposition}, which  plays an important role in Shor's factoring algorithm \cite{Shor}. With some abuse of notation, we  call any such group also a ``black-box group''\footnote{It will always be clear from context whether the group multiplication is performed by an oracle at unit cost or by some well-known polynomial-time classical algorithm. Most results will be stated in the black-box setting though.}.

\subsection*{Statement of results}

This work focuses on understanding the potential uses and limitations of black-box normalizer circuits. Our results (listed below) give a precise characterization of their \textbf{computational power}. On one hand, we show that several famous quantum algorithms, including shor's celebrated \emph{factoring algorithm}, can be implemented with black-box normalizer circuits. On the other hand, we apply  our former simulation results \cite{VDNest_12_QFTs, BermejoVega_12_GKTheorem,BermejoLinVdN13_Infinite_Normalizers}  to set upper limits to the class of problems that these circuits can solve, as well as to draw practical implications for quantum algorithm design.

Our main results are now summarized:

\begin{enumerate}
\item \textbf{Quantum algorithms.} We show that many of the best known quantum algorithms  are particular instances of normalizer circuits over black-box groups, including  Shor's celebrated factoring and discrete-log algorithms; it follows that black-box normalizer circuits can achieve \emph{\textbf{exponential quantum speed-ups}}. Namely, the following algorithms are examples of black-box normalizer circuits.
\begin{itemize}
\item \textbf{Discrete logarithm.} Shor's discrete-log quantum algorithm  \cite{Shor} is a normalizer circuit over $\Z_{p-1}^2\times\Z_p^\times$ (theorem \ref{thm:Discrete Log} \ref{sect:Discrete Log}).
\item \textbf{Factoring.} We show that  a hybrid infinite-finite dimensional version of Shor's factoring algorithm \cite{Shor}  can be implemented with normalizer circuit over $\Z\times\Z_N^\times$. We prove that there is a close relationship between \emph{Shor's original algorithm} and our version: Shor's  can be understood as a discretized qubit implementation of ours (theorems \ref{thm:Order Finding}, \ref{thm:Shor normalizer}).  We also discuss that the \emph{\textbf{infinite group}} $\Z$ plays a key role in our ``infinite Shor's algorithm'', by showing that it is impossible to implement Shor's modular-exponentiation gate efficiently, {even approximately}, with finite-dimensional normalizer circuits (theorem \ref{thm:ModExp requires Z}). Last,  we further \emph{conjecture} that only normalizer circuits over {infinite groups} can factorize (conjecture \ref{conj: Factoring not over finite groups}).
\item \textbf{Elliptic curves.} The generalized Shor's algorithm for computing discrete logarithms over an elliptic curve \cite{ProosZalka03_Shors_DiscreteLog_Elliptic_Curves,Kaye05_optimized_Quantum_Elliptic_Curve,CheungMaslovMathew08_Design_QuantumAttack_Elliptic_CC} can be implemented with black-box normalizer circuits (section \ref{sect:Elliptic Curve}); in this case, the black-box group is the group of integral points $E$ of the elliptic curve instead of $\Z_p^\times$.
\item \textbf{Group decomposition.}  Cheung-Mosca's algorithm for decomposing black-box finite Abelian groups \cite{mosca_phd, cheung_mosca_01_decomp_abelian_groups} is a combination of several types of black-box normalizer circuits. In fact, we discuss a new \emph{extended  Cheung-Mosca's algorithm} that finds even more information about the structure of the group and it is also based on normalizer circuits (section \ref{sect:Group Decomposition}).
\item \textbf{Hidden subgroup problem.} Deutsch's \cite{Deutsch85quantumtheory}, Simon's  \cite{Simon94onthe} and, in fact, all quantum algorithms that solve Abelian hidden subgroup problems \cite{Boneh95QCryptanalysis,Grigoriev97_testing_shift_equivalence_polynomials,kitaev_phase_estimation,Kitaev97_QCs:_algorithms_error_correction,Brassard_Hoyer97_Exact_Quantum_Algorithm_Simons_Problem,Hoyer99Conjugated_operators,MoscaEkert98_The_HSP_and_Eigenvalue_Estimation,Damgard_QIP_note_HSP_algorithm}, are normalizer circuits over groups of the form $G\times\mathcal{O}$, where $G$ is the group that contains the hidden subgroup $H$ and $\mathcal{O}$ is a group isomorphic to $G/H$ (section \ref{sect:Abelian HSPs}). The group $\mathcal{O}$, however, is not a black-box group due to a small technical difference between our  oracle model we use and the oracle setting in the HSP.
 
\item \textbf{Hidden kernel problem.}  The group $\mathcal{O} \cong G/H$ in the previous section becomes a black-box group if the oracle function in the HSP is a homomorphism between black-box groups: we call this subcase the {\emph{hidden kernel problem}} (HKP). The difference does not seem to be very significant, and can be eliminated by choosing different oracle models (section \ref{sect:Abelian HSPs}). However, we will never refer to Simon's or to general Abelian HSP algorithms  as ``black-box normalizer circuits'', in order to be consistent with our and pre-existing terminology. 
\end{itemize}
Note that it follows from the above that black-box normalizer circuits can render insecure widespread public-key cryptosystems, namely,  Diffie-Hellman key-exchange  \cite{DiffieHellman}, RSA \cite{RSA} and elliptic curve cryptography  \cite{Menezes96_cryptography_book,Buchmann00_cryptography_book}.

\item \textbf{Group decomposition is \emph{as hard as} simulating normalizer circuits.} Another main contribution of this work is to show that the group decomposition problem (suitably formalized) is, in fact, \textbf{\emph{complete}} for the complexity class \textbf{Black-Box Normalizer}, of problems efficiently solvable by probabilistic classical computers with oracular access to black-box normalizer circuits. Since normalizer circuits over decomposed groups are efficiently classically simulable \cite{VDNest_12_QFTs, BermejoVega_12_GKTheorem,BermejoLinVdN13_Infinite_Normalizers}, this result suggests that the computational power of normalizer circuits  originate \emph{precisely} in the classical hardness of learning the structure of a black-box group. 

We obtain this last  result by proving a significantly \textbf{\emph{stronger theorem}} (theorem \ref{thm:Simulation}), which states that any black-box normalizer circuit  can be efficiently simulated \emph{step by step} by a classical computer if  an efficient subroutine for decomposing finite Abelian groups is provided.

\item \textbf{A no-go theorem for new quantum algorithms.} In this work, we provide an negative answer to the question ``\emph{can new quantum algorithms based on normalizer circuits be found?}'': by applying the latter simulation result, we conclude that any new algorithm not in our list can be efficiently simulated step-by-step using the extended Cheung-Mosca algorithm and classical post-processing. This implies (theorem \ref{thm:No Go Theorem}) that new \emph{exponential} speed-ups cannot be found without changing our setting (we discuss how the setting might be changed in the discussion \ref{sect:Discussion}). This result says nothing about polynomial speed-ups.

\item \textbf{Universality of short normalizer circuits.} A practical consequence of our no-go theorem is that all problems in the class \textbf{Black Box Normalizer} can be solved using short normalizer circuits with a \emph{constant} number of normalizer gates. (We may still need polynomially many runs of such circuits, along with classical processing in between, but each individual normalizer circuit is short.) We find this observation interesting, in that it explains a very curious feature present in all the quantum algorithms that we study \cite{Shor,ProosZalka03_Shors_DiscreteLog_Elliptic_Curves,Kaye05_optimized_Quantum_Elliptic_Curve,CheungMaslovMathew08_Design_QuantumAttack_Elliptic_CC,mosca_phd, cheung_mosca_01_decomp_abelian_groups,Deutsch85quantumtheory,Simon94onthe,Boneh95QCryptanalysis,Grigoriev97_testing_shift_equivalence_polynomials,kitaev_phase_estimation,Kitaev97_QCs:_algorithms_error_correction,Brassard_Hoyer97_Exact_Quantum_Algorithm_Simons_Problem,Hoyer99Conjugated_operators,MoscaEkert98_The_HSP_and_Eigenvalue_Estimation,Damgard_QIP_note_HSP_algorithm} (section \ref{sect:Quantum Algorithms}): they all contain at most a constant number of \emph{quantum Fourier transforms} (actually at most two).
\item \textbf{Other complete problems.} As our last contribution in this series, we identify another two complete problems for the class \textbf{Black Box Normalizer} (section \ref{sect:Complete Problems}): these are the (afore-mentioned) Abelian \emph{hidden kernel problem}, and  the problem of finding a general-solution to a \emph{system of linear equations over black-box groups} (the latter are related to the systems of linear equations over groups  studied in \cite{BermejoVega_12_GKTheorem,BermejoLinVdN13_Infinite_Normalizers}.
\end{enumerate}

\subsection*{{A link between Clifford circuits and Shor's  algorithm}}

The results in this work together with those previously obtained in \cite{VDNest_12_QFTs, BermejoVega_12_GKTheorem,BermejoLinVdN13_Infinite_Normalizers} demonstrate the existence of a precise connection between Clifford circuits and  Shor's factoring algorithm. At first glance, it might be hard to digest that two types of quantum circuits that seem to be so far away from each other might related at all. Indeed, classically simulating Shor's algorithm is widely believed to be an intractable problem (at least as hard as factoring), while a zoo of classical techniques and efficient classical algorithms exist for simulating and computing properties of Clifford circuits \cite{Gottesman_PhD_Thesis,Gottesman99_HeisenbergRepresentation_of_Q_Computers,Knill96non-binaryunitary,Gottesman98Fault_Tolerant_QC_HigherDimensions,dehaene_demoor_coefficients,AaronsonGottesman04_Improved_Simul_stabilizer,dehaene_demoor_hostens,AndersBriegel06_Simulation_Stabilizer_GraphStates,VdNest10_Classical_Simulation_GKT_SlightlyBeyond,deBeaudrap12_linearised_stabiliser_formalism,JozsaVdNest14_Classical_Simulation_Extended_Clifford_Circuits}. However, from the point of view of this paper, both turn out to be \emph{intimately related} in that they  both are just different types of normalizer circuits. In other words, they are both \emph{members of a common family of quantum operations}.

Remarkably, this correspondence between Clifford and Shor, rather than being just a mere mathematical curiosity, has also some  sensible consequences for the theory of quantum computing. One that follows from theorem \ref{thm:Simulation}, our simulation result, is that all algorithms studied in this work (Shor's factoring and discrete-log algorithms,  Cheung-Mosca's, etc.) have a \emph{rich hidden structure} which enables simulating them classically  with a stabilizer picture approach ``à la Gottesman-Knill'' \cite{Gottesman_PhD_Thesis,Gottesman99_HeisenbergRepresentation_of_Q_Computers}. This structure let us track the evolution of the quantum state of the computation \emph{step by step} with a very special algorithm, which, despite being inefficient, exploits \emph{completely different} algorithmic principles than the naive brute-force approach: i.e.,\ writing down the coefficients of the initial quantum state and tracking their quantum mechanical evolution through the gates of the circuit\footnote{Note that throughout this manuscript we always work at a high-level of abstraction (algorithmically speaking), and that the ``steps'' in a normalizer-based quantum algorithm are always counted at the logic level of normalizer gates, disregarding smaller gates needed to implement them. In spite of this, we find the above  simulability property of black-box normalizer circuits to be truly fascinating. To get a better grasp of its significance, we may 	perform the following  thought experiment. Imagine, we would repeatedly concatenate black-box normalizer circuits in some intentionally complex geometric arrangement, in order to form a gargantuan, intricate ``Shor's algorithm'' of monstrous size. Even in this case, our simulation result states that if we can decompose Abelian groups (say, with an oracle), then we can efficiently simulate the evolution of the circuit, normalizer-gate after normalizer-gate, independently of the number of Fourier transforms, automorphism and quadratic-phase gates involved in the computation (the overhead of the classical simulation is always at most polynomial in the input-size).}. Although the stabilizer-picture simulation is \emph{inefficient}  when black-box groups are present (i.e., it does not yield an efficient classical algorithm for simulating Shor's algorithm), the mere existence of such an algorithm reveals how much mathematical structure these quantum algorithms have in common with Clifford and normalizer circuits.

In retrospect, and from an applied point of view, it is also rather satisfactory that one can gracefully exploit the above connection to draw practical implications for quantum algorithm design:  in our work, we have actively used our knowledge of the hidden ``Clifford-ish'' mathematical features of the Abelian hidden subgroup problem algorithms in deriving results 2, 3, 4 and 5 (in the list given in the previous section).
 
As a side remark, we regard it a memorable curiosity that replacing decomposed groups with black-box groups not only renders the simulation methods in \cite{VDNest_12_QFTs,BermejoVega_12_GKTheorem,BermejoLinVdN13_Infinite_Normalizers}   inefficient (this is, in fact, something to be expected, due to the existence of hard computational problems related to black-box groups), but it is also precisely this modification that suddenly bridges the gap between Clifford/normalizer circuits, Shor's algorithms, Simon's and so on. 

Finally, it is  mathematically elegant to note that all normalizer circuits we have studied are related through the so-called \textbf{Pontryagin-Van Kampen duality} \cite{Morris77_Pontryagin_Duality_and_LCA_groups,Stroppel06_Locally_Compact_Groups,Dikranjan11_IntroTopologicalGroups,rudin62_Fourier_Analysis_on_groups,HofmannMorris06The_Structure_of_Compact_Groups,Armacost81_Structure_LCA_Groups,Baez08LCA_groups_Blog_Post}, which states that all locally-compact Abelian (LCA) groups are dual to their characters groups. The role of this duality in the normalizer circuit model was discussed in our previous work \cite{BermejoLinVdN13_Infinite_Normalizers}.

\subsection*{Relationship to previous work}\label{sect:Relationship Previous Work}

Up to our best knowledge,  neither normalizer circuits over black-box groups, nor their relationship with Shor's algorithm or the Abelian hidden subgroup problem,  have been previously investigated. Normalizer circuits over explicitly-decomposed finite groups $\DProd{N}{a}$ were studied in \cite{VDNest_12_QFTs, BermejoVega_12_GKTheorem}, by two of us. We recently extended the formalism in \cite{BermejoLinVdN13_Infinite_Normalizers} to infinite groups of the form $\Z^a\times \T^b \times \DProd{N}{a}$.

Clifford circuits over qubits and qudits (which can be understood as normalizer circuits over groups of the form $\Z_2^m$ and $\Z_d^m$) have been extensively investigated in the literature \cite{Gottesman_PhD_Thesis,Gottesman99_HeisenbergRepresentation_of_Q_Computers,Knill96non-binaryunitary,Gottesman98Fault_Tolerant_QC_HigherDimensions,dehaene_demoor_coefficients,AaronsonGottesman04_Improved_Simul_stabilizer,dehaene_demoor_hostens,deBeaudrap12_linearised_stabiliser_formalism,VdNest10_Classical_Simulation_GKT_SlightlyBeyond,JozsaVdNest14_Classical_Simulation_Extended_Clifford_Circuits}. Certain generalizations of Clifford circuits that are not normalizer circuits have also been studied: \cite{AaronsonGottesman04_Improved_Simul_stabilizer,BravyiKitaev05MagicStateDistillation,Jozsa08_Matchgates_classical_simulation,VdNest10_Classical_Simulation_GKT_SlightlyBeyond,JozsaVdNest14_Classical_Simulation_Extended_Clifford_Circuits} consider Clifford circuits supplemented with some non-Clifford ingredients;  a different form of  Clifford circuits based
on projective normalizers of unitary groups
 were investigated in \cite{ClarkJozsaLinden08Generalized_Clifford_Groups}.
  
The hidden subgroup problem (HSP) has played a central role in the history of quantum algorithms and has been extensively studied before our work. The Abelian HSP, which is also a central subject of this work, is related to most of the best known quantum algorithms that were found in the early days of the field \cite{Deutsch85quantumtheory,Simon94onthe,Boneh95QCryptanalysis,Grigoriev97_testing_shift_equivalence_polynomials,kitaev_phase_estimation,Kitaev97_QCs:_algorithms_error_correction,Brassard_Hoyer97_Exact_Quantum_Algorithm_Simons_Problem,Hoyer99Conjugated_operators,MoscaEkert98_The_HSP_and_Eigenvalue_Estimation,Damgard_QIP_note_HSP_algorithm}. Its best-known generalization, the non-Abelian HSP, has also been heavily investigated due to its relationship to the graph isomorphism problem and certain shortest-vector-lattice problems \cite{EttingerHoyerKnill2004_Hidden_Subgroup,HallgrenRusselTaShma2003_Normal_Subgroup,Kuperberg2005_Dihedral_Hidden_Subgroup,Regev2004_Dihedral_Hidden_Subgroup,Kuperberg2013_Hidden_Subgroup,RoettelerBeth1998_Hidden_Subgroup,IvanyosMagniezSantha2001_Hidden_Subgroup,MooreRockmoreRussellSchulman2004,InuiLeGall2007_Hidden_Subgroup,BaconChildsVDam2005_Hidden_Subgroup,ChiKimLee2006_Hidden_Subgroup,IvanyosSanselmeSantha2007_Hidden_Subgroup,MagnoCosmePortugal2007_Hidden_Subgroup,IvanyosSanselmeSantha2007_Nil2_Groups,FriedlIvanyosMagniezSanthaSen2003_Hidden_Translation,Gavinsky2004_Hidden_Subgroup,ChildsVDam2007_Hidden_Shift,DenneyMooreRussel2010_Conjugate_Stabilizer_Subgroups,Wallach2013_Hidden_Subgroup} (see also the reviews \cite{lomont_HSP_review,childs_lecture_8,VanDamSasaki12_Q_algorithms_number_theory_REVIEW} and references therein).
  
The notion of black-box group, which is a key concept in our setting, was first considered by Babai and Szméredi in  \cite{BabaiSzmeredi_Complexity_MatrixGroup_Problems_I} and have since been extensively studied in classical complexity theory  \cite{Arvind97solvableblack-box,Babai1991_Vertex_Transive_Graphs_Random_Generation_Finite_Groups,Babai1992_Bounded_Round_Interactive_Proofs_Finite_Groups,Babai97_Randomization_group_algorithms,Babai98apolynomial_time_theory_of_Black_Box_Groups}. In general, black-box groups may not be Abelian and do not need to have uniquely represented elements \cite{BabaiSzmeredi_Complexity_MatrixGroup_Problems_I}; in the present work, we only consider Abelian uniquely-encoded black-box groups. 

In quantum computing, black-box groups were previously investigated in the context of quantum algorithms, both in the Abelian \cite{mosca_phd,cheung_mosca_01_decomp_abelian_groups,Zhang11Decomposing} and the non-Abelian group setting \cite{watrous00_quantumAlgorithms_solvableGroups,IvanyosMagniezSantha2001_Hidden_Subgroup,FriedlIvanyosMagniezSanthaSen2003_Hidden_Translation,MagniezNayak2005_Group_Commutativity,Fenner05_QAlg_Group_Theoretic_Problems,IvanyosSanselmeSantha2007_Nil2_Groups,LeGall2010_Group_Isomorphism,Zatloukal2013_Equivalent_Group_Extensions}. Except for a few exceptions (cf.\ \cite{watrous00_quantumAlgorithms_solvableGroups,Zhang11Decomposing}) most quantum results have been obtained for uniquely-encoded black-box groups. 

Aside from generalizations of Clifford circuits \cite{VDNest_12_QFTs, BermejoVega_12_GKTheorem,BermejoLinVdN13_Infinite_Normalizers,Gottesman_PhD_Thesis,Gottesman99_HeisenbergRepresentation_of_Q_Computers,Knill96non-binaryunitary,Gottesman98Fault_Tolerant_QC_HigherDimensions,dehaene_demoor_coefficients,AaronsonGottesman04_Improved_Simul_stabilizer,dehaene_demoor_hostens,AndersBriegel06_Simulation_Stabilizer_GraphStates,VdNest10_Classical_Simulation_GKT_SlightlyBeyond,deBeaudrap12_linearised_stabiliser_formalism,JozsaVdNest14_Classical_Simulation_Extended_Clifford_Circuits} (which includes normalizer circuits), many other classes of restricted quantum circuits have been studied in the literature. Some examples (by no means meant to be an exhaustive list) are nearest-neighbor matchgate circuits \cite{Valiant02_matchgates,Knill01_Fermionic_Linear_Optics,Terhal02_Simulation_noninteracting_fermion_circuits,Jozsa08_Matchgates_classical_simulation,Bravyi05_Lagrangian_Rep_Fermionic_Linear_Optics,JozsaKrausMiyakeWatrous10Matchgates,VdNest11_Matchgates,BravyiKoenig12_Simulation_Dissipative_Fermionic_Linear_Optics,deMeloCwiklinskiTerhal13_Noisy_Fermionic_Quantum_Computation}, the one-clean qubit model \cite{AmbainisShulmanVazirani06_Computing_highly_mixed_states,Pouline03_Integrability_DQC1,Poulin04_Fidelity_Decay_DQC1,Shepherd06_DQC1,ShorJordan08_jones_polynomial_Complete_DQC1,JordanWocjan09_DQC1_Jones_Homfly_polynomials,jordan2014approximating_Turaev_Viro_DQC1,MorimaeFujiiFitzsimons14_Hardness_Simulating_DQC1}, circuit models based on Gaussian or linear-optical operations
\cite{LloydBraunstein99_QC_over_CVs,Bartlett02Continuous-Variable-GK-Theorem,BartlettSanders02Simulations_Optical_QI_Circuits,Aaronson11_Computational_Complexity_Linear_Optics,Veitch12_Negative_QuasiProbability_Resource_QC,MariEisert12_Positive_Wigner_Functions_Quantum_Computation,VeitchWiebeFerrieEmerson13_Simulation_scheme_large_class_quantum_optics_experiments}, commuting circuits \cite{Shepherd10_PhD_thesis,ShepherdBremner09_Temporally_Unstructured_QC,BremnerJozsaShepherd08,Ni13Commuting_Circuits}, low-entangling\footnote{Here entanglement is measured with respect to the Schmidt-rank measure ( low-entangling circuits with respect to continuous entanglement measures are universal for quantum computation \cite{VDNest2012_Little_Entanglement}).} circuits \cite{Jozsa03_Role_Entanglement_Quantum_Computational_SpeedUP,Vidal_03_Efficient_Simulation_Sligtly_Entangled} ,
low-depth circuits \cite{TerhalDiVincenzo02Adaptive_ConstantDepth_Quantum_Comp,MarkovShi08_Simulating_QuantumComp_TensorNetwork},
tree-like circuits \cite{MarkovShi08_Simulating_QuantumComp_TensorNetwork,aharonov_AQFT,yoran_short_QFT,browne_QFT,Yoran08_Contractable_circults_little_entanglement}, low-interference circuits \cite{nest_weak_simulations,Stahlke14_Interference_resource_speedup}  and a few others \cite{Jordan10_Permutational_Quantum_Computing,schwarz2013simulating}.

\subsection*{Discussion and outlook}\label{sect:Discussion}

We finish our introduction by discussing a few potential avenues for finding new quantum algorithms as well as some open questions suggested by our work.

In this work, we provide a strict no-go theorem for finding new quantum algorithms with black-box normalizer circuits, as we define them. There are, however, a few possible ways to  modify our setting leading to scenarios where one could bypass these results and, indeed, find new interesting quantum algorithms. We now discuss some.

One enticing possibility would be to study possible extensions of the normalizer circuit framework to non-Abelian groups, in connection with non-Abelian hidden subgroup problems \cite{EttingerHoyerKnill2004_Hidden_Subgroup,HallgrenRusselTaShma2003_Normal_Subgroup,Kuperberg2005_Dihedral_Hidden_Subgroup,Regev2004_Dihedral_Hidden_Subgroup,Kuperberg2013_Hidden_Subgroup,RoettelerBeth1998_Hidden_Subgroup,IvanyosMagniezSantha2001_Hidden_Subgroup,MooreRockmoreRussellSchulman2004,InuiLeGall2007_Hidden_Subgroup,BaconChildsVDam2005_Hidden_Subgroup,ChiKimLee2006_Hidden_Subgroup,IvanyosSanselmeSantha2007_Hidden_Subgroup,MagnoCosmePortugal2007_Hidden_Subgroup,IvanyosSanselmeSantha2007_Nil2_Groups,FriedlIvanyosMagniezSanthaSen2003_Hidden_Translation,Gavinsky2004_Hidden_Subgroup,ChildsVDam2007_Hidden_Shift,DenneyMooreRussel2010_Conjugate_Stabilizer_Subgroups,Wallach2013_Hidden_Subgroup}. We have not addressed this question in the present work. In this direction, the classical simulability of non-Abelian quantum Fourier transforms was studied in \cite{bermejo2011classical} by one of us.

A second possibility would be to consider more general types of normalizer circuits than ours, by \textbf{\emph{extending the class of Abelian groups}} they can be associated with. However, looking at more general  \emph{decomposed} groups does not look particularly promising: we believe that the methods here and in \cite{BermejoLinVdN13_Infinite_Normalizers} can be extended, e.g., to simulate normalizer circuits over groups of the form $\R^a\times \Z^b \times \T^c \times \DProd{N}{d} \times \mathbf{B}$, with additional $\R$ factors (cf.\ our discussion in \cite{BermejoLinVdN13_Infinite_Normalizers}). On the other hand, allowing more general types of groups to act as \emph{black-boxes} looks rather promising to us: one may, for instance, attempt to extend the notion of normalizer circuits to act on Hilbert spaces associated with multi-dimensional infrastructures \cite{Sarvepalli14_1D_infrastructures,FonteinWocjan11_Q_Alg_Period_Lattice_Infrastructure}, which may, informally, be understood as ``infinite black-box groups''\footnote{An $n$-dimensional infrastructure $\mathcal{I}$ provides a classical presentation for an $n$-dimensional hypertorus group $\R^n/\Lambda\cong \T^n$, where $\Lambda$ is an (unknown) period lattice $\Lambda$. The elements of this continuous group are represented with some classical structures known as \emph{$f$-representations}, which are  endowed with an operation that allows us to compute within the torus. Although one must deal carefully with non-trivial technical aspects of infinite groups in order to properly define and compute with $f$-representations (cf.\ \cite{Sarvepalli14_1D_infrastructures,FonteinWocjan11_Q_Alg_Period_Lattice_Infrastructure} and references therein), one may intuitively understand infrastructures as ``generalized black-box hypertoruses''. We stress, though, that it is not standard terminology to call ``black-box group'' to an infinite group.} We expect, in fact, that  known quantum algorithms for finding hidden periods and hidden lattices within real vector spaces  \cite{Hallgren07_Pells_equation,Jozsa03_Hallgrens_Algorithm,Schmidt05_Q_Algorithm_Computation_Unit_Group,Hallgren2005_Unit_Group_Class_Group} and/or or infrastructures \cite{Sarvepalli14_1D_infrastructures,FonteinWocjan11_Q_Alg_Period_Lattice_Infrastructure} (e.g.,\ Hallgren's algorithm for solving Pell's equation \cite{Hallgren07_Pells_equation,Jozsa03_Hallgrens_Algorithm}) could be at least partially interpreted as generalized normalizer circuits in this sense  . Addressing this question would  require a careful treatment of precision errors that appear in such algorithms due to the presence of transcendental numbers, which play no role in the present paper\footnote{No such treatment is needed in this work, since we study quantum algorithms for finding hidden structures in \emph{discrete} groups.}. Some  open questions in this quantum algorithm subfield have been discussed in \cite{FonteinWocjan11_Q_Alg_Period_Lattice_Infrastructure}.

A third possible direction to investigate would be whether different models of normalizer circuits could be constructed over \textbf{\emph{algebraic structures that are not groups}}. One could, for instance, consider sets with \emph{less algebraic structure}, like semi-groups. In this regard, we highlight that a quantum algorithm for finding discrete logarithms over finite semigroups was recently given in   \cite{Childs14_Discrete_Log_Semigroups}. Alternatively, one could study also \emph{sets} with more structure than groups, such as \emph{fields}, whose study is relevant to Van Dam-Seroussi's  quantum algorithm for estimating Gauss sums  \cite{VanDamSeroussi02_Gauss_Sums_QALG}. 

Lastly, we mention some open questions suggested by our work.

In this work, we have not investigated the computational complexity of black-box normalizer circuits \emph{without} classical post-processing. There are two facts which suggest  that power of black-box normalizer circuits alone might, in fact, be significantly smaller. The first is the fact that the complexity class of problems solvable  by Clifford circuits alone is $\oplus \mathbf{L}$ \cite{AaronsonGottesman04_Improved_Simul_stabilizer}, believed to be a strict subclass of $\mathbf{P}$. The second is that normalizer circuits  seem to be incapable of implementing most classical functions coherently  even with constant accuracy (this has been rigorously  shown in finite dimensions \cite{VDNest_12_QFTs,BermejoVega_12_GKTheorem}).

Finally, one may study whether considering more general types of inputs,  measurements or adaptive operations might change the power of black-box normalizer circuits. Allowing, for instance, input product states has potential to increase the power of these circuits, since this already occurs for standard Clifford circuits \cite{BravyiKitaev05MagicStateDistillation,JozsaVdNest14_Classical_Simulation_Extended_Clifford_Circuits}. Concerning measurements, the authors believe that allowing, e.g.\, adaptive Pauli operator measurements (in the sense of \cite{BermejoVega_12_GKTheorem}) is unlikely to give any additional computational power to black-box normalizer circuits: in the best scenario, this could only happen  in infinite dimensions, since adaptive normalizer circuits over finite Abelian groups are also efficiently classically simulable with stabilizer techniques \cite{BermejoVega_12_GKTheorem}. With more general types of measurements, it should be possible to recover full quantum universality, given that qubit cluster-states (which can be generated by Clifford circuits) are a universal resource for measurement-based quantum computation \cite{raussen_briegel_01_Cluster_State,raussen_briegel_onewayQC}. The possibility of obtaining  intermediate hardness results if non-adaptive yet also non-Pauli measurements are allowed (in the lines of \cite{Aaronson11_Computational_Complexity_Linear_Optics}  or \cite[theorem 7]{JozsaVdNest14_Classical_Simulation_Extended_Clifford_Circuits}) remains also open.

\section{Abelian groups}\label{sect:Groups}

The most general groups we will consider in this work are abelian groups of the form
\be\label{group_hilbert_space} G= \Z^a \times \T^b \times \DProd{N}{c} \times \mathbf{B}, \label{general_group} \ee
where $a$, $b$, $N_1,\cdots,N_c$ are arbitary integers and $\mathbf{B}$ is a finite abelian \emph{black box group}, to be defined more precisely later.

We will discuss each of the constitutent groups in turn.

\subsection{$\Z$: the group of integers}
$\Z$ simply refers to the group of integers under addition; it is infinite, but finitely generated (by the element $1$).

\subsection{$\T$: the torus group}
$\T$ refers to the group of real numbers in the interval $[0,1)$ under addition modulo $1$. Unlike all the other components we will consider, it is both infinite and not finitely generated. The introduction of $\T$ is necessary to allow the use of quantum Fourier transforms over $\Z$, as we'll see in the next section.

\subsection{Finite Abelian groups}

Let us start by stating a very important theorem we will use for our results:
\begin{thm}[\textbf{Fundamental Theorem of Finite Abelian Groups} \cite{Humphrey96_Course_GroupTheory}]\label{thm:Fundamental Theorem FAGroups}
Any finite Abelian group $G$ has a decomposition into a direct product of cyclic groups, i.e. 
\be\label{eq:Group Decomposition}
G \cong \Z_{d_1} \times \Z_{d_2} \times \cdots \times \Z_{d_k}
\ee
for some positive integers $d_1,\cdots,d_k$.
\end{thm}
Actually finding such a decomposition for a group $G$ may be difficult in practice. For example, consider the set of integers modulo $N$ that are also relatively prime to $N$; this set forms a group under multiplication. (This group is known as the \emph{multiplicative group of integers modulo $N$}, or $\Z_N^\times$.) It is not known classically how to decompose $\Z_N^\times$ into its cyclic subgroups. For example, if $N=pq$ for $p$, $q$ prime then $\Z_{pq}^\times \cong \Z_{p-1} \times \Z_{q-1}$, and hence decomposing $\Z_{pq}^\times$ is at least as hard as factoring $pq$ or, equivalently,  breaking RSA \cite{RSA}. More generally, decomposing $\Z_N^\times$ is known to be polynomial time equivalent to factoring \cite{Shoup08_A_Computational_Introducttion_to_Number_Theory_and_Algebra}. In the quantum case, however, Cheung and Mosca gave an algorithm \cite{mosca_phd,cheung_mosca_01_decomp_abelian_groups} to decompose any finite abelian group.

In equation (\ref{general_group}), the factors $\DProd{N}{c}$ represent an arbitrary finite Abelian group for which \emph{the group decomposition is known}. The case where the decomposition is unknown will be covered by the black box group $\mathbf{B}$.

\subsection{Black box groups}\label{sect:Black Box Groups}

In this work, we define a \emph{black-box group} $\mathbf{B}$ \cite{BabaiSzmeredi_Complexity_MatrixGroup_Problems_I}  to be a finite  group whose elements are uniquely encoded by binary strings of a certain size $n$, which is the length of the encoding. The elements of the black-box group can be multiplied and inverted  at unit cost by querying a black-bock, or \emph{group oracle}, which computes these operations for us.  The order of a  black-box group with encoding length $n$ is bounded above by $2^n$: the precise order   $|\mathbf{B}|$ may not be given to us, but it is assumed that the group oracle can identify which  strings in the group encoding correspond to elements of the group. When we say that a particular black-box group (or subgroup) is given (as the input to some algorithm), it is meant that a \emph{list of generators} of the group or subgroup is explicitly provided.

From now on, all  black-box groups in this work will be assumed to be \emph{Abelian}. Although we only consider finite Abelian black-box groups, we stress now, that  there is a subtle but crucial  difference between these groups and the explicitly decomposed finite Abelian groups in \cite{VDNest_12_QFTs,BermejoVega_12_GKTheorem}: although, mathematically, all Abelian black-box groups have a decomposition (\ref{eq:Group Decomposition}), it is  \emph{computationally hard} to find one and we assume no knowledge of it. In fact, our motivation to introduce black-box groups in our setting is precisely to model those Abelian groups that cannot be efficiently decomposed with  known classical algorithms that have, nevertheless, efficiently classically computable group operations. With some abuse of notation, we shall call all  such groups also ``black-box groups'', even if no oracle is needed to define them; in such cases, oracle calls will be replaced by \ppoly{n}-size classical circuits for computing group multiplications and inversions.

As an  example, let us consider again the group $\Z_N^\times$. This group can be naturally modeled as a black-box group in the above sense: on one hand, for any $x,y \in \Z_N^\times$, $xy$ and $x^{-1}$ can be efficiently  computed using Euclid's algorithm \cite{brent_zimmerman10CompArithmetic}; on the other hand, decomposing $\Z_N^\times$ is as hard as factoring \cite{Shoup08_A_Computational_Introducttion_to_Number_Theory_and_Algebra}. Note, in addition, that a generating set of $\Z_N^\times$ can found by taking a linear number of samples\footnote{Sampling  $\Z_N^\times$ can be done by sampling  $\{0,\cdots,N-1\}$ uniformly and then rejecting samples that are not relatively prime to $N$; this takes $O(\log\log N)$ trials to succeed with high probability. A similar approach works, in general, for sampling generating-sets of uniquely-encoded finite Abelian groups \cite{Fontein2014_Probability_Generating_Lattice}.}
 of $\Z_N^\times$.

\section{The Hilbert space of an Abelian group}\label{sect:Hilbert space}

In this section we introduce Hilbert spaces associated with Abelian groups of the form \be\label{group_hilbert_space} G=\Z^a\times \DProd{N}{b}\times \mathbf{B}, \ee where $\mathbb{Z}_N$ is the additive group of integers modulo $N$,  $\Z$ is the additive group of integers and $B$ is a black-box group. Apart from the short discussion on Hilbert space associated with black-box groups, we follow the definitions given in our previous works \cite{VDNest_12_QFTs,BermejoVega_12_GKTheorem,BermejoLinVdN13_Infinite_Normalizers}. 

\subsection{Finite Abelian groups}

First we consider $\mathbb{Z}_N$. With this group, we associate an $N$-dimensional Hilbert space ${\cal H}_{N}$ having a basis $\{|x\rangle: x\in\mathbb{Z}_N\}$, henceforth called the standard basis of ${\cal H}_N$. A state in ${\cal H}_{N}$ is (as usual) a unit vector $|\psi\rangle = \sum \psi_x|x\rangle$ with $\sum |\psi_x|^2=1$, where $\psi_x\in\mathbb{C}$ and where the sums are over all $x\in\mathbb{Z}_N$.

Second, we consider a black box group $\mathbf{B}$. With such a group we associate a $|\mathbf{B}|$-dimensional Hilbert space ${\cal H_B}$ with standard basis states $|b\rangle$ where $b$ ranges over all elements of $\mathbf{B}$.

\subsection{The integers $\Z$}\label{sect_Hilbert_space_Z}

An analogous construction is considered for the group $\mathbb{Z}$, although an important distinction with the former cases is that $\mathbb{Z}$ is infinite. We consider the infinite-dimensional Hilbert space ${\cal H}_{\Z}= \ell_2(\mathbb{Z})$ with standard basis states $|z\rangle$ where $|z\rangle\in \mathbb{Z}$. A state in ${\cal H}$ has the form \be\label{psi_Z} |\psi\rangle = \sum \psi_x|x\rangle\ee with $\sum |\psi_x|^2=1$, where the infinite sum is over all $x\in\mathbb{Z}$. More generally, any sequence $(\psi_x: x\in\mathbb{Z})$ with $\sum |\psi_x|^2<\infty$ can be associated with a quantum state in ${\cal H}_{\Z}$ via normalization. Sequences whose sums are not finite give rise to ``unnormalizable'' states. These states do not belong to ${\cal H}_{\Z}$ and are hence unphysical; however it is often convenient to consider such states nonetheless. Examples are the ``plane wave states'' \begin{equation}\label{eq:Fourier basis state of Z} |p\rangle:=\sum_{z\in \Z} 	\overline{\euler^{2\uppi i zp}}|z\rangle \quad p\in [0, 1).
\end{equation}
We denote  $\T:=[0, 1)$ as a group with addition modulo 1, called the (one-dimentional) torus group. Even though the  $|p\rangle$ themselves do not belong to ${\cal H}$, every state in the Hilbert space $\mathcal{H}_\Z$ can be written as a linear combination of them: \be\label{psi_T}|\psi\rangle = \int_{\mathbb{T}} \mbox{d}p \ \phi(p) |p\rangle\ee for some complex function $\phi:\T\to \mathbb{C}$,  where d$p$ denotes the Haar measure on $\T$. Thus the states $|p\rangle$ form an alternate basis\footnote{As discussed in \cite{BermejoLinVdN13_Infinite_Normalizers},  the states $|p\rangle$ are unnormalizable and, strictly speaking, do not form a ``basis'' in the usual sense (they are not  elements of  ${\cal H}_{\Z}$) but they can nevertheless used as a basis for our practical purposes.  Rigorously speaking, the $|p\rangle$ ``states'' can be  understood as Dirac-delta measures or Schwartz-Bruhat tempered distributions \cite{Bruhat61_Schwatz-Bruhat-functions,Osborne75_Schwartz_Bruhat}.  Although we will not do it here, our results can be stated more formally through the theory of rigged Hilbert spaces \cite{delaMadrid05_roleofthe_riggedHilbert, Antoine98_QM_beyond_Hilber_space, Gadella02_unified_Dirac_formalism, gadella12_Riggings_LCA_Groups}, which is often applied to study observables with continuous spectra.} of ${\cal H}$ which is infinite, parametrized by a continuous set. We call this basis the \emph{\textbf{Fourier basis.}} The Fourier basis is orthonormal in the sense that \be \langle p|p'\rangle =  \delta(p-p'),\ee
where $\delta(\cdot)$ is the Dirac delta.

In the following, when dealing with the Hilbert space ${\cal H}_{\Z}$, we will use both the standard and Fourier basis. More precisely, in our computational model (cf. section \ref{sect:Normalizer circuits over blackbox groups}), there are \emph{\textbf{two ``standard'' bases}} of the Hilbert space ${\cal H}_{\Z}$, parametrized by the groups $\Z$ and $\T$ (we reserve the term standard basis for the first, the $\Z$ basis).  This is different from the finite space ${\cal H}_N$ where we only use the standard basis labeled by $\Z_N$. 

\subsection{Total Hilbert space}

In general we will consider consider groups of the form (\ref{group_hilbert_space}). The associated Hilbert space has the tensor product form
\begin{equation}\label{total_H} {\cal H}={\cal H}_{\Z}^{\otimes a}\otimes {\cal H}_{{N_1}}\otimes \dots\otimes {\cal H}_{{N_c}}\otimes {\cal H}_\mathbf{B}.
\end{equation}
In this work, we treat   $\mathcal{H}$ as the underlying Hilbert space of a  quantum computation with  $m:=a+c+1$ computational registers. The first $a$ registers ${\cal H}_{\Z}$ are infinite dimensional.  The latter $c+1$ registers are finite dimensional, and the last one, $\mathcal{H}_\mathbf{B}$, is associated with some Abelian black-box group $\mathbf{B}$.

\subsubsection*{The group-element bases of $\mathcal{H}$}

As afore-mentioned, a normalizer computation  works within multiple ``standard bases'' of $\mathcal{H}$, which can change during the computation. Input states and final measurements in any of these bases are allowed. This is a central feature of the computational model we describe in the next section.  

The allowed bases of a normalizer computation are what-we-call the \emph{group-element bases} of $\mathcal{H}$, which we now describe. A group element basis $\mathcal{B}_{G}$ of $\mathcal{H}$ is a basis of group element states $\{\ket{g},g\in G\}$ parametrized by an  Abelian group $G$ of the form
\begin{equation}\label{group_labels_basis} G= G_1\times\cdots \times G_{a+b}\times \mathbb{Z}_{N_1}\otimes \dots\mathbb{Z}_{N_c}\times \mathbf{B} \quad\textnormal{where } G_i \in\{\mathbb{Z}, \mathbb{T}\},\end{equation}
 \begin{equation}\label{basis} {\cal B}_{G}:=\left\lbrace |g\rangle:=|g(1)\rangle\otimes \dots|g(m)\rangle, \quad g=(g(1), \dots, g(m))\in G\right\rbrace.
\end{equation}
The notation $G_i = \T$ indicated that the state $\ket{g(i)}$ (locally) is a Fourier state of $\Z$ (\ref{eq:Fourier basis state of Z}). Note that the states $\ket{g}$ are product-states with respect to the tensor-product structure (\ref{total_H}) of $\mathcal{H}$. By construction, there are $2^a$ possible choices of groups  for the same Hilbert space, so that the number of group-element bases is $2^a$. As discussed in \cite{BermejoLinVdN13_Infinite_Normalizers} all groups (\ref{group_labels_basis}) are related by the Pontryagin duality\footnote{Groups (\ref{group_labels_basis}) form a family (and a category) of groups generated by replacing the factors $G_i$ of the original group $\Z^{a}\times \DProd{N}{c}\times \mathbf{B}$ with their character groups, and then identifying isomorphic groups \cite{BermejoLinVdN13_Infinite_Normalizers}. The so-called Pontryagin duality \cite{Morris77_Pontryagin_Duality_and_LCA_groups,Stroppel06_Locally_Compact_Groups,Dikranjan11_IntroTopologicalGroups,rudin62_Fourier_Analysis_on_groups,HofmannMorris06The_Structure_of_Compact_Groups} determines the number $2^a$ of groups in the family. The number of group-element bases $2^a$ is larger than $1$ iff \emph{infinite groups} are involved. Due to, there are not multiple ``standard-bases'' in finite-dimensional normalizer circuits   (nor in Clifford circuits) \cite{VDNest_12_QFTs,BermejoVega_12_GKTheorem}.}.

We now consider some examples. First, note that, by construction, the \emph{standard basis} of $\mathcal{H}$ is the group-element basis $\mathcal{B}_G$ with $G$ of the form $\Z^{a}\times \DProd{N}{c}\times \mathbf{B}$:
\be |x(1)\rangle\otimes \cdots\otimes |x(a)\rangle\otimes |y(1)\rangle\otimes\dots\otimes |y(c)\rangle\otimes |\mathbf{b}\rangle,\quad x(i)\in \Z,\: y(j)\in\Z_{N_j},\:\mathbf{b}\in\mathbf{B},\nonumber \ee
and, clearly, $(x,y,\mathbf{b})$ is an element of $\Z^{a}\times \DProd{N}{b}\times \mathbf{B}$. Alternatively, by choosing the basis of some (say the $a$th) $H_\Z$ register to be  the Fourier basis, we get
\be
\left(|x(1)\rangle\otimes \cdots\otimes |x(a-1)\rangle\otimes |p\rangle\right)\otimes |y(1)\rangle\otimes \dots\otimes|y(c)\rangle\otimes |\mathbf{b}\rangle, \nonumber\ee
where $\ket{p}$ is now a Fourier basis state (\ref{eq:Fourier basis state of Z}). Now $p\in \T$ and the  basis is parametrized by the elements of  $(\Z^{a-1}\times \T)\times \DProd{N}{c}\times \mathbf{B}$.

\section{Black box normalizer circuits}\label{sect:Normalizer circuits over blackbox groups}

In this section we define black-box normalizer circuits acting on Hilbert spaces of the form (\ref{total_H}). We will split the discussion in two parts: in section \ref{sect_def_BB_normalizer_finite} we discuss black box normalizer circuits for finite-dimensional spaces, i.e. spaces where  ${\cal H}_{\Z}$ does not occur in the decomposition. Restricting to the finite-dimensional case will allow us to introduce black box normalizer circuits without many technical complications.

 In a second step, in section \ref{sect_def_BB_normalizer_infinite} we allow for general spaces  of the form (\ref{total_H}). The definition of black box normalizer circuits will be technically more involved owing to the fact that in ${\cal H}_{\Z}$ both the standard basis and the Fourier basis need to be considered; this technical element is however not essential to understand the basic idea behind black box normalizer groups and the reader may skip section \ref{sect_def_BB_normalizer_infinite} in a first reading. However, the general definition of black box normalizer circuits is necessary to make a rigorous connection with e.g. Shor's factoring algorithm.

\subsection{Finite groups}\label{sect_def_BB_normalizer_finite}

Let $G$ be a finite Abelian group of the form  $G=\DProd{N}{c} \times \mathbf{B}$. 
The associated Hilbert space
\be
{\cal H}={\cal H}_{{N_1}}\otimes \dots\otimes {\cal H}_{{N_c}}\otimes {\cal H}_\mathbf{B}
\ee 
 has standard basis vectors $|g\rangle$ where $g$ ranges over all elements of $G$ (cf.\ section \ref{sect:Hilbert space}). A normalizer gate over $G$ is either an automorphism gate, quadratic phase gate or quantum Fourier transform, as defined next:

\

\noindent \textbf{Automorphism gates.} Recall that a group automorphism is an invertible map $\alpha:G\rightarrow G$ satisfying $\alpha(g+h)  = \alpha(g) + \alpha(h)$ for every $g, h\in G$. An automorphism gate over $G$ is an operation $U_\alpha:\ket{h}\rightarrow\ket{\alpha(h)}$ where $\alpha$ is an automorphism; we consider automorphism gates to be available as black-box quantum gates (oracles). Note that each $U_{\alpha}$ acts as a permutation on the standard basis and is hence a unitary operation.

\

\noindent \textbf{Quadratic phase gates.}  A function $\chi:G\rightarrow U(1)$  (from the group $G$ into the complex numbers of unit modulus) is called a character  if $\chi(g+h) =\chi(g)\chi(h)$ for every $g, h\in G$. A function $B:G\times G\to U(1)$ is said to be a bicharacter if it is a character in both arguments. A function $\xi:G\rightarrow U(1)$   is called \emph{quadratic} if 
\begin{equation}
\xi(g+h)=\xi(g)\xi(h)B(g,h),\quad \text{for every $g$, $h\in G$}
\end{equation}
for some bicharacter $B(g,h)$. A quadratic phase gate is any diagonal unitary operation acting on the standard basis as $D_{\xi}: \ket{h}\rightarrow \xi(h)\ket{h}$, where $\xi$ is a quadratic function of $G$. Similar to automorphism gates, we consider quadratic phase gates to be available as black-box quantum gates.

\

\noindent\textbf{Quantum Fourier transform.} In contrast to both automorphism gates and quadratic phase gates, which act on the entire system ${\cal H}$, we will only consider settings where quantum Fourier transforms never act on the black box portion ${\cal H}_\mathbf{B}$ of the total system. This a natural restriction; in particular, (to our knowledge) in all existing quantum  algorithms that do use QFTs, these QFTs act on systems of the form ${\cal H}_{{N_1}}\otimes \dots\otimes {\cal H}_{{N_c}}$. This is precisely the case we consider here. To define the QFT, consider the Hilbert space ${\cal H}_N$ with standard basis vectors $|x\rangle$ with $x\in \Z_N$. The QFT ${\cal F}_{N}$ over $\Z_N$ is a unitary operation which acts on $|\psi\rangle = \sum \psi_x|x\rangle$ in ${\cal H}_N$ as
\begin{equation}\label{eq:QFT over Z_N}
 {\cal F}_N|\psi\rangle= \sum_{y\in\mathbb{Z}_N} \hat\psi(y) |y\rangle \quad\mbox{ with } \hat\psi(y):= \frac{1}{\sqrt{N}} \sum_{x\in\mathbb{Z}_N} e^{2\pi i xy}\psi(x).
\end{equation}
In this work we will consider the quantum Fourier transform ${\cal F}_{N_i}$ to act on ${\cal H}_{N_i}$, for any system $i=1, \dots, a$.

\

\noindent{\textbf{A normalizer circuit over $G$}} \cite{VDNest_12_QFTs,BermejoVega_12_GKTheorem} is any unitary circuit composed of normalizer gates. As input state, we will consider any standard basis state $|g\rangle$ with $g\in G$. After all gates in the circuit are applied, a measurement in the standard basis is performed. We do not consider intermediate measurements. We also recall, as mentioned above, that all automorphism and quadratic phase gates are to be given as black-box operations.

\subsection{Infinite groups}\label{sect_def_BB_normalizer_infinite}

Here we extend the definition of black box normalizer circuits to general Hilbert spaces of the form (\ref{total_H}). This will be technically somewhat more involved than the finite-dimensional case. The main technical complication is not related the black-box portion ${\cal H}_\textbf{B}$ of the Hilbert space, but rather to the infinite-dimensional space ${\cal H}_{\Z}$ and, in particular, to the fact that we work with \emph{two different bases} in this space, labeled by \emph{two completely different groups} $\Z$ and $\T$. We have already introduce these bases in section \ref{sect:Hilbert space}.  The role of these bases in the normalizer circuit model has been discussed in detail in our previous work \cite{BermejoLinVdN13_Infinite_Normalizers}, but we give here a self-contained summarized account. 

\

\noindent\textbf{Designated bases $\mathcal{B}_G$.} Consider a Hilbert space of the form (\ref{total_H}). Fix any member $G$ of the family of $2^a$ groups defined  in (\ref{group_labels_basis}). Note that $G$ thus generally contains factors $\Z_{N_i}$, $\Z$,  $\T$ and $\mathbf{B}$. We consider the corresponding group-element basis ${\cal B}_G= \{|g\rangle: g\in G\}$ as defined in (\ref{basis}). We set ${\cal B}_G$ to be (what we call) the \emph{designated basis} of the computation. The group $G$ that labels  $\mathcal{B}_G$ choice of basis will determine what normalizer gates (read below). $\mathcal{B}_G$ will be the basis in which measurements are performed.

\

\noindent \textbf{Automorphism gates.} The definition of automorphism gates is similar to above, but now the action of these gates is defined relative to the \emph{designated basis}. That is, we consider a group automorphism $\alpha:G\rightarrow G$ and define the corresponding automorphism gate $U_{\alpha}$ by its action on the  basis ${\cal B}_G$, as follows: $U_\alpha:\ket{h}\rightarrow\ket{\alpha(h)}$. As above, automorphism gates are given as oracles. Furthermore, in the infinite case we impose that the group automorphism must be a \emph{continuous} map.

\

\noindent \textbf{Quadratic phase gates.}  The action of quadratic gates is also defined relative to the designated basis. A function $\chi:G\rightarrow U(1)$  is called a character if its continuous and if $\chi(g+h) =\chi(g)\chi(h)$ for every $g, h\in G$. A function $B:G\times G\to U(1)$ is said to be a bicharacter if it is a character in both arguments. A function $\xi:G\rightarrow U(1)$   is called \emph{quadratic} if it is continuous and if
\begin{equation}
\xi(g+h)=\xi(g)\xi(h)B(g,h),\quad \text{for every $g$, $h\in G$}
\end{equation}
for some bicharacter $B(g,h)$. A quadratic phase gate is any diagonal unitary operation acting on ${\cal B}_G$ as $D_{\xi}: \ket{h}\rightarrow \xi(h)\ket{h}$, where $\xi$ is a quadratic function of $G$. As above, quadratic gates are given as oracles. 

\

\noindent\textbf{Quantum Fourier transforms.} In contrast to both automorphism gates and quadratic phase gates, which leave the designated basis unchanged, the role of the quantum Fourier transform (QFT)  is \emph{precisely} to change the designated basis ${\cal B}_G$ (at a given time) into another group-element basis ${\cal B}_{G'}$. This transformation  follows certain rules \cite{BermejoLinVdN13_Infinite_Normalizers}, described next.

Roughly speaking, the QFT of $\mathcal{H}_\Z$ is a basis change between the standard and the Fourier basis (\ref{eq:Fourier basis state of Z}). Note that the QFT over $\Z_N$, which we introduced as a gate  (\ref{eq:QFT over Z_N}), could be defined also in this way (as a change of basis). However, the Fourier transform has now more exotic features than their finite-dimensional counterparts. First, strictly speaking, the QFT over $\mathcal{H}_\Z$ is not a \emph{quantum gate}: despite being a change of basis, it does not define a  unitary rotation. Second, there are actually \emph{two} inequivalent Fourier transforms of $\mathcal{H}_\Z$. These technicalities deserve further discussion.

\textbf{QFTs over $\mathcal{H}_\Z$ are not gates}. In the case of ${\cal H}_N$, both the standard and the Fourier basis have the same cardinality (since  $\Z_N$ is isomorphic to its character group) and such a change of basis can be actively performed by means of a unitary rotation, which defines the QFT over $\Z_N$. In the case of ${\cal H}_{\Z}$, the standard basis $\{|x\rangle: x\in \Z\}$ and Fourier basis $\{|p\rangle: p\in \T\}$ have different cardinality (recall section \ref{sect_Hilbert_space_Z}) and cannot be ``rotated'' into each other. Therefore, the QFT, while corresponding to a change of basis, will as such not be a unitary gate in the usual sense\footnote{Mathematically, this Fourier transform is a unitary transformation between two different functional spaces, $L^2(\Z)$ and $L^2(\T)$. The latter two define one quantum mechanical system with two possible bases (of Dirac-delta measures) labeled by $\Z$ and $\T$. In the finite dimensional case, the picture is simpler because the QFT is a unitary transformation of $L^2(\Z_N)$ onto itself. (These facts are consequences of the Plancherel theorem for locally compact Abelian groups \cite{rudin62_Fourier_Analysis_on_groups,HofmannMorris06The_Structure_of_Compact_Groups}.)}. Because of this asymmetry, there are \textbf{two QFTs}, defined as follows.
\begin{itemize}
\item \textbf{QFT over $\Z$}. If the standard basis ($\Z$ basis) is the designated basis of ${\cal H}_{\Z}$, states are represented as \be|\psi\rangle= \sum_{x\in\mathbb{Z}} \psi(x) |x\rangle.\ee 
Gates are defined according to this (integer) basis, which is also our measurement basis. When we say that the \emph{QFT over $\Z$} is applied to $|\psi\rangle$, we mean that the designated basis is changed from the standard basis to the Fourier basis. The state does not actually change (no gate is physically applied\footnote{We choose this notation to be consistent with our previously existing terminology \cite{VDNest_12_QFTs,BermejoVega_12_GKTheorem}.}), but the normalizer gates acting after the QFT  will be associated with $\T$ (\emph{not} $\Z$), and measurements will be performed in the $\T$ basis. We therefore ought to write the wavefunction of the state $\ket{\psi}$ in the Fourier basis:
    \begin{equation}\label{eq:Fourier transform over Z}
    |\psi\rangle= \int_{\mathbb{T}}\mbox{d}p\  \hat\psi(p) |p\rangle \quad\mbox{ with } \hat\psi(p):=  \sum_{x\in\mathbb{Z}} e^{2\pi i px} \psi(x).
    \end{equation}
\item \textbf{QFT over $\boldsymbol{\mathbb{T}}$.} In the opposite case, the designated basis of ${\cal H}_{\Z}$ is the Fourier basis, in which a state looks like
\be|\psi\rangle= \int_{\mathbb{T}}\mbox{d}p\  \psi(p) |p\rangle.
\ee 
When we say that the \emph{QFT over $\T$} is applied to $|\psi\rangle$, we mean that the designated basis is changed from the Fourier basis to the standard basis. Like in the previous case, we must re-express the state $|\psi\rangle$ in the new designated basis:
    \begin{equation}\label{eq:Fourier transform over T}
    |\psi\rangle= \sum_{x\in\mathbb{Z}} \hat\psi(x) |x\rangle \quad\mbox{ with } \hat\psi(x):=  \int_{\mathbb{T}}\mbox{d}p\ e^{2\pi i px} \psi(p).
    \end{equation}
\end{itemize}
Note that, by definition, the QFT over $\Z$ may only be applied if the designated basis is the standard basis and, conversely, the QFT over $\T$ may only be applied of the designated basis is the Fourier basis. \\

\noindent\textbf{Full and partial QFTs over $G$.} 
We last consider the total Hilbert space ${\cal H}$ with designated basis ${\cal B}_G$. We allow for the application of a QFT on any of the individual spaces ${\cal H}_{N_i}$ or ${\cal H}_{\Z}$ in the tensor product decomposition of ${\cal H}$ (we call this a \emph{partial QFT}). The designated basis is changed according to the rules described above on all subsystems ${\cal H}_{\Z}$ where a QFT is applied. The full \emph{QFT over $G$}  is the combination of all partial QFTs acting on the smaller  registers.

\subsection{The black-box normalizer circuit model} \label{sect:circuit model}

We are now ready to introduce normalizer circuits in precise terms. Roughly speaking, a \emph{\textbf{black-box normalizer circuit}} of size $T$ is a quantum circuit ${\cal C}=U_T\cdots U_1$ composed of $T$ \emph{normalizer gates} $U_i$, which we have introduced in the previous sections. More precisely, a   \emph{normalizer circuit over $G=\Z^{a+b}\times\DProd{N}{c}\times\mathbf{B}$} is a quantum circuit that acts on a Hilbert space ${\cal H}$  associated with the group $G$.
\begin{equation*}
\mathcal{H}=\mathcal{H}_\Z^{a}\otimes\mathcal{H}_\Z^{b}\otimes \left(\mathcal{H}_{N_1}\otimes \cdots \otimes\mathcal{H}_{N_c}\right)\otimes \mathcal{H}_\mathbf{B}.
\end{equation*}
In this decomposition, the parameters $a$, $b$, $c$, $N_i$ and the Abelian black-box group $\mathbf{B}$ can be chosen arbitrarily.  

To define a complete circuit model, we specify next the allowed inputs, gates and measurements of the computation.
\begin{itemize}
\item \textbf{\emph{Designated basis.}} In a normalizer computation there is no fixed ``standard basis'', in the usual sense of the word that comes from  the standard model of quantum circuits  \cite{nielsen_chuang}. Instead, there is a \emph{\textbf{designated basis}} ${\cal B}_{G_t}$ at every time step $t$ of the circuit, that is subject to change along the computation. ${\cal B}_{G_t}$ is restricted to be group-element basis, as in equation (\ref{basis}).

\item \emph{\textbf{Input states}}. The input states of a normalizer computation are  elements of some designated group basis $\mathcal{B}_{G_0}$ at time zero\footnote{The results in this paper would still hold if the allowed inputs are periodic states, coset-states and, in general, stabilizer states  \cite{VDNest_12_QFTs,BermejoVega_12_GKTheorem,BermejoLinVdN13_Infinite_Normalizers} \emph{if} the stabilizer group of the input state is given as an input.}. Without loss of generality, we assume that the registers  $\mathcal{H}_\Z^a$ and $\mathcal{H}_\Z^b$ are fed, respectively, with standard-basis  $\ket{n}$, $n\in \Z$ and Fourier-basis states  $\ket{p}$, $p\in \T$. In our notation, this is equivalent to choosing the basis $\mathcal{B}_{G_0}$ with $G_0=\Z^a \times \T^b \times \DProd{N}{c}\times \mathbf{B}$.

\item \textbf{\emph{Structure of the circuit}}:
\begin{itemize}
\item[$\circ$] At time $t=1$, the gate $U_1$ is applied, which is either an automorphism gate, quadratic phase gate over $G_0$ or a QFT. Recall that automorphism gates and quadratic phase gates are given as black boxes. The designated basis is changed from ${\cal B}_{G_0}$ to ${\cal B}_{G_1}$, for some group $G_1$ in the family (\ref{group_labels_basis}), which is only different from $G_0$ if a QFT is applied (recall the update rules from the previous section).

\item[$\circ$] At time $t=2$, the gate $U_1$ is applied, which is, again, either an automorphism gate, quadratic phase gate or a QFT over $G_1$. The designated basis is changed from ${\cal B}_{G_1}$ to ${\cal B}_{G_2}$, for some group $G_2$.

\item[$\circ$] The gates $U_3, \dots, U_t$ are considered similarly. We denote by ${\cal B}_{G_t}$ the designated basis after application of $U_t$ (for some group $G_t$ in the family (\ref{group_labels_basis})), for all $t=3, \dots, T$. Thus, after all gates have been applied, the designated basis is $G_T$.

\item[$\circ$] After the circuit, a measurement in the designated basis $G_T$ is performed.
\end{itemize}
\end{itemize}

\subsubsection*{Precision requirements}

In the model of quantum circuits above, input states and final measurements in the Fourier-basis $\{\ket{p}, p\in\T\}$ of $\mathcal{H}_\Z$  can never be implemented with perfect accuracy, a limitation that stems from the fact that the $\ket{p}$ states are \emph{unphysical}. This can be quickly seen in two ways: first, in the $\Z$ basis, these states are infinitely-spread plane-waves  $\ket{p}=\sum \overline{\euler^{2\pii zp}} \ket{z}$; second, in the $\T$
basis, they are infinitely-localized Dirac-delta pulses. Physically, preparing Fourier-basis states or measuring in this basis \emph{perfectly} would require infinite energy and lead to infinite precision issues in our computational model.

In the algorithms we study in this work (namely, the order-finding algorithm in theorem \ref{thm:Order Finding}), Fourier states over $\Z$ can be substituted with \emph{\textbf{realistic physical approximations}}. The degree of \emph{precision} used in the process of  Fourier state preparation is treated as a \emph{computational resource}. We model the precision used in a computation as follows.

Since  our goal is  to use the Fourier basis $\ket{p}$, $p\in \T$, to represent information in a computation, we require the ability to store and retrieve information in this  continuous-variable basis. Our assumption is that for any set $X$ with cardinality $d=|X|$, we can divide the continuous circle-group $\T$ spectrum into $d$ equally sized sectors of length $1/d$ and use them to represent the elements of $X$. More precisely, to each element of $X$ we assign a number in $\Z_d$. The element $x_i\in X$ with index $i\in\Z_d$ is then represented by any state of the subspace $V_{i,d}=\mathrm{span}\{\ket{\tfrac{i}{d}+\Delta} \textnormal{ with }|\Delta|  <\tfrac{1}{2d}\}$. We call the latter states  \emph{$d$-approximate Fourier states} and refer to $d$ as the \emph{precision level}  of the computation. We assume that these states can be prepared and distinguished to any desired precision $d$ in the following way:
\begin{itemize}
\item[1.] \textbf{State preparation assumption.} Inputs $\ket{\psi_i}$ with at least $\tfrac{2}{3}$ fidelity to some element of  $V_{i,d}$ can be prepared for any $i\in \Z_d$.
\item[2.] \textbf{Distinguishability assumption.}  The subspaces $V_{i,d}$ can be reliably distinguished.
\end{itemize}
Note that $d$ determines how much information is stored in the Fourier basis.

\begin{definition}[\textbf{Efficient use of precision}\footnote{Note that this definition is not necessary to define normalizer circuits but to discuss the physicality of the model. We point out that there might be better ways to model precision than ours (which may, e.g., lead to tighter bounds or more efficient algorithms), but our simple model is enough to derive our main results. We advance that, even if these precision requirements turned out to be high in practice, there exist efficient discretized \emph{qubit} implementations  of all the infinite-dimensional quantum algorithms that we study later in the paper (cf.\ theorem  \ref{thm:Shor normalizer}).}] A \emph{quantum algorithm} that uses $d$-approximate Fourier states to solve  a computational problem with input size $n$ is said to use an \emph{efficient} amount of precision if and  only  $	\log{d}$ is upper bounded by some polynomial of $n$. Analogously,  an algorithm that stores information  in the standard basis $\{\ket{m},\,m\in\Z\}$ is said to be \emph{efficient} if the states with $m$ larger than some threshold $\log{( m_\textrm{max})}\in O(\poly{n})$ do not play a role in the computation.
\end{definition}

\subsubsection*{Classical encodings for normalizer gates}

We finish this section discussing how the  normalizer gates of a given normalizer circuit are presented in a classical encoding. Since quantum Fourier transforms can be specified by mere bit strings storing the circuit locations where they act, we focus on automorphism and quadratic phase gates. We again let $G=\Z^a \times \T^b\times \DProd{N}{c}\times \mathbf{B}$ be the group that defines the designated basis $\mathcal{B}_G$ in a normalizer circuit and define $m=a+b+c$.

From now on, we restrict ourselves to studying group automorphisms and quadratic functions which are \emph{efficiently computable rational functions}. This limits the class of \emph{classical functions} that we consider.
\begin{enumerate}
\item \textbf{Rational.\footnote{We expect this assumption not to be essential, but it simplifies our proofs by allowing us to use exact arithmetic operations. Our stabilizer formalism in \cite{BermejoLinVdN13_Infinite_Normalizers}  can still be applied if the functions $\alpha$, $\xi$ are not rational, and we expect some version of the  simulation result (theorem \ref{thm:Simulation}) to hold even when trascendental numbers are involved (taking carefully into account precision errors). It is an good question to explore whether an exact simulation result may hold for algebraic numbers \cite{Cohen:1995Course_Computational_Algebraic_Number_Theory}.}} An  automorphism (or an arbitrary function) $\alpha:G\rightarrow G$ is rational if it returns rational outputs for all rational inputs. A quadratic function $\xi$ is rational if it can be written in the form $\xi(g)=\exp\left(2\pii\, q(g) \right)$ where $q$ is a rational function from $G$ into $\R$ modulo $2\Z$.
\item\textbf{Efficiently computable.}  $\alpha$ and  $q$ can be computed by  polynomial-time uniform family of classical circuits $\{\alpha_i\}$, $\{q_i\}$. All   $\alpha_i$, $q_i$ are $\ppoly{m,i}$ size classical circuits that  query the \emph{black-box group oracle} at most $\ppoly{m,i}$ times: their inputs are strings of rational numbers whose numerators and denominators are represented by $i$ classical bits (their size is $O(2^i)$). For any rational element $g\in G$ that can be represented with so many bits (if $G$ contains factors of the form $\T$ these are approximated by fractions), it holds that $\alpha_i(g)=\alpha(g)$
 and $q_i(g)=q(g)$. 
 
In certain cases (see section \ref{sect:Quantum Algorithms}) we will consider groups like $\Z_N^\times$  which, strictly speaking, are not black-box groups (because polynomial time algorithms for group multiplication for them are available  and there is no need to introduce oracles). In those cases, the queries to the black-box group oracle (in the above model) are substitute by some efficient subroutine.
\end{enumerate}
We add a third restriction to the above.
\begin{itemize}
\item[3.] \textbf{Precision bound.} For any  $q$ or $\alpha$ that acts on an \emph{infinite} group  a bound $n_\textrm{out}$ is given so that for every $i$, the number of bits needed to specify the numerators and denominators in the output of $q_i$ or $\alpha_i$ exactly is at most $i+n_{\mathrm{out}}$. The bound $n_\textrm{out}$ is independent of $i$ and indicates how much the input of each function may grow or shrink along the computation of the output\footnote{For infinite groups there is no fundamental limit to how much the output of $\alpha$ or $q$ may grow/shrink with respect to the input (this follows from the normal forms in  \cite{BermejoLinVdN13_Infinite_Normalizers}). The number $n_\textrm{out}$ parametrizes the precision needed to compute the function. Similarly to 2., this assumption might me weakened if a treatment for precision errors is incorporated in the model.}. This bound is used to correctly store the output of  maps $\alpha:\Z^a\rightarrow \Z^a$, $\alpha':\Z^a\rightarrow \T^a$ and to detect whether the output  of a function $\alpha'':\T^b \rightarrow \T^b$ might get truncated modulo 1.
\end{itemize}
The allowed automorphism gates $U_\alpha$ and quadratic phase gates $D_\xi$ are those associated with efficiently computable rational functions $\alpha$, $\xi$. We  ask these  unitaries to be efficiently implementable as well\footnote{Recall that, in finite dimensions, the gate cost of implementing a classical function $\alpha$ as a quantum gate is at most the classical cost \cite{nielsen_chuang} and that computing $q$ efficiently is enough to implement $\xi$ using  phase kick-back tricks \cite{Kaye05_optimized_Quantum_Elliptic_Curve}. We expect these results to extend to  infinite dimensional systems of the form $\mathcal{H}_\Z$.}, by \ppoly{m,i,n_\textrm{out}}-size quantum circuits comprising at most \ppoly{m,i,n_\textrm{out}} quantum queries of the group oracle. The variable $i$ denotes the bit size used to store the labels $g$ of the inputs $\ket{g}$ and bounds the precision level $d$ of the normalizer computation, which we set to fulfill $\log d  \in O(i+n_\textrm{out})$. The complexity of a normalizer gate is measured by the number of gates and (quantum) oracle queries needed to implement them.

In the next section \ref{sect:Quantum Algorithms}, we will see particular examples of efficiently computable normalizer gates. We will repeatedly make use of automorphism gates of the form \begin{equation}\nonumber
U_\alpha \ket{k_1,\ldots,k_m,x}\longrightarrow \ket{k_1,\ldots,k_m,b_1^{k_1}\cdots  b_m^{k_m}x}
\end{equation} where $k_i$ are integers and $b_j$, $x$ are elements of some black-box group $\mathbf{B}$. These gates are allowed in our model, since there exist well-known efficient classical circuits for modular exponentiation given access to a group multiplication oracle \cite{brent_zimmerman10CompArithmetic}. In this case, a precision bound can be easily computed: since the infinite elements $k_i$ do not change in size and all the elements of $\mathbf{B}$ are specified with strings of the same size, the  output of $\alpha$  can be represented with as many bits as the input and we can simply take $n_\textrm{out}= 0$ (no extra bits are needed).

Many examples of efficiently computable normalizer gates were given in \cite{VDNest_12_QFTs,BermejoVega_12_GKTheorem}, for decomposed finite group $\DProd{N}{c}$. It was also shown in  \cite{VDNest_12_QFTs} that all normalizer gates over such groups can be efficiently implemented.

\section{Quantum algorithms}\label{sect:Quantum Algorithms}

\subsection{The discrete logarithm problem over $\Z_p^\times$}\label{sect:Discrete Log}

In this section we consider the discrete-logarithm problem studied by Shor \cite{Shor}. For any prime number $p$, let $\Z_p^\times$  be the multiplicative group of non-zero integers modulo $p$. An instance of the discrete-log  problem over $\Z_p^\times$ is determined by two  elements $a$, $b\in \Z_p^\times$, such that $a$ generates the group $\Z_p^\times$. Our task is to find the smallest non-negative integer $s$ that is a solution to the equation $a^s=b \bmod{p}$; the number is called the discrete logarithm $s=\log_a b$.

We now review Shor's algorithm \cite{Shor,childs_vandam_10_qu_algorithms_algebraic_problems} for this problem and prove our first result.
\begin{theorem}
[\textbf{Discrete logarithm}]\label{thm:Discrete Log} Shor's quantum algorithm for the discrete logarithm problem over $\Z_p^\times$ is a black-box normalizer circuit over the group $\Z_{p-1}^2\times \Z_p^\times$.
\end{theorem}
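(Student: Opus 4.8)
The plan is to recall Shor's discrete-logarithm algorithm step by step and then check, gate by gate, that each operation is a normalizer gate over the group $G = \Z_{p-1}^2 \times \Z_p^\times$. Recall the algorithm: we work on the Hilbert space $\mathcal{H}_{p-1} \otimes \mathcal{H}_{p-1} \otimes \mathcal{H}_{\Z_p^\times}$, with the first two registers holding exponents modulo $p-1$ and the last register holding an element of $\Z_p^\times$. The computation begins in the standard basis state $\ket{0}\otimes\ket{0}\otimes\ket{1}$; one applies a QFT $\mathcal{F}_{p-1}$ to each of the first two registers (creating a uniform superposition over $\Z_{p-1}^2$), then the modular-exponentiation-type map $\ket{x}\ket{y}\ket{z} \mapsto \ket{x}\ket{y}\ket{a^x b^y z}$, then a second QFT $\mathcal{F}_{p-1}$ on each of the first two registers, and finally measures all registers in the standard basis. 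Classical post-processing recovers $s = \log_a b$ from the measurement outcome.

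First I would observe that the two QFT layers are, by definition, normalizer gates (partial QFTs over the $\Z_{p-1}$ factors), so nothing needs to be checked there beyond noting they act only on the non-black-box registers, as the model requires. The only substantive point is the middle step: I must show that $V:\ket{x}\ket{y}\ket{z} \mapsto \ket{x}\ket{y}\ket{a^x b^y z}$ is a \emph{group automorphism} of $G = \Z_{p-1}^2 \times \Z_p^\times$ and that it is an efficiently computable rational function in the sense of the circuit model. For the automorphism property: since $a$ generates $\Z_p^\times$, which is cyclic of order $p-1$, and $b = a^s$ for the unknown $s$, the map $V$ is well-defined on $\Z_{p-1}^2 \times \Z_p^\times$ (the exponents $x,y$ only enter through $a^x, b^y$, which depend only on $x,y \bmod{p-1}$ since $a$ has order $p-1$). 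One checks bijectivity directly — the inverse is $\ket{x}\ket{y}\ket{z}\mapsto\ket{x}\ket{y}\ket{a^{-x}b^{-y}z}$ — and the homomorphism property $V(g+h) = V(g)+V(h)$ follows from $a^{x_1+x_2}b^{y_1+y_2}(z_1 z_2) = (a^{x_1}b^{y_1}z_1)(a^{x_2}b^{y_2}z_2)$. Efficient computability follows because modular exponentiation modulo $p$ and modular multiplication modulo $p$ admit standard polynomial-time classical (hence quantum, reversibly) circuits, matching the ``efficiently computable rational function'' requirement; here $n_\textrm{out}=0$ since no register changes size. Hence $U_\alpha := V$ is an allowed automorphism gate.

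The remaining items are bookkeeping: the input state $\ket{0,0,1}$ is a standard-basis (group-element) state of $G_0 = G$, and the final measurement is in the standard basis, both as required by the model; the classical post-processing (solving a linear congruence mod $p-1$ from the measured Fourier peaks) is exactly the allowed classical post-processing. Assembling these observations, Shor's discrete-log circuit is the composition $\mathcal{F}_{p-1}^{(2)} \cdot U_\alpha \cdot \mathcal{F}_{p-1}^{(1)}$ acting on $\mathcal{H}$ associated with $\Z_{p-1}^2 \times \Z_p^\times$ with $\Z_p^\times$ playing the role of the black-box group $\mathbf{B}$, and is therefore a black-box normalizer circuit over that group. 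The main obstacle — really the only non-routine step — is verifying that $V$ is genuinely an automorphism of $G$ rather than merely an injective map on a subset of basis states; the subtlety is that $V$ mixes the black-box register with the exponent registers, so one must be careful that it respects the \emph{full} direct-product group structure and does not, for instance, depend on the secret $s$ in a way that would make it ill-defined. Once one sees that the action only uses publicly known data ($a$, $b$, and the group oracle for $\Z_p^\times$) and is a bona fide automorphism, the rest is immediate.
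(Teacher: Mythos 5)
Your proposal is correct and follows essentially the same route as the paper: recall Shor's discrete-log algorithm, identify the two QFT layers as partial quantum Fourier transforms over the $\Z_{p-1}$ factors, and verify that the map $\ket{x}\ket{y}\ket{z}\mapsto\ket{x}\ket{y}\ket{a^xb^yz}$ is a group automorphism of $\Z_{p-1}^2\times\Z_p^\times$ with $\Z_p^\times$ treated as the black-box group. Your write-up is in fact somewhat more explicit than the paper's on the well-definedness and homomorphism checks for the automorphism gate.
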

Theorem \ref{thm:Discrete Log} shows that black box normalizer circuits over \emph{finite} Abelian groups can efficiently solve a problem for which no efficient classical algorithm is known. In addition, it tells us that black-box normalizer circuits can render widespread public-key cryptosystems vulnerable: namely, they break the Diffie-Helman key-exchange protocol \cite{DiffieHellman}, whose security  relies in the assumed classical intractability of the discrete-log problem.
\begin{proof}
Let us first recall the main steps in Shor's discrete log algorithm.

\begin{algorithm}[Shor's algorithm for the discrete logarithm]\label{alg:Discrete Log}

\begin{alg_in} Positive integers $a$, $b$, where $\Z_p^\times = \langle a \rangle $.

\end{alg_in}
\begin{alg_out} The least nonnegative integer $s$ such that $a^s\equiv b \pmod{p}$.
\end{alg_out}

We will use three registers indexed by integers, the first two modulo $p-1$ and the last modulo $p$. The first two registers will correspond to the additive group $\Z_{p-1}$, while the third register will correspond to the multiplicative group $\Z_p^\times$. Two important ingredients of the algorithm will be the unitary gates $U_a:\ket{s}\rightarrow\ket{sa}$ and $U_b:\ket{s}\rightarrow\ket{sb}$.

\begin{enumerate}

\item \textbf{Initialization:} Start in the state $\ket{0}\ket{0}\ket{1}$.

\item Create the superposition state $\frac{1}{p-1} \sum_{x,y=0}^{p-1} \ket{x} \ket{y} \ket{1} $, by applying the standard quantum Fourier transform on the first two registers.

\item Apply the unitary $U$ defined by $U\ket{x}\ket{y} \ket{z} = \ket{x} \ket{y} \ket{za^xb^y}$, to obtain the state
\[ 
\frac{1}{p-1}\sum_{x,y=0}^{p-1} \ket{x} \ket{y} \ket{a^xb^y}
\]
This is equivalent to applying the controlled-$U_a^x$ gate between the first and third registers, and the controlled-$U_b^y$ between the second and third registers.

\item Measure and discard the third register. This step generates a so-called coset  state
\[ 
\frac{1}{\sqrt{p-1}}\sum_{k=0}^{p-1} \ket{\gamma  +ks,-k},
\]
where $\gamma$ is some uniformly random element of $\Z_{p-1}$ and $s$ is the discrete logarithm.

\item Apply the quantum Fourier transform over $\Z_{p-1}$ to the first two registers, to obtain
\[ 
\frac{1}{\sqrt{p-1}}\sum_{k'=0}^{p-1} \euler^{2\pii \frac{k'\gamma}{p-1}}\ket{k',k's},
\]
\item Measure the system in the standard basis to obtain a pair of the form $(k',k's)\bmod{p}$ uniformly at random.

\item Classical post-processing. By repeating the above process $n$ times, one can extract the discrete logarithm $s$ from these pairs with exponentially high probability (at least $1- 2^{-n}$), in classical polynomial time.
\end{enumerate}
\end{algorithm}
Note that the Hilbert space of the third register precisely corresponds to  $\mathcal{H}_{\mathbf{B}}$ if we choose the black-box group to be $\mathbf{B}=\Z_p^\times$. It is now easy to realize that Shor's algorithm for discrete log is a normalizer circuit over $\Z_{p-1}\times\Z_{p-1}\times\Z_{p}^\times$: steps 2 and 4 correspond to applying partial QFTs over $\Z_{p-1}$, and the gate $U$ applied in state 3 is a group automorphism over $\Z_{p-1}\times\Z_{p-1}\times\Z_{p}^\times$. 
\end{proof}
We stress that, in the proof above, there is no known efficient classical algorithm for solving the group decomposition problem for the group $\Z_p^\times$  (as we define it in section \ref{sect:Group Decomposition}): although, by assumption, we know that $\Z_p^\times=\langle a \rangle \cong \Z_{p-1}$, this information does not allow us to convert elements from one representation to the other, since this requires solving the discrete-logarithm problem itself. In other words, we are unable to compute classically the \emph{group isomorphism} $\Z_p^\times \cong \Z_{p-1}$. In our version of the group decomposition problem, we require the ability to \emph{compute} this group isomorphism. For this reason, we treat the group  $\Z_p^\times$ as a \emph{black-box group}.

\subsection{Shor's factoring algorithm	}\label{sect:Factoring}

In this section we will show  that normalizer circuits can efficiently compute the order of elements of (suitably encoded) Abelian groups. Specifically, we show how to efficiently solve the {order finding problem}  for every  (finite) Abelian black-box  group $\mathbf{B}$  \cite{BabaiSzmeredi_Complexity_MatrixGroup_Problems_I} with normalizer circuits. Due to the well-known classical reduction of the factoring problem to the problem of computing orders of elements of the group $\Z_N^\times$, our result implies that black-box normalizer circuits can efficiently factorize large composite numbers, and thus break the widely used RSA public-key cryptosystem \cite{RSA}.

We briefly introduce the \textbf{order finding problem} over a  black-box group $\mathbf{B}$, that we always assume to be finite and Abelian. In addition, we assume that the elements of the black-box group can be uniquely encoded with  $n$-bit strings, for some known $n$. The task we consider is the following: given an  element $a$ of $\mathbf{B}$, we want to compute the order $|a|$ of $a$ (the  smallest positive integer $r$ with the property\footnote{Since $\mathbf{B}$ is finite, the order $|a|$ is  a well-defined number.} $a^r=1$).  Our next theorem states that this version of the order finding problem can be efficiently solved by a quantum computer based on normalizer circuits.
\begin{theorem}[\textbf{Order finding over $\mathbf{B}$}]\label{thm:Order Finding}  Let $\mathbf{B}$ be a finite Abelian black-box group with an $n$-qubit encoding and  $\mathcal{H}_\mathbf{B}$ be the Hilbert space associated with this group. Let $V_a$ be the unitary that performs the group multiplication operation on $\mathcal{H}_\mathbf{B}$: $V_a\ket{x}=\ket{ax}$. We denote by $c\mbox{\,-}V_a$  the unitary that performs $V_a$ on $\mathcal{H}_\mathbf{B}$ controlled on the value of an ancillary register $\mathcal{H}_\Z $:
 \begin{equation}\notag
 \ket{m,x}\quad\xrightarrow{\quad c-U_{a}\quad}\quad\ket{m,a^mx},\qquad\textnormal{for any $m$ in } \Z.
 \end{equation}
Assume that we can query an oracle that implements $c\mbox{\,-}V_a$ in one time step for any  $a\in \mathbf{B}$. Then, there exists a hybrid version of Shor's order-finding algorithm, which can compute the order $|a|$ of any $a\in\mathbf{B}$ efficiently, using  normalizer circuits over the group $\Z\times \mathbf{B}$ and classical post-processing. The algorithm runs in polynomial-time, uses an efficient amount of precision and succeeds with high probability.
\end{theorem}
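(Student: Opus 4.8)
The plan is to implement a ``hybrid'' infinite--finite version of Shor's order-finding algorithm \cite{Shor,childs_vandam_10_qu_algorithms_algebraic_problems} in which the exponent register is the infinite group $\Z$ instead of a large cyclic group $\Z_M$, and then to verify step by step that every operation is a normalizer gate over $G=\Z\times\mathbf{B}$ (or an input/measurement allowed by the model of Section~\ref{sect:circuit model}). The circuit is: (i) prepare the Fourier-basis input $\ket{p=0}\otimes\ket{1}$ on $\mathcal{H}_\Z\otimes\mathcal{H}_\mathbf{B}$; (ii) apply the QFT over $\T$ on the first register, turning its designated basis into the $\Z$ basis, so the state is written $\sum_{z\in\Z}\ket{z}\otimes\ket{1}$; (iii) apply the automorphism gate $U_\alpha=c\mbox{-}V_a$, i.e. $\ket{m,x}\mapsto\ket{m,a^m x}$, producing $\sum_{z\in\Z}\ket{z}\otimes\ket{a^z}$; (iv) apply the QFT over $\Z$ on the first register; (v) measure in the resulting designated basis (the $\T$ basis on the first register, the standard basis on $\mathcal{H}_\mathbf{B}$) and discard the $\mathbf{B}$-outcome; (vi) classically recover $r=|a|$ from the measured approximations of multiples of $1/r$ by continued fractions, repeating $O(1)$ times. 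Because $\mathcal{H}_\mathbf{B}$ is not measured until the end, the ``no intermediate measurements'' restriction is respected --- the usual mid-circuit measurement of the second register in textbook presentations is simply deferred.

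The conceptual step to check carefully is that (iii) is a legal black-box normalizer automorphism gate over $\Z\times\mathbf{B}$. The map $\alpha(m,x)=(m,a^m x)$ is a bijection with inverse $(m,x)\mapsto(m,a^{-m}x)$, and it is a group homomorphism \emph{precisely because $\mathbf{B}$ is Abelian}: $\alpha\bigl((m_1,x_1)+(m_2,x_2)\bigr)=(m_1+m_2,\,a^{m_1+m_2}x_1x_2)=\alpha(m_1,x_1)+\alpha(m_2,x_2)$. It is trivially continuous on the discrete group $\Z\times\mathbf{B}$, and it is an efficiently computable rational automorphism with precision bound $n_\textrm{out}=0$: modular exponentiation in $\mathbf{B}$ costs $O(\poly{n})$ group-oracle queries and leaves the size of the integer $m$ unchanged. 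Since the hypothesis even hands us $c\mbox{-}V_a$ as a one-step oracle, $U_\alpha$ is an allowed normalizer gate, and steps (ii), (iv) are partial QFTs over $\T$ and $\Z$ exactly as defined in Section~\ref{sect_def_BB_normalizer_infinite} and \cite{BermejoLinVdN13_Infinite_Normalizers}. A short computation then identifies the pre-measurement state: grouping $z$ by the value of the $r$-periodic quantity $a^z$ and using that $\sum_{k\in\Z}e^{2\pi i pkr}$ is a Dirac comb supported on $\{l/r:\,l=0,\dots,r-1\}$, one obtains a state proportional to $\sum_{l=0}^{r-1}\ket{l/r}\bigl(\sum_{j=0}^{r-1}e^{2\pi i lj/r}\ket{a^j}\bigr)$, so (v) returns a uniformly random $l\in\{0,\dots,r-1\}$ encoded as $l/r\in\T$.

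The main obstacle is turning this into a \emph{physical} and \emph{efficient} procedure, since both the input $\ket{p=0}$ and the ideal $\T$-measurement are unnormalizable. Here I would invoke the $d$-approximate Fourier-state framework of Section~\ref{sect:circuit model}: replace $\ket{p=0}$ by some state of $V_{0,d}$ and replace the sharp $\T$-measurement by the distinguishing measurement of the sectors $\{V_{i,d}\}$. Picking $\log d\in O(n)$ (for instance $d=4^{\,n+1}\ge 2\,|\mathbf{B}|^2\ge 2r^2$, using $|\mathbf{B}|\le 2^n$) makes: (a) the state-preparation and distinguishability assumptions directly applicable; (b) the effective truncation of $\sum_{z\in\Z}\ket z$ to labels $m<d$ harmless, so only standard-basis labels of size $O(\poly{n})$ ever matter; (c) the measurement resolution, of order $1/d\le 1/(2r^2)$, fine enough that each outcome lies within $1/(2r^2)$ of some $l/r$ and the continued-fraction post-processing recovers $r/\gcd(l,r)$; after $O(1)$ repetitions and an $\mathrm{lcm}$/$\gcd$ step this yields $r$ itself with high probability, exactly as in the standard analysis of Shor's algorithm. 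Hence the whole computation uses an efficient amount of precision and runs in polynomial time (counting normalizer gates, oracle queries and classical post-processing). The precise sense in which Shor's original finite-register algorithm is the discretization of this one, with $\Z$ effectively played by $\Z_d$, is then the content of Theorem~\ref{thm:Shor normalizer}.
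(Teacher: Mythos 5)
Your proposal is correct and follows essentially the same route as the paper: the same hybrid circuit on $\Z\times\mathbf{B}$ (approximate Fourier-basis input, QFT over $\T$, the modular-exponentiation map recognized as the automorphism $(m,x)\mapsto(m,a^mx)$, QFT over $\Z$, measurement, continued fractions), with the same $d$-approximate Fourier-state treatment of precision at $\log d\in O(\poly{n})$. The only differences are cosmetic --- you defer the $\mathcal{H}_\mathbf{B}$ measurement to the end instead of measuring it mid-circuit, and you cite the standard Shor analysis for the constant success probability where the paper explicitly works out the Dirichlet-kernel bound $\mathrm{Pr}\geq 4/\uppi^2$ --- neither of which constitutes a gap.
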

In theorem \ref{thm:Order Finding},  by ``efficient amount of precision'' we mean that instead of preparing Fourier basis states of $\mathcal{H}_{\Z}$ or measuring on this (unphysical) basis, it is enough to use realistic physical approximations of these states (cf. section \ref{sect:Normalizer circuits over blackbox groups}).
\begin{proof}
For simplicity, we assume that a generating set of $\mathbf{B}$ with $O(n)$ elements is given (otherwise we could generate one efficiently probabilistically by sampling elements of $\mathbf{B}$). 

We divide the proof into two steps. In the first part, we give an infinite-precision  quantum algorithm to randomly sample elements from the set $\mathrm{Out}_a=\{\frac{k}{|a|}: k\in \Z\}$ that uses normalizer circuits over the group $\Z\times\mathbf{B}$ in polynomially many steps. In this first algorithm, we assume that Fourier basis states of $\mathcal{H}_\Z$ can be prepared perfectly and that there are no physical limits in measurement precision; the outcomes $k/|a|$ will be stored with floating point arithmetic and with finite precision. The algorithm allows one to extract the period $|a|$ efficiently by sampling fractions $k/|a|$ (quantumly) and then using a continued fraction expansion (classically).

 In the second part of the proof, we will remove the infinite precision assumption.

Our first algorithm is essentially a variation of Shor's algorithm for order finding \cite{Shor} with one key modification: whereas Shor's algorithm uses a large $n$-qubit register $\mathcal{H}_2^n$ to estimate the eigenvalues of the unitary $V_a$, we will replace this multiqubit register with a single \emph{infinite} dimensional Hilbert space $\mathcal{H}_{\Z}$. The algorithm is \emph{hybrid} in the sense that it involves both continuous- and discrete-variable registers. The key feature of this algorithm is that, at every time step, the implemented gates are \emph{normalizer gates}, associated with the groups $\Z\times \Z_N^\times$ and  $\T\times \Z_N^\times$ (which are, themselves, related via the partial Fourier transforms $\mathcal{F}_\Z$ and $\mathcal{F}_\T$). The algorithm succeeds with constant probability.
\begin{algorithm}[\textbf{Hybrid order finding with infinite precision}]\label{alg:Order Finding infinite precision}
 $\,$
 \begin{alg_in}
A black box (finite abelian) group $\mathbf{B}$, and an element $a \in \mathbf{B}$.
 \end{alg_in}
 \begin{alg_out}
 The order $s$ of $a$ in $\mathbf{B}$, i.e. the least positive integer $s$ such that $a^s = 1$.
 \end{alg_out}
 We will use multiplicative notation for the black box group $\mathbf{B}$, and additive notation for all other subgroups.
\begin{enumerate}
\item \textbf{Initialization:} Initialize $\mathcal{H}_\Z$ on the Fourier basis state $\ket{0}$ with $0\in \T$, and $\mathcal{H}_{\mathbf{B}}$ on the state $\ket{1}$, with $1\in \mathbf{B}$. In our formalism, we will regard $\ket{0,1}$ as a standard-basis state of the basis labeled by $\T\times\mathbf{B}$.
\item  Apply the Fourier transform $\mathcal{F_\T}$ to the register $\mathcal{H}_\Z$.  This changes the designated basis of this register to be  the one labeled by the group $\Z$. The state $\ket{0}$ in the new basis is an infinitely-spread comb of the form $\sum_{m\in\Z}\ket{m}$.
\item Let the oracle $V_a$ act jointly on $\mathcal{H_\Z}\times \mathcal{H}_\mathbf{B}$; then the state is mapped in the following manner:
\begin{equation}
\sum_{m\in\Z}\ket{m}\ket{1}\quad\xrightarrow{\quad c\mbox{\,-}V_a\quad }\quad\sum_{m\in \Z} \ket{m, a^m}.
\end{equation}
Note that, in our formalism,  the oracle $c\mbox{\,-}V_a$  can be regarded as an automorphism gate $U_\alpha$. Indeed, the gate implements a classical invertible function on the group $\alpha(m,x)=(m,a^mx)$. The function is, in addition, a continuous\footnote{This is vacuously true: since the group $G:=\Z\times\mathbf{B}$ is discrete, \emph{any} functtion $f:G\rightarrow G$ is continuous.} group automorphism, since
\begin{align}	\notag
\alpha\left((m,x)(n,y)\right)=\alpha(m+n,xy)&=(m+n,(a^{m+n})(xy))\\&=(m+n,(a^{m}x)(a^{n}y))=(m,a^{m}x)(n,a^{n}y)\label{eq:ModExp is Automorphism}\\
&=\alpha(m,x)\alpha(n,y).\notag
\end{align}
\item Measure and discard the register $\mathcal{H}_{\mathbf{B}}$. Say we obtain $a^s$ as the measurement outcome. Note that the function $a^m$ is periodic with period $r=|a|$, the order of the element. Due to periodicity, the state after measuring $a^s$ will be of the form
\begin{equation}\label{eq:Periodic State}
\left(\sum_{j\in \Z} \ket{s+jr}\right) \ket{a^s}.
\end{equation}	
After dicarding $\mathcal{H}_{\mathbf{B}}$ we end up in a periodic state $\sum \ket{s+jr}$ which encodes $r=|a|$.
\item  Apply  the Fourier transform $\mathcal{F_\Z}$ to the  register $\mathcal{H}_\Z$.  We work again in the Fourier basis of $\mathcal{H}_\Z$, which is labelled by the circle group $\T$. The periodic state $\sum\ket{s+jr}$ in the dual $\T$ basis reads \cite{oppenheim_Signals_and_Systems}
\begin{equation}\label{eq:Periodic state after Fourier Transform}
\sum_{k=0}^{r-1} \euler^{2\pii \frac{sk}{r}} \ket{\tfrac{k}{r}}  
\end{equation}
\item Measure $\mathcal{H}_{\Z}$ in the Fourier basis (the basis labeled by $\T$). Since we that the initial state of the computation is as close to $\ket{0}$ as we wish, the wavefunction of the final state  (\ref{eq:Periodic state after Fourier Transform}) is \emph{sharply peaked} around values $p\in\T$ of the form $k/r$. As a result, a high resolution measurement will let us sample these numbers (within some floating-point precision window $\Delta$) nearly uniformly at random.
\item \textbf{Classical postprocessing:} Repeat steps 1-7 a few times and use a (classical) continued-fraction expansion algorithm \cite{nielsen_chuang,KLM_QC_07}  to extract the order $r$ from the randomly sampled multiples $\{k_i/r\}_i$. This can be done, for instance, with an  algorithm from \cite{Knill95onshors} that obtains $r$ with \emph{constant} probability after sampling two numbers $\tfrac{k_1}{r}$, $\tfrac{k_2}{r}$, if the measurement resolution is high enough:  $\Delta \leq 1/2r^2$ is enough for our purposes.
\end{enumerate}\vspace*{-10pt}
\end{algorithm}
Manifestly,  there is a strong similarity between algorithm \ref{alg:Order Finding infinite precision} and Shor's factoring algorithm:  the quantum Fourier transforms $\mathcal{F}_\T$ in our algorithm $\mathcal{F}_\Z$ play the role of the discrete Fourier transorm $\mathcal{F}_{2^n}$, and $c\mbox{\,-}V_a$  acts as the modular exponentation gate \cite{Shor}.  In fact, one can regard algorithm \ref{alg:Order Finding infinite precision} as a ``hybrid'' version of Shor's algorithm combining both  continuous and discrete variable registers. The remarkable feature of this version of Shor's algorithm is that the quantum part of the algorithm 1-6 is a normalizer computation.

Algorithm \ref{alg:Order Finding infinite precision} is efficient if we just look at the number of gates it uses. However, the algorithm is \emph{inefficient} in that it uses infinitely-spread Fourier states $\ket{p}=\sum_{m\in\Z}\euler^{-2\pii pm}\ket{m}$ (which are unphysical and cannot be prepared with finite computational resources) and arbitrarily precise measurements. We finish the proof of theorem \ref{thm:Order Finding} by giving an improved algorithm that does not rely on unphysical requirements.

\newpage

\begin{algorithm}[\textbf{Hybrid order finding with finite precision}] \label{alg:Order Finding Finite precision}$\,$
\begin{enumerate}
 \item[1-2] \textbf{Initialization:} Initialize $\mathcal{H}_{\mathbf{B}}$ to $\ket{1}$. The register  $\mathcal{H}_\Z$ will begin in an \emph{approximate} Fourier basis state $\ket{\widetilde{0}}= \tfrac{1}{\sqrt{2M+1}}\sum_{-M}^{+M}\ket{m}$,  i.e.\ a square pulse of length $2M+1$ in the integer basis, centered at 0. This step simulates steps 1-2 in algorithm \ref{alg:Order Finding infinite precision}.
 
 \item[3-4] Repeat steps 3-4 of algorithm \ref{alg:Order Finding infinite precision}. The state after obtaining the measurement outcome $a^s$ is now different due to the finite ``length'' of the comb $\sum_{m=0}^{M} \ket{m}$; we obtain
 \begin{equation}\label{eq:Periodic state finite}
|\psi\rangle =\frac{1}{\sqrt{L}}\sum_{-L_a}^{L_b}\ket{s+jr},
 \end{equation}
 where $L=L_a+L_b+1$ and $s$ is obtained nearly uniformly at random from $\{ 0,\ldots, r-1\}$. The values $L_a$, $L_b$ are positive integers of of the form  $\lfloor M/r\rfloor - \epsilon $ with $-2\leq \epsilon\leq 0$  (the particular value of $\epsilon$ depends on $s$, but it is irrelevant in our analysis). Consequently,  we have  $L=2\lfloor M/r\rfloor - (\epsilon_a +\epsilon_b)$.

\item[5]  Apply  the Fourier transform $\mathcal{F_\Z}$ to the  register $\mathcal{H}_\Z$ . The wavefunction of the final state  $\hat{\psi}$  is the Fourier transform of  the wavefunction $\psi$ of  (\ref{eq:Periodic state finite}). We  compute $\hat{\psi}$ using formula (\ref{eq:Fourier transform over Z}):
\begin{align}
 \hat{\psi}(p)&=\sum_{x\in\Z} \euler^{2\pii p x}\psi(x)=\frac{1}{\sqrt{L}}\sum_{-L_a}^{L_b} \euler^{2\pii p (s+jr)}=\frac{1}{\sqrt{L}} \left(\euler^{2\pii p s}\right)\frac{\euler^{2\pii p r (L_b+1)}-\euler^{-2\pii p r L_a}}{\euler^{2\pii p r }-1} \notag\\
 &= \frac{\euler^{2\pii p\left( s+\tfrac{L_b-L_a}{2}\right)}}{\sqrt{L}}  \frac{ \sin{ \left( \pi L pr  \right)} } { \sin{\left( \pi pr  \right)}} = \frac{\euler^{2\pii p\left(s +\tfrac{L_b-L_a}{2}\right) }}{\sqrt{L}} D_{L,r}({p}) 	\label{eq:Dirichlet State}
\end{align}
(to derive the equation, we apply the summation formula of the geometric series and  re-express the result in terms of the Dirichlet kernel \cite{rudin62_Fourier_Analysis_on_groups}
\begin{equation}\label{eq:Dirichlet Kernel}
D_{L,r}(p)= \frac{\sin{ \left( \pi L pr \right)}}{\sin{\left( \pi pr  \right)}}.
\end{equation} 
\item[6] \textbf{Measure $\mathcal{H}_{\Z}$ } in the Fourier basis. We show now that, if  the resolution is high enough, then the probability distribution of measurement outcomes will be ``polynomially close'' to the one obtained in the infinite precision case (\ref{eq:Periodic state after Fourier Transform}). Intuitively, this is a consequence of the fact that in the limit $M\rightarrow \infty$  (when the initial state becomes an infinitely-spread comb), we have also $L\rightarrow\infty$ and that the function $D_{L},r(p)$  converges  to a train  $\sum_{k=0}^{r-1} \delta_{k/r}(p)$ of Dirac measures \cite{rudin62_Fourier_Analysis_on_groups}. In addition, for a high finite value of $M$, we  find that the probability of obtaining some outcome  $p$ within a $\Delta=\tfrac{1}{Lr}$ window of a fraction  $\tfrac{k}{r}$ is also high.
\begin{equation}\label{eq:Probability in Shor Infinite}
\mathrm{Pr}(|p-\tfrac{k}{r}|\leq \tfrac{\Delta}{2})=\frac{1}{L} \int_{-\tfrac{\Delta}{2}}^{+\tfrac{\Delta}{2}} \frac{\sin^2{ \left( \pi L pr \right)}}{\sin^2{\left( \pi pr  \right)}}  \,\mathrm{d}p  \geq \tfrac{\Delta}{L}\frac{\sin^2{ \left( \tfrac{\uppi}{2} \right)}}{\sin^2{\left( \tfrac{\uppi}{2L}   \right)}} \geq \frac{4}{\uppi^2r},
\end{equation}
where we use the mean value theorem  and the bound $\sin(x)^2\leq x^2$.  It follows that with  \emph{constant} probability (larger than $4/\pi^2\approx0.41$)  the measurement will output some outcome $\tfrac{\Delta}{2}$-close to a number of the form $k/r$. (A tighter lower bound of $2/3$ for the success probability can be obtained by evaluating the integral numerically.)

Lastly, note that although the derivation of (\ref{eq:Probability in Shor Infinite}) implicitly assumes that the finial measurement is infinitely precise, it is enough to implement measurements with resolution close to $\Delta$. Due to the peaked shape of the final distribution (\ref{eq:Probability in Shor Infinite}), it follows that $\Theta(\tfrac{1}{M})$ resolution is enough if our task is to sample $\tfrac{\Delta}{2}$-estimates of these fractions nearly uniformly at random; this scaling  is \emph{efficient} as a function of   $M$ (cf. section \ref{sect:Normalizer circuits over blackbox groups}).

\item[7] \textbf{Classical postprocessing:} We now set $M$ (the length of the initial comb state) to be large enough so that $\tfrac{\Delta}{2}=\tfrac{1}{2Lr}\leq \tfrac{1}{2r^2}$; taking $\log M = O(\poly n)$ is enough for our purposes. With such an $M$, the measurement step 6 will output a number $p$ that is $\tfrac{1}{2r^2}$ close to a $\tfrac{k}{r}$ with  high probability, which can be increased to be arbitrarily close to 1 with a few repetitions. We then proceed as in step 7 of algorithm \ref{alg:Order Finding infinite precision} to compute the order $r$.\qedhere
\end{enumerate}\vspace*{-10pt}
\end{algorithm}
\end{proof}

\subsubsection*{Shor's algorithm as a normalizer circuit}

Our discussion in the previous section reveals strong a resemblance between our hybrid normalizer quantum algorithm for order finding and Shor's original quantum algorithm for this problem \cite{Shor}: indeed, both quantum algorithms employ remarkably similar circuitry. In this section we show that this resemblance is actually more than a mere fortuitous analogy, and that, in fact, one can understand Shor's original order-finding algorithm as a discretized version of our finite-precision hybrid algorithm for order finding \ref{alg:Order Finding infinite precision}.
\begin{theorem}[\textbf{Shor's algorithm as a normalizer circuit}] \label{thm:Shor normalizer} Shor's order-finding algorithm \cite{Shor} provides an efficient discretized implementation of our hybrid normalizer algorithm \ref{alg:Order Finding Finite precision}. 
\end{theorem}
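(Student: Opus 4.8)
The plan is to exhibit an explicit step-by-step dictionary between Shor's $n'$-qubit order-finding circuit and Algorithm \ref{alg:Order Finding Finite precision}, and then to check that the finite-dimensional side uses only a polynomial amount of resources in the input size, so that it is genuinely an \emph{efficient} discretization. First I would fix the number of Shor qubits $n'$ so that $2^{n'}\ge 2M+1$, with $M$ the comb length of Algorithm \ref{alg:Order Finding Finite precision}; the analysis below will force $n'=O(\log r)=O(n)$. The $2^{n'}$-dimensional computational space of these qubits is identified with $\mathrm{span}\{\ket m:0\le m<2^{n'}\}$ inside $\mathcal H_\Z$, and Shor's uniform superposition over $\Z_{2^{n'}}$ is identified with the (shifted) approximate Fourier state of Algorithm \ref{alg:Order Finding Finite precision} via
\[
\ket{\widetilde 0}=\frac1{\sqrt{2M+1}}\sum_{m=-M}^{+M}\ket m \;\xrightarrow{\ m\mapsto m+M\ }\; \frac1{\sqrt{2M+1}}\sum_{x=0}^{2M}\ket x,
\]
the translation $m\mapsto m+M$ being itself a normalizer (automorphism) gate whose only effect on the final wavefunction is an overall $p$-dependent phase. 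Replacing this odd-length interval by Shor's full range $\{0,\dots,2^{n'}-1\}$ changes $L$ by $O(1)$, comfortably within the slack $-2\le\epsilon\le 0$ already allowed in \eqref{eq:Periodic state finite}.

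With this identification the dictionary is immediate: step 1--2 of Algorithm \ref{alg:Order Finding Finite precision} (prepare the square-pulse comb) matches Shor's application of Hadamards, i.e.\ of $\mathcal F_{2^{n'}}$, to $\ket 0$; step 3 applies $c\mbox{\,-}V_a$, which by \eqref{eq:ModExp is Automorphism} is an automorphism gate and whose restriction to the window $\{\ket m:0\le m<2^{n'}\}$ is precisely Shor's modular-exponentiation gate $\ket x\ket z\mapsto\ket x\ket{a^xz}$; step 4 (measure and discard $\mathcal H_\mathbf{B}$) is literally Shor's step and produces in both algorithms the finite periodic state \eqref{eq:Periodic state finite}. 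The sole point at which the two circuits genuinely differ is the final quantum Fourier transform: Algorithm \ref{alg:Order Finding Finite precision} applies $\mathcal F_\Z$ and measures $\mathcal H_\Z$ in the continuous Fourier ($\T$) basis, whereas Shor applies the discrete $\mathcal F_{2^{n'}}$ of \eqref{eq:QFT over Z_N} and measures in the computational basis.

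The crux of the proof — and the step I expect to require the most care — is to show that this last substitution is an \emph{exact discretization} rather than a mere analogy. For any state $\psi$ supported on the window $\{0,\dots,2^{n'}-1\}$ one has the sampling identity
\[
\langle y|\,\mathcal F_{2^{n'}}\,|\psi\rangle \;=\; \frac1{\sqrt{2^{n'}}}\,\widehat\psi\!\left(\frac{y}{2^{n'}}\right),\qquad y\in\Z_{2^{n'}},
\]
where $\widehat\psi$ is exactly the continuous Fourier transform of \eqref{eq:Fourier transform over Z}. Thus Shor's computational-basis measurement \emph{is} a measurement of $\mathcal H_\Z$ in the $\T$-basis read out at resolution $\Delta=2^{-n'}$, the outcome $y$ encoding the estimate $p=y/2^{n'}$ of some fraction $k/r$. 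Feeding the periodic state \eqref{eq:Periodic state finite} into this identity reproduces the Dirichlet-kernel wavefunction \eqref{eq:Dirichlet State} evaluated at the points $y/2^{n'}$, so Shor's outcome probabilities are exactly the point values $\tfrac1{2^{n'}}|\widehat\psi(y/2^{n'})|^2$ of the density analysed in step 6 of Algorithm \ref{alg:Order Finding Finite precision}. The lower bound on the probability of landing $\tfrac\Delta2$-close to a good fraction then follows from (the discrete shadow of) the kernel estimate \eqref{eq:Probability in Shor Infinite}, and it coincides with Shor's original $\ge 4/(\pi^2 r)$ bound — which is no coincidence, precisely because of the sampling identity. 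The classical continued-fractions post-processing of step 7 is then literally the same in both algorithms, and one picks $n'$ so that $\Delta/2=2^{-n'-1}\le 1/2r^2$, i.e.\ $2^{n'}\ge r^2$, which is Shor's standard choice and yields $n'=O(n)$.

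It then remains to justify the word ``efficient''. With $n'=O(n)$, the square-pulse preparation, the discrete transform $\mathcal F_{2^{n'}}$, and the continued-fraction post-processing all run in polynomially many elementary steps by standard constructions, while the modular-exponentiation gate is assembled from polynomially many calls to the group-multiplication oracle underlying $c\mbox{\,-}V_a$, exactly as in the discussion of efficiently computable automorphism gates in Section \ref{sect:Normalizer circuits over blackbox groups}. Hence every normalizer gate of Algorithm \ref{alg:Order Finding Finite precision} is realised by a polynomial-size finite-dimensional circuit making polynomially many oracle queries, which is the claimed efficient discretized implementation. The only nontrivial ingredient, as noted, is the exact-sampling identification of $\mathcal F_{2^{n'}}$ with $\mathcal F_\Z$ together with the transfer of the success-probability estimate; everything else is a routine translation of gates.
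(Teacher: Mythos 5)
Your proposal is correct and follows essentially the same route as the paper's proof: the core ingredient in both is the sampling identity relating the discrete Fourier transform on the finite window to the continuous transform $\mathcal{F}_\Z$ evaluated at the rational points $y/D$, which identifies Shor's output distribution with a discretization of the Dirichlet-kernel wavefunction of Algorithm \ref{alg:Order Finding Finite precision}. The only (cosmetic) difference is that the paper sidesteps your shift-by-$M$ automorphism by taking the register dimension odd, $D=2M+1$, and identifying $\{0,\dots,D-1\}$ with $\{-M,\dots,M\}$ directly.
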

Note that the theorem does not imply that all possible quantum algorithms for order finding are normalizer circuits (or discretized versions of some normalizer circuit). What it shows is that the one first found by Shor in \cite{Shor} does exhibit such a structure.
\begin{proof}
Our approach will be to show explicitly that the evolution of the initial quantum state  in Shor's algorithm is analogous to that of the initial state in algorithm \ref{alg:Order Finding Finite precision} if we discretize the computation. Recall that Shor's algorithm implements a quantum phase estimation \cite{kitaev_phase_estimation} for the unitary $V_a$. Let $D$ be the dimension of the Hilbert space used to record such phase. We assume $D$ to be odd\footnote{This choice is not essential, neither in Shor's algorithm nor in algorithm \ref{alg:Order Finding Finite precision}, but it simplifies the proof.} and write $D=2M+1$. Then Shor's algorithm can be written as follows:
\begin{enumerate}
\item Initialize the state $\ket{0,1}$ on the Hilbert space $\mathcal{H}_D\times \mathcal{H}_{\Z_N^\times}$.
\item Apply the discrete Fourier transform $\mathcal{F}_{\Z_D}$ on $\mathcal{H}_D$ to obtain
\begin{equation}
\sum_{m=0}^{D-1} \ket{m}\ket{1}=\sum_{-M}^{M} \ket{m}\ket{1}.
\end{equation}
So far, we have simulated  step 1 in algorithm \ref{alg:Order Finding Finite precision}  by constructing the same periodic state. These first two steps are also clearly analogous to  steps 1-2 in algorithm \ref{alg:Order Finding infinite precision}.
\item[3-4] Apply the modular exponentiation gate $U_{\mathrm{me}}$, which is the following unitary  \cite{Shor}
\begin{equation}\label{eq:Modular Exponentiation Gate}
U_{\mathrm{me}}\ket{m,x}=\ket{m,a^mx },
\end{equation}
to the state. Measure the register $\mathcal{H}_{\Z_N^\times}$ in the standard basis. We obtain, again, a quantum state of the form (\ref{eq:Periodic state finite}), with $L\leq D$. 

\item[6] We apply the discrete Fourier transform $\mathcal{F}_{\Z_D}$ to the register $\mathcal{H}_{\Z_D}$ again. We claim now that the output state will be a discretized version of (\ref{eq:Dirichlet State}) due to a remarkable \textbf{mathematical correspondence} between  Fourier transforms. Note that any quantum state $\ket{\psi}$ of the infinite-dimensional Hilbert space $\mathcal{H}_{\Z}$ can be regarded as a quantum state of $\mathcal{H}_{D}$ given that the support of $\ket{\psi}$ is limited to the standard basis states $\ket{0}, \ket{\pm 1}, \ldots, \ket{\pm M}$. Let us denote the latter state $\ket{\psi_D}$ to distinguish both. Then, we observe a correspondence between letting $\mathcal{F}_\Z$ act on $\ket{\psi}$ and letting $\mathcal{F}_{\Z_D}$ act on $\ket{\psi_D}$.
\begin{equation}\label{eq:Fourier Correspondence}
 \displaystyle\hat{\psi}(p)= \sum_{x=-M}^{x=+M} \euler^{2\pii p x}\psi(x)\qquad \longleftrightarrow\qquad\displaystyle\hat{\psi}_D(k)= \sum_{x=-M}^{x=+M} \euler^{2\pii \tfrac{kx}{D}}\psi_D(x)
\end{equation}
The correspondence (equation \ref{eq:Fourier Correspondence}) tells us that, since we have $\psi(x)=\psi_D(x)$, it follows that the Fourier transformed function $\hat{\psi}_D(k)$ is precisely the function $\hat{\psi}(p)$ evaluated at points of the form $p=\tfrac{k}{D}$. The final state can be written as 
\begin{equation}\label{eq:Discretized Shor Output}
\sum_{k=0}^{D-1} \hat{\psi}\left(\tfrac{k}{D}\right)\ket{k}. 
\end{equation}
which is, indeed, a discretized version of  (\ref{eq:Dirichlet State}). 
\item[7-8] The last steps of Shor's algorithm are identical to 7-8 in algorithm \ref{alg:Order Finding Finite precision}, with the only difference being that the wavefunction (\ref{eq:Discretized Shor Output}) is now a discretization of (\ref{eq:Dirichlet State}). The probability of measuring a number $k$ such that $\tfrac{k}{D}$ is close to a multiple of the form $\tfrac{k'}{r}$ will again be high, due to the properties of the Dirichlet kernel (\ref{eq:Dirichlet Kernel}). Indeed, one can show (see, e.g.\, \cite{childs_vandam_10_qu_algorithms_algebraic_problems}) with an argument similar to (\ref{eq:Probability in Shor Infinite}) that, by setting $D=N^2$, the algorithm outputs with constant probability and almost uniformly a fraction $\tfrac{k}{D}$ among the two closest fraction to some value of the form $k/r$ (see e.g. \cite{Shor} for details). The period $r$ can be recovered, again, with a continued fraction expansion.\qedhere
\end{enumerate}
 \end{proof}

 \subsubsection*{Normalizer gates over $\infty$ groups are necessary to factorize}
 
 At this point, it is a natural question to ask whether it is necessary at all to replace the Hilbert space $\mathcal{H}_2^n$ with an infinite-dimensional space $\mathcal{H}_\Z$ with an integer basis in order to be able to factorize with normalizer circuits. We discuss in this section that, in the view of the authors, this is a \textbf{key indispensable ingredient} of our proof. 
 
 We begin our discussion by showing rigorously,  in the black-box set-up, that  no quantum algorithm for factoring based on \emph{modular exponentation} gates (controlled $V_a$ rotations) can be efficiently implemented with normalizer circuits over finite Abelian groups, in a strong sense.
 \begin{theorem} \label{thm:ModExp requires Z} Let $\mathcal{H}_M$ be the Hilbert space with basis $\{\ket{0},\ldots,\ket{M-1}\}$ and dimension $M$.  Let $\mathbf{B}$ be an Abelian black-box group with associated Hilbert space $\mathcal{H}_{\mathbf{B}}$. Consider the composite Hilbert space $\mathcal{H}=\mathcal{H}_M\times \mathcal{H}_{\mathbf{B}}$ and define  $U_\mathrm{me}$ to be the unitary gate on $\mathcal{H}$ defined as $U_{\mathrm{me}}\ket{m,x}=\ket{m,a^mx }$, where $a,x \in \mathbf{B}$ and  $m\in\Z_M$. Then, unless $M$ is a multiple of the order of $a$, there does not exist any normalizer circuit over $\mathcal{H}$ (even of exponential size) satisfying $\|\mathcal{C}-U_\mathrm{me} \|_{\mathrm{op}}\leq1-{2}^{-1/2}$.
 \end{theorem}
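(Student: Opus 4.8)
The plan is to derive a contradiction by propagating one well-chosen stabilizer state through the hypothetical approximating circuit. Write $r:=|a|$, assume $r\nmid M$ and $M\geq 2$, and suppose toward a contradiction that $\mathcal{C}$ is a normalizer circuit over $\mathcal{H}=\mathcal{H}_M\otimes\mathcal{H}_\mathbf{B}$ with $\|\mathcal{C}-U_\mathrm{me}\|_\mathrm{op}\leq 1-2^{-1/2}$. A natural first attempt is to conjugate Pauli operators: $U_\mathrm{me}$ maps the diagonal Pauli $Z_\chi:\ket{m,x}\mapsto\chi(x)\ket{m,x}$ (for $\chi$ a character of $\mathbf{B}$) to the diagonal unitary $\ket{m,x}\mapsto\chi(a)^{-m}\chi(x)\ket{m,x}$, which is a generalized Pauli operator over $\Z_M\times\mathbf{B}$ if and only if $m\mapsto\chi(a)^{-m}$ is a character of $\Z_M$, i.e.\ iff $\chi(a)$ is an $M$-th root of unity; picking $\chi$ with $\chi(a)$ a primitive $r$-th root of unity, this fails exactly when $r\nmid M$. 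This already shows $U_\mathrm{me}$ is not itself a normalizer circuit, but the estimate is not robust enough to rule out approximations (for large $r$ the conjugated operator is $O(1/r)$-close to a Pauli operator), so instead I would argue at the level of states.

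\textbf{The state and the key lemma.} Consider $\ket{\psi_0}:=\ket{+_M}\otimes\ket{1_\mathbf{B}}$, where $\ket{+_M}:=\mathcal{F}_{\Z_M}\ket{0}=M^{-1/2}\sum_{m\in\Z_M}\ket{m}$; this is a stabilizer state over $\Z_M\times\mathbf{B}$. Then $\ket{\psi}:=U_\mathrm{me}\ket{\psi_0}=M^{-1/2}\sum_{m=0}^{M-1}\ket{m,a^m}$, and since normalizer circuits map stabilizer states to stabilizer states (the finite-dimensional stabilizer formalism of \cite{VDNest_12_QFTs,BermejoVega_12_GKTheorem}, applied to the abstract cyclic decomposition of $\mathbf{B}$ furnished by Theorem~\ref{thm:Fundamental Theorem FAGroups}), $\ket{s}:=\mathcal{C}\ket{\psi_0}$ is a stabilizer state: it is supported, with flat amplitudes up to phases, on a coset $v_0+K$ of a subgroup $K\leq\Z_M\times\mathbf{B}$. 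The heart of the argument is the bound $|\langle s|\psi\rangle|\leq\sqrt{M/(2M-1)}\leq\sqrt{2/3}$ whenever $r\nmid M$, valid for \emph{every} stabilizer state $\ket{s}$ over $\Z_M\times\mathbf{B}$.

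\textbf{Proof sketch of the lemma.} Put $A:=\{m\in\{0,\dots,M-1\}:(m,a^m)\in v_0+K\}$. Because $\ket{\psi}$ and $\ket{s}$ have flat amplitudes on their supports, the triangle inequality gives $|\langle s|\psi\rangle|\leq |A|/\sqrt{|K|\,M}$. I bound $|A|$ two ways. Trivially $|A|\leq M$. For the second bound, consider the homomorphism $\phi_a:\Z\to\Z_M\times\mathbf{B}$, $n\mapsto(n\bmod M,\,a^n)$, whose kernel is $\ell\Z$ with $\ell=\mathrm{lcm}(M,r)=Mr/\gcd(M,r)\geq 2M$ (using $\gcd(M,r)\leq r/2$, which holds since $r\nmid M$). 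The difference set $A-A$ lies in $\{-(M-1),\dots,M-1\}$, on which $\phi_a$ is injective (two such integers differing by a nonzero multiple of $\ell\geq 2M$ cannot both occur), and $\phi_a(A-A)\subseteq K$ since differences of elements of $v_0+K$ lie in $K$; hence $|K|\geq|A-A|\geq 2|A|-1$ by the elementary sumset inequality for sets of integers. Combining, $|\langle s|\psi\rangle|\leq\min\{M,(|K|+1)/2\}/\sqrt{|K|\,M}$, and a one-variable optimization over $|K|\geq 1$ shows this is maximized at $|K|=2M-1$ with value $\sqrt{M/(2M-1)}$.

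\textbf{Conclusion and main obstacle.} By the lemma, $\mathrm{Re}\langle s|\psi\rangle\leq|\langle s|\psi\rangle|\leq\sqrt{2/3}$, so $\|\ket{s}-\ket{\psi}\|^2=2-2\,\mathrm{Re}\langle s|\psi\rangle\geq 2-2\sqrt{2/3}$. Since $\|\mathcal{C}-U_\mathrm{me}\|_\mathrm{op}\geq\|\mathcal{C}\ket{\psi_0}-U_\mathrm{me}\ket{\psi_0}\|=\|\ket{s}-\ket{\psi}\|$, this gives $\|\mathcal{C}-U_\mathrm{me}\|_\mathrm{op}\geq\sqrt{2-2\sqrt{2/3}}\approx 0.606$, contradicting $\|\mathcal{C}-U_\mathrm{me}\|_\mathrm{op}\leq 1-2^{-1/2}\approx 0.293$; this proves the theorem. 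I expect the main obstacle to be exactly the lemma: one must promote the qualitative fact ``$\{(m,a^m)\}_{m\in\Z_M}$ is not a coset when $r\nmid M$'' to a \emph{robust} quantitative non-approximability statement, and the combinatorial input $|A-A|\geq 2|A|-1$ together with the kernel bound $\ell\geq 2M$ is what makes this go through. It also makes transparent why the estimate degenerates precisely when $r\mid M$: then $\{(m,a^m)\}_{m\in\Z_M}$ is a subgroup, $\ket{\psi}$ is already a stabilizer state, and no contradiction arises.
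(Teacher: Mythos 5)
Your proof is correct, but it takes a genuinely different route from the paper's. The paper's argument applies $\mathcal{C}$ to each \emph{standard basis} state $\ket{m,g}$: since the output must be a uniform superposition over a coset, the overlap requirement $|\bra{F(m,g)}\mathcal{C}\ket{m,g}|\geq 2^{-1/2}$ forces that coset to be a single point, so $\mathcal{C}$ must implement the classical map $F(m,g)=(m,m\alpha+g)$ \emph{exactly}; it then invokes the classification from \cite{VDNest_12_QFTs,BermejoVega_12_GKTheorem} of classical maps realizable by normalizer circuits (affine maps, with a consistency condition on the matrix representation) to conclude that $F$ is affine only if $M\alpha=0$, i.e.\ $|a|$ divides $M$. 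You instead propagate the single superposition state $\ket{+_M}\otimes\ket{1_\mathbf{B}}$ and prove a quantitative lemma: when $r\nmid M$, the target state $M^{-1/2}\sum_m\ket{m,a^m}$ has overlap at most $\sqrt{M/(2M-1)}\leq\sqrt{2/3}$ with \emph{every} flat coset state, via the injectivity of $n\mapsto(n\bmod M,a^n)$ on a window of length $2M-1$ (which uses $\mathrm{lcm}(M,r)\geq 2M$) combined with the sumset inequality $|A-A|\geq 2|A|-1$; all the steps check out. What each buys: the paper's proof is shorter because it outsources the hard part to the affine normal form of prior work, whereas yours needs only the coset-support characterization of stabilizer states and in exchange delivers a stronger, explicitly quantitative (and even global-phase-robust) inapproximability bound $\|\mathcal{C}-U_{\mathrm{me}}\|_{\mathrm{op}}\geq\sqrt{2-2\sqrt{2/3}}\approx 0.61$, while making transparent that the obstruction is exactly that $\{(m,a^m)\}_{m\in\Z_M}$ fails to be a coset when $r\nmid M$. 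One small point in your favor: you correctly restrict to $M\geq 2$, which is needed since for $M=1$ the gate $U_{\mathrm{me}}$ is the identity and the theorem as literally stated degenerates.
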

 We prove the theorem in appendix \ref{app:ModExp requires Z}. We highlight that a similar result was proven in  \cite[theorem 2]{VDNest_12_QFTs}: that normalizer circuits over groups of the form $\Z_{2^n}\times\Z_N$ also fail to approximate the modular exponentiation.  Also, we point out that it is easy to see that the converse of theorem \ref{thm:ModExp requires Z} is also true:  if $|a|$ divides $M$, then an argument similar to (\ref{inproof:Ume is no Clifford})  shows that $(m,x)\rightarrow(m,a^mx)$ is a group automorphism of $\Z_M\times \mathbf{B}$, and the gate  $U_\mathrm{mf}$ automatically becomes a normalizer automorphism gate.
 
The main implication of theorem \ref{thm:ModExp requires Z}  is that finite-group normalizer circuits \emph{cannot} implement nor approximate the quantum modular exponentiation gate between $\mathcal{H}_\mathbf{B}$, playing the role of the target system, and some ancillary control system, \emph{unless} a multiple $M=\lambda |a|$ of the order  of $a$ is known in advance. Yet the problem of finding  multiples of orders is \emph{at least as hard as factoring and order-finding}: for $\mathbf{B}=\Z_N^\times$, a subroutine to find multiples of orders  can be used to efficiently compute classically a multiple of the order of the group $\varphi(N)$, where $\varphi$ is the Euler totient function, and  it is known that factoring is  polynomial-time reducible to the problem of finding a single multiple of the form  $\lambda\varphi(N)$  \cite{Shoup08_A_Computational_Introducttion_to_Number_Theory_and_Algebra}. 

We arrive to the conclusion that,  unless we are in the trivial case where we know how to factorize in advanced, a factoring algorithm based on finite-group normalizer gates cannot comprise controlled-$V_a$ rotations. We further conjecture that any other approach based on finite-group normalizer gates cannot work either.
 \begin{conj} \label{conj: Factoring not over finite groups}
 Unless factoring is contained in \textnormal{BPP}, there is no efficient quantum algorithm to solve the factoring problem using only normalizer circuits over finite Abelian groups (even when these are allowed to be black-box groups) and classical pre- and  post-processing.
 \end{conj}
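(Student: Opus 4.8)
The plan is to route the argument through the simulation machinery of theorem~\ref{thm:Simulation}. That result already shows that \emph{any} black-box normalizer circuit can be traced step by step on a classical computer once an efficient subroutine for decomposing finite Abelian groups is supplied; specialised to the finite-dimensional case, it tells us that a hypothetical factoring procedure built solely out of normalizer circuits over finite Abelian groups $G_{\mathrm{prev}}\times\mathbf{B}$ (together with classical pre- and post-processing) can be classically simulated modulo calls to a group-decomposition oracle. So the conjecture would follow if one could argue that, for the particular black-box groups that such a procedure is \emph{able} to construct from a factoring instance $N$, the group-decomposition problem is itself classically tractable in a way that closes into a genuine \textnormal{BPP} algorithm for factoring.

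Concretely, I would proceed in three steps. First, I would refine theorem~\ref{thm:Simulation} in finite dimensions to pin down exactly which classical subroutine is being invoked: I expect it to reduce to computing the cyclic decomposition of the black-box groups and subgroups that appear during the run, together with the relevant isomorphisms --- essentially the ingredients of the extended Cheung--Mosca algorithm. Second, I would argue that polynomial-time classical preprocessing starting from $N$ can only feed the quantum part black-box groups drawn from a restricted family --- morally the groups $\Z_{N'}^{\times}$ for integers $N'$ derived from $N$, their subgroups, and direct products with cyclic groups of \emph{known} order --- and that for every member of this family the group-decomposition problem is polynomial-time equivalent to integer factorization (cf.\ the discussion in section~\ref{sect:Groups}). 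An explicitly decomposed factor $\DProd{N}{c}$ cannot carry any useful information here, since normalizer circuits over such groups are already efficiently classically simulable; hence, just as in theorem~\ref{thm:ModExp requires Z}, all of the ``power'' has to flow through $\mathbf{B}$. Third, I would assemble a \textnormal{BPP} factoring algorithm by running the classical step-by-step simulation of the hypothetical quantum algorithm on input $N$ and, each time the simulation needs to decompose a group attached to some intermediate integer $N'$, recursing: factor $N'$ by invoking the whole procedure on $N'$. Since primality and prime-power testing are in \textnormal{P}, the recursion should bottom out whenever the $N'$ that occur are genuine ``sub-instances'' of $N$, at which point the simulation becomes unconditional.

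The hard part --- and, I believe, the reason the statement is only conjectured --- is making the last step rigorous. Nothing a priori forces the integers $N'$ whose factorizations the simulation requires to be strictly ``smaller'' than $N$ in any well-founded ordering, so the recursion may fail to terminate or may incur a super-polynomial blow-up; worse, a fully general black-box group handed to the circuit merely as a list of generators --- with no algebraic relation to $N$ and no known order --- can be \emph{harder} to decompose classically than factoring is, so the ``factoring-equivalence'' claim in step two can simply fail. Ruling out such exotic constructions amounts to proving a structural theorem to the effect that every finite-group normalizer factoring algorithm must secretly contain a modular-exponentiation-like gate (so that theorem~\ref{thm:ModExp requires Z} applies directly), and I do not see how to obtain one with the tools developed here. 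Absent that, the honest status of the statement remains a conjecture, corroborated by theorem~\ref{thm:ModExp requires Z} --- which already settles the modular-exponentiation route --- and by the simulation-theoretic evidence sketched above.
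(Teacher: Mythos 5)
This statement is a \emph{conjecture} in the paper: the authors do not prove it, and neither do you --- and you are right not to. Your proposal is best read as a candidate proof strategy together with an honest account of why it cannot be closed, which is consistent with the status the paper assigns to the statement. The supporting evidence you marshal is, however, of a different flavour from the paper's. The paper backs the conjecture with theorem~\ref{thm:ModExp requires Z} (no normalizer circuit over a finite group $\Z_M\times\mathbf{B}$, even of exponential size, can approximate the modular exponentiation gate to within operator-norm error $1-2^{-1/2}$ unless $M$ is already a multiple of the order of $a$ --- and finding such a multiple is itself factoring-hard), together with the observation that order finding is most naturally an HSP over the \emph{infinite} group $\Z$, so that any finite-group formulation presupposes knowledge of a multiple of $\varphi(N)$. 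You instead route through theorem~\ref{thm:Simulation}: simulate the hypothetical finite-group factoring circuit classically modulo a group-decomposition oracle, argue that the groups actually reachable from the input $N$ have decomposition problems equivalent to factoring, and recurse. The obstacles you identify --- the recursion need not be well-founded, and nothing prevents the classical preprocessing from constructing a concretely-instantiated group whose decomposition is not known to reduce to factoring $N$ --- are exactly the reasons this route does not yield a theorem; note also that in the strict black-box setting, decomposition is \emph{provably} hard, so the oracle call in theorem~\ref{thm:Simulation} cannot be discharged at all unless the group is de-oracularized. Your closing observation, that a proof would require a structural theorem forcing every finite-group normalizer factoring algorithm to contain a modular-exponentiation-like gate (so that theorem~\ref{thm:ModExp requires Z} applies), is a fair summary of what is missing; the paper leaves the same gap open.
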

 We back up our conjecture with two facts. On one hand, Shor's algorithm for factoring \cite{Shor} (to our knowledge, the only quantum algorithm for factoring thus far) uses a modular exponentiation gate to estimate the phases of the unitary $V_a$, and these gates are hard to implement with finite-group normalizer circuits due to theorem \ref{thm:ModExp requires Z}. On the other hand,  the reason why this does works for the group $\Z$ seems to be, in the view of the authors,   intimately related to the fact that the order-finding problem can be naturally casted as an instance of the Abelian \textbf{\emph{hidden subgroup problem}} over $\Z$  (see also section \ref{sect:Abelian HSPs}).  Note that, although one can always cast the order-finding problem  as an HSP over any finite group $\Z_{\lambda \varphi(N)}$ for an integer $\lambda$, this formulation of the problem is unnatural in our setting, as it requires (again) the prior knowledge of a multiple of $\varphi(N)$, which we could use to factorize and find orders classically without the need of a quantum computer \cite{Shoup08_A_Computational_Introducttion_to_Number_Theory_and_Algebra}.

\subsection{Elliptic curves}\label{sect:Elliptic Curve}

In the previous sections we have seen that black-box normalizer circuits can compute discrete logarithm in $\Z_p^\times$ and break the Diffie-Hellman key exchange protocol. In the proof, we showed that Shor's algorithm for this problem decomposes naturally in terms of normalizer gates over $\Z_p^\times$, treated as a black-box group. 

It is known that Shor's algorithm can be adapted in order to compute discrete logarithms over arbitrary black-box groups. In particular, this can be done for the group of solutions $E$ of an elliptic curve \cite{ProosZalka03_Shors_DiscreteLog_Elliptic_Curves,Kaye05_optimized_Quantum_Elliptic_Curve,CheungMaslovMathew08_Design_QuantumAttack_Elliptic_CC}, thereby rendering elliptic curve cryptography (ECC) vulnerable. Efficient unique encodings and fast multiplication algorithms for these groups are known, so that they can be formally treated as black-box groups. In this section, we show that a quantum algorithm given by Proos and Zalka  \cite{ProosZalka03_Shors_DiscreteLog_Elliptic_Curves} to compute discrete logarithms over elliptic curves can be implemented with black-box normalizer circuits.

\subsubsection*{Basic notions}

To begin, we review some  rudiments of the theory of elliptic curves. For simplicity, our survey focuses only on the particular types of elliptic curves that were studied in \cite{ProosZalka03_Shors_DiscreteLog_Elliptic_Curves}, over fields with characteristic different than 2 and 3. Our discussion applies equally to the (more general) cases considered in  \cite{Kaye05_optimized_Quantum_Elliptic_Curve,CheungMaslovMathew08_Design_QuantumAttack_Elliptic_CC}, although the definition of the  elliptic curve group operation becomes more cumbersome in such settings\footnote{Correspondingly, the complexity of performing group multiplications in \cite{Kaye05_optimized_Quantum_Elliptic_Curve,CheungMaslovMathew08_Design_QuantumAttack_Elliptic_CC} is greater.}. For more details on the subject, in general, we refer the reader to \cite{childs_vandam_10_qu_algorithms_algebraic_problems, lorenzini1997invitation}.

Let $p> 3$ be prime and let $K$ be the field defined by endowing the set $\Z_p$ with the addition and multiplication operations modulo $p$.  An \emph{elliptic curve} $E$ over the field $K$ s a finite Abelian group formed by the solutions  $(x,y)\in K\times K$ to an equation
\begin{equation}\label{eq:Elliptic Curve}
C:y^2 = x^3 + \alpha x + \beta
\end{equation}
together with a special element $O$ called the ``point at infinity''; the coefficients $\alpha$, $\beta$ in this equation live in the field $K$. The discriminant  $\Delta:=-16(4\alpha^3 + 27 \beta^2)$ is nonzero, ensuring that  the curve is non-singular. The elements of $E$ are endowed with a commutative group operation. If $P\in E$ then $P+O= O + P = P$. The inverse element $-P$ of $P$ is obtained by the reflection of $P$ about the $x$ axis. Given two elements $P=(x_P,y_P)$ and $Q=(x_Q,y_Q)\in E$, the element $P+Q$ is defined via the following rule:
\begin{equation} \label{eq:Elliptic curve group operation}
P+Q= 
\begin{cases}
O & \textnormal{if $P=(x_P,y_P)=(x_Q,-y_Q)=-Q$,}  \\
 -R& \textnormal{otherwise (read below).} 
\end{cases}
\end{equation}
In the case $P\neq Q$, the point $R$ is computed as follows:

\begin{minipage}{0.4\textwidth}
\begin{align}
x_{R} &=\lambda^2 - x_P - x_Q  \notag \\
y_{R} &= y_P- \lambda(x_P-x_{R})  \notag 
\end{align}
\end{minipage}
\begin{minipage}{0.4\textwidth}
\begin{equation} 
\lambda:=\notag
\begin{cases}
\tfrac{y_Q-y_P}{x_Q-x_P} & \textnormal{if $P\neq Q$} \\
 \tfrac{3 x_P^2+\alpha}{2 y_P} & \textnormal{if $P= Q$ and $y_P \neq 0$}
\end{cases}
\end{equation}
\end{minipage}\\

\noindent  $R$ can also be defined, geometrically, to be the ``intersection between the elliptic curve and the line through $P$ and $Q$'' (with a minus sign) \cite{childs_vandam_10_qu_algorithms_algebraic_problems}. 

It is not hard to check form the definitions above that the elliptic-curve group $E$ is finite and Abelian; from a computational point of view, the elements of $E$ can be stored with $n\in O(\log |K|)$ bits and the group operation can be computed in $O(\poly{n})$ time. Therefore, the group $E$ can be treated as a \textbf{black box group}.

Finally, the \textbf{discrete logarithm problem} (DLP) over an elliptic curve is defined in a way analogous to the $\Z_p^\times$ case, although now we use additive notation: given $a$, $b\in E$ such that $xa=b$ for some integer $x$; our task is to find the least nonnegative integer $s$ with that property. The elliptic-curve DLP is believed to be intractable for classical computers and can be used to define cryptosystems  analog to Diffe-Hellman's \cite{childs_vandam_10_qu_algorithms_algebraic_problems}.

\subsubsection*{Finding discrete logarithms over elliptic curves with normalizer circuits	}

In this section we review Proos-Zalka's quantum approach to solve the DLP problem over an elliptic curve \cite{ProosZalka03_Shors_DiscreteLog_Elliptic_Curves}; their quantum algorithm is, essentially, a modification of Shor's algorithm to solve the DLP over $\Z_p^\times$, which we covered in detail in section \ref{sect:Discrete Log}. 

Our \textbf{main contribution} in this section is that Proos-Zalka's algorithm can be implemented with normalizer circuits over the group $\Z\times \Z \times E$. The proof reduces to combining ideas from sections \ref{sect:Discrete Log} and \ref{sect:Factoring} and will be sketched in less detail.
\begin{algorithm}[\textbf{Proos-Zalka's \cite{ProosZalka03_Shors_DiscreteLog_Elliptic_Curves}}]\label{alg:ProosZalka} $\quad $
\begin{alg_in}
An elliptic curve with associated group $E$ (the group operation is defined as per (\ref{eq:Elliptic curve group operation})), and two points $a,b \in E$. It is promised that $sa = b$ for some nonnegative integer $s$.
\end{alg_in}
\begin{alg_out}
Find the least nonnegative integer $s$ such that $sa=b$.
\end{alg_out}
\begin{enumerate}
\item We use a register $\mathcal{H}_E$, where $E$ is the group associated with the elliptic curve (\ref{eq:Elliptic Curve}), and two ancillary registers $\mathcal{H}$ of dimension $N=2^n$, associated with the group $A=\Z_N\times\Z_N$. The computation begins in the state $\ket{0,0,O}$, where $(0,0)\in A$ and $O\in E$.
\item Fourier transforms are applied to the ancillas to create the superposition $\sum_{(x,y)\in A}\ket{x,y,O}$. 
\item The following transformation is applied unitarily:
\begin{equation}
\sum_{(x,y)\in A}\ket{x,y,O}\quad  \xrightarrow{\quad c\mbox{\,-}U\quad } \quad \sum_{(x,y)\in A}\ket{x,y,xa + yb}.
\end{equation}
\item Fourier transforms are applied again over the ancillas and then measured, obtaining an outcome of the form $(x',y')$. These outcomes contain enough information to extract the number $s$, with similar post-processing techniques to those used in Shor's DLP algorithm.
\end{enumerate}
\end{algorithm}
Algorithm \ref{alg:ProosZalka} is not a normalizer circuit over $\Z_N\times \Z_N \times E$. Similarly to  the factoring case, the algorithm would become a normalizer circuit  if the classical transformation in step 3 was an automorphism gate; however, for this to occur, $N$ needs to be  a common multiple of the orders of $a$ and $b$ (the validity of these claims follows with similar arguments to those in section \ref{sect:Factoring}). In view of our results in sections \ref{sect:Discrete Log} and \ref{sect:Factoring}, one can easily come up with two approaches to implement algorithm \ref{alg:Discrete Log} using normalizer gates.
\begin{itemize}
\item[(a)] The first approach would be to use our normalizer version of Shor's algorithm (theorem \ref{thm:Order Finding}) to find the orders of the elements $a$ and $b$: normalizer gates over $\Z\times E$ would be used in this step. Then, the number $N$ in algorithm \ref{alg:ProosZalka} can be set so that all the gates involved become normalizer gates over $\Z_N\times \Z_N \times E$.
\item[(b)] Alternatively, one can choose not to compute the orders by making the ancillas infinite dimensional, just as we did in algorithm \ref{alg:Order Finding infinite precision}. The algorithm becomes a normalizer circuit over $\Z\times \Z \times E$: as in algorithm \ref{alg:Order Finding infinite precision},  the ancillas are initialized to the zero Fourier basis state, and the discrete Fourier transforms are replaced by QFTs over $\T$ (in step 2) and $\Z$ (in step 4). A finite precision version of the algorithm can be obtained in the same fashion as we derived algorithm \ref{alg:Order Finding infinite precision}. Proos-Zalka's original algorithm could, again, be interpreted as a discretization of the resulting normalizer circuit.
\end{itemize}

\subsection{The hidden subgroup problem}\label{sect:Abelian HSPs}

All problems we have considered this far---finding discrete logarithms and orders of Abelian group elements---fit inside  a general class of problems known as hidden subgroup problems over Abelian groups \cite{Brassard_Hoyer97_Exact_Quantum_Algorithm_Simons_Problem,Hoyer99Conjugated_operators,MoscaEkert98_The_HSP_and_Eigenvalue_Estimation,Damgard_QIP_note_HSP_algorithm}. Most quantum algorithms discovered in the early days of quantum computation solve problems that can be recasted as Abelian HSPs, including Deutsch's problem \cite{Deutsch85quantumtheory}, Simon's \cite{Simon94onthe}, order finding and discrete logarithms \cite{Shor}, finding hidden linear functions \cite{Boneh95QCryptanalysis}, testing shift-equivalence of polynomials \cite{Grigoriev97_testing_shift_equivalence_polynomials}, and Kitaev's Abelian stabilizer problem \cite{kitaev_phase_estimation,Kitaev97_QCs:_algorithms_error_correction}.

In view of our previous results, it is natural to ask how many of these problems can be solved within the normalizer framework. In this section we show that a well-known quantum algorithm  that solves the Abelian HSPs (in full generality) can be modeled as a normalizer circuit over an Abelian group $\mathcal{O}$. Unlike previous cases, the group involved in this computation cannot be regarded as a black-box group, as it will not be clear how to perform group multiplications of its elements. This fact reflects the presence of oracular functions with unknown structure are present in the algorithm, to which the group $\mathcal{O}$ is associated; thus,  we  call $\mathcal{O}$ an \emph{oracular group}. We will discuss, however, that this latter difference does not seem to be very substantial, and that the Abelian HSP algorithm can be naturally regarded as a normalizer computation.

\subsubsection*{The quantum algorithm for the Abelian HSP}

In the \emph{Abelian hidden subgroup} problem we are given a function $f:G\rightarrow X$ from an Abelian finite\footnote{In this section we assume $G$ to be finite for simplicity. For a case where $G$ is infinite, we refer the reader back to section \ref{sect:Factoring}, where we studied the order finding problem (which is a HSP over $\Z$).} group $G$ to a finite set $X$. The function $f$ is constant on cosets of the form $g+H$, where $H$ is a subgroup ``hidden'' by the function; moreover, $f$ is different between different cosets. Given $f$ as a black-box, our task is to find such a subgroup $H$.

The Abelian HSP is a hard problem for classical computers, which need to query the oracle $f$ a superpolynomial amount of times in order to identify $H$ \cite{childs_vandam_10_qu_algorithms_algebraic_problems}. In contrast, a quantum computer can determine $H$ in polynomial time $O(\polylog{|G|})$, and using the same amount of queries to the oracle. We describe next a celebrated quantum algorithm for this task \cite{Brassard_Hoyer97_Exact_Quantum_Algorithm_Simons_Problem,Hoyer99Conjugated_operators,mosca_phd}. The algorithm is efficient given that the group $G$ is explicitly given\footnote{If the group $G$ is not given in a factorized form, the Abelian HSP may still be solved by  applying  Cheung-Mosca's algorithm to decompose $G$ (see next section).} in the form $G=\DProd{d}{m}$ \cite{mosca_phd,cheung_mosca_01_decomp_abelian_groups,Damgard_QIP_note_HSP_algorithm}.
\begin{algorithm}[\textbf{Abelian HSP}]\label{alg:HSP} $ $
\begin{alg_in}
An explicitly decomposed finite abelian group $G=\DProd{d}{m}$, and oracular access to a function $f:\:G\rightarrow X$ for some set $X$. $f$ satisfies the promise that $f(g_1) = f(g_2)$ iff $g_1 = g_2+h$ for some $h \in H$, where $H\subseteq G$ is some fixed but unknown subgroup of $G$.
\end{alg_in}
\begin{alg_out}
A generating set for $H$.
\end{alg_out}
\begin{enumerate}
\item Apply the QFT over the group $G$ to an initial state $\ket{0}$ in order to obtain a uniform superposition over the elements of the group $\sum_{g\in G}\ket{g}$.
\item Query the oracle $f$ in an ancilla register, creating the state
\begin{equation}
\frac{1}{\sqrt{|G|}}\sum_{g\in G}\ket{g,f(g)}
\end{equation}
\item The QFT over $G$ is applied to the first register, which is then measured.
\item After repeating 1-3 polynomially many times, the obtained outcomes can be postprocessed classically to obtain a generating set of $H$ with exponentially high probability (we refer the reader to \cite{lomont_HSP_review} for details on this classical part).
\end{enumerate}
\end{algorithm}
We now claim that the quantum part of algorithm \ref{alg:HSP} is a \emph{normalizer circuit}, of a slightly more general kind than the ones we have already studied. The normalizer structure of the HSP-solving quantum circuit is, however, remarkably well-hidden compared to the other quantum algorithms that we have already studied. It is indeed a   surprising fact that there is \emph{any} normalizer structure in the circuit, due to the presence of an oracular function, whose inner structure appears to be completely unknown to us!
\begin{theorem}[\textbf{The Abelian HSP algorithm is a normalizer circuit.}]\label{thm:HSP} In any Abelian hidden subgroup problem, the subgroup-hiding property of the oracle function $f$ induces a group structure $\mathcal{O}$ in the set $X$. With respect to this hidden ``linear structure'', the function $f$ becomes a group homomorphism, and the HSP-solving quantum circuit becomes a normalizer circuit over $G\times \mathcal{O}$.
\end{theorem}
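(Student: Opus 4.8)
The plan is to make the hidden group $\mathcal{O}$ explicit and then verify, gate by gate, that the three quantum steps of Algorithm~\ref{alg:HSP} are normalizer gates over $G\times\mathcal{O}$. Since $f$ is constant on the cosets of $H$ and takes distinct values on distinct cosets, the map $f$ induces a bijection between $G/H$ and the image $f(G)\subseteq X$. Transporting the group law of $G/H$ along this bijection, I would define $\mathcal{O}:=(f(G),\boxplus)$ with $f(g_1)\boxplus f(g_2):=f(g_1+g_2)$ (if $X$ is strictly larger than $f(G)$, we simply restrict attention to the subspace of $\mathcal{H}_X$ spanned by $\{\ket{x}:x\in f(G)\}$, since no other element of $X$ ever appears in the computation). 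The operation $\boxplus$ is well defined \emph{precisely because} of the hiding promise: if $f(g_1)=f(g_1')$ and $f(g_2)=f(g_2')$ then $g_1-g_1'\in H$ and $g_2-g_2'\in H$, hence $(g_1+g_2)-(g_1'+g_2')\in H$ and so $f(g_1+g_2)=f(g_1'+g_2')$. Associativity, commutativity, the identity $f(0_G)$ and the inverses $f(-g)$ are all inherited from $G$, so $\mathcal{O}$ is a finite Abelian group, $f:G\to\mathcal{O}$ is a surjective homomorphism, and $\ker f=H$; in particular $\mathcal{O}\cong G/H$.

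Next I would run the circuit on $\mathcal{H}_G\otimes\mathcal{H}_{\mathcal{O}}$ with the ancilla initialized at the identity $\ket{f(0_G)}$, i.e.\ with the oracle's ``blank'' symbol relabelled as the identity of $\mathcal{O}$. Step~1 applies the QFT over the explicitly decomposed group $G=\Z_{d_1}\times\cdots\times\Z_{d_m}$ to the first register, which is by definition a partial normalizer QFT and produces $|G|^{-1/2}\sum_{g\in G}\ket{g}\ket{f(0_G)}$. Step~2 is the oracle call $\ket{g}\ket{0}\mapsto\ket{g}\ket{f(g)}$; I claim it coincides with the automorphism gate $U_\alpha$ for $\alpha:G\times\mathcal{O}\to G\times\mathcal{O}$, $\alpha(g,x):=(g,\,x\boxplus f(g))$. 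Indeed $\alpha$ is invertible with inverse $(g,x)\mapsto(g,\,x\boxplus f(-g))$, and it is a homomorphism because $f$ is one: $\alpha\big((g_1,x_1)+(g_2,x_2)\big)=(g_1+g_2,\,x_1\boxplus x_2\boxplus f(g_1)\boxplus f(g_2))=\alpha(g_1,x_1)+\alpha(g_2,x_2)$. As $G\times\mathcal{O}$ is discrete, $\alpha$ is automatically continuous, so $U_\alpha$ is a legitimate automorphism gate; and since $f(0_G)$ is the identity of $\mathcal{O}$, $U_\alpha$ applied to $\ket{g}\ket{f(0_G)}$ returns $\ket{g}\ket{f(g)}$, matching the oracle on exactly the states that occur along the computation. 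Step~3 applies the QFT over $G$ to the first register again (another partial normalizer QFT), after which the standard-basis measurement of Algorithm~\ref{alg:HSP} is precisely the designated-basis measurement of the normalizer-circuit model. Thus the quantum part of the algorithm is a normalizer circuit over $G\times\mathcal{O}$.

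I expect the only genuine subtlety to be the identification of $\mathcal{O}$ together with the recognition that the oracle $U_f$ is an automorphism gate with respect to it: the group $\mathcal{O}$ is not handed to us with an efficient multiplication algorithm (this is what distinguishes the general HSP, where $\mathcal{O}$ is ``oracular'', from the hidden kernel problem, where $\mathcal{O}$ is genuinely black-box), so one must accept $\boxplus$ being known only implicitly through $f$ --- which is harmless, because automorphism gates enter the normalizer model as black boxes anyway. The remaining points are bookkeeping: fixing the ancilla initial state to the identity $f(0_G)$ so that $U_f$ and $U_\alpha$ agree on the relevant subspace (the HSP oracle need only be specified on blank ancillas, and on those it equals $U_\alpha$), and restricting $\mathcal{H}_X$ to the span of $\{\ket{x}:x\in f(G)\}$ when $X$ is larger than the image. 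Everything else is a direct inheritance of the group axioms from $G$ via the hiding property.
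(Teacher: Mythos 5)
Your proposal is correct and follows essentially the same route as the paper: you transport the group law of $G/H$ onto the image of $f$ exactly as the paper does (its $x\cdot y=\tilde f(f^{-1}(x)+f^{-1}(y))$ is your $\boxplus$), identify the oracle call with the automorphism gate $\alpha(g,x)=(g,x\boxplus f(g))$, and handle the initialization at the identity $f(0)$ in the same way. The subtleties you flag (that $\mathcal{O}$ is only an ``oracular'' group and that $U_f$ need only agree with $U_\alpha$ on the states actually occurring) are precisely the points the paper also discusses.
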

The proof is the content of the next two sections.

\subsubsection*{Unweaving the hidden-subgroup oracle}

The key ingredient in the proof of the theorem (which is the content of the next section) is to realize that the oracle $f$ cannot fulfill the subgroup-hiding property without having a hidden homomorphism structure, which is also present in the quantum algorithm. 

First, we show that $f$ induces a \emph{\textbf{group structure}} on $X$. Without loss of generality, we  assume that the function $f$ is surjective, so that $\mathrm{im}f = X$. (If this is not true, we can redefine $X$ to be the image of $f$.) Thus, for every element  $x\in X$, the preimage $f^{-1}(x)$ is contained in $G$, and is a coset of the form  $f^{-1}(x)= g_x+H$, where $H$ is the hidden subgroup and $f(g_x)=x$. With these observations in mind, we can define a group operation in $X$  as follows:
\begin{equation}\label{eq:Oracular Group Operation}
x\cdot y = \tilde{f}\left(f^{-1}(x)+f^{-1}(y)\right).
\end{equation}
In (\ref{eq:Oracular Group Operation}) we denote by $\tilde{f}$ the function $\tilde{f}(x+H)=f(x)$ that sends cosets $x+H$ to elements of $X$. The subgroup-hiding property guarantees that this function is well-defined; moreover, $f$ and $\tilde{f}$ are related via $f(x)=\tilde{f}(x+H)$. The addition operation on cosets $f^{-1}(x)=g_x+H$ and  $f^{-1}(y)= g_y+H$ is just the usual group operation of the quotient group $G/H$ \cite{Humphrey96_Course_GroupTheory}:
\begin{equation}\label{eq:Factor Group DEfinition}
f^{-1}(x)+f^{-1}(y)=\left(g_x+H\right)+\left(g_y+H\right)=(g_x+g_y)+ H.
\end{equation}
By combining the two expressions, we get an explicit formula for the group multiplication in terms of coset representatives: $x\cdot y = f(g_x+g_y)$. It is routine to check that this operation is associative and invertible, turning $X$ into a group, which we denote by $\mathcal{O}$. The neutral element of the group is the string $e$ in $X$ such that $e=f(0)=f(H)$, which we show explicitly:
\begin{equation}
x\cdot e = e\cdot x= \tilde{f}\left(f^{-1}(x)+f^{-1}(e)\right) = \tilde{f}\left(f^{-1}(x)+ H \right) = x
\end{equation}
The group $\mathcal{O}$ is manifestly finite and Abelian---the latter property is due to the fact that  the addition  (\ref{eq:Factor Group DEfinition}) is commutative. 

Lastly, it is straightforward to check that the oracle $f$  \textbf{\emph{is a group homomorphism}} from $G$ to $\mathcal{O}$: for any $g$, $h\in G$ let $x:=f(g)$ and $y:=f(h)$, we have 
\begin{align}\label{eq:HSP oracle is group homomorphism}
f(g+h)&=\tilde{f} \left(g+h+H\right)=\tilde{f} \left(\left(g+H\right)+\left(h+H\right)\right)=\tilde{f} \left(f^{-1}\left(x\right)+f^{-1}\left(y\right)\right)\\
&=x\cdot y = f(g)\cdot f(h).
\end{align}
It follows from the first isomorphism theorem in group theory \cite{Humphrey96_Course_GroupTheory} that $\mathcal{O}$ is isomorphic to the quotient group  $G/H$  via the map $\tilde{f}$.

\subsubsection*{The HSP quantum algorithm is a normalizer circuit}

We will now analyze the role of the different quantum gates used in algorithm \ref{alg:HSP} and see that they are examples of normalizer gates over the group $G\times \mathcal{O}$, where $\mathcal{O}$ is the oracular group that we have just introduced.  

The Hilbert space underlying the computation can be written as $\mathcal{H}_G\otimes \mathcal{H}_\mathcal{O}$ with the standard basis  $\left\{\ket{g,x}:g\in G,\,x\in\mathcal{O}\right\}$. associated with this group. We will initialize the ancillary  registers to the state $\ket{e}$, where $e=f(0)$ is the neutral element of the group; the total state at step 1 will be $\ket{0,e}$. The Fourier transforms in steps 1 and 3 are just partial QFTs over the group $G$, which are normalizer gates. The quantum state at the end of step 1 is $\sum \ket{g,e}$. 

Next, we look now at step 2 of the computation: 
\begin{equation}
\sum \ket{g,e} \quad \longrightarrow \quad  \frac{1}{\sqrt{|G|}}\sum_{g\in G}\ket{g,f(g)}.
\end{equation}
This step can be implemented by a normalizer automorphism gate defined as follows. Let $\alpha:G\times \mathcal{O}\rightarrow G\times \mathcal{O}$ be the function $\alpha(g, x)= (g, f(g)\cdot x)$. Using the fact that $f:G\rightarrow \mathcal{O}$ is a group homomorphism (\ref{eq:HSP oracle is group homomorphism}), it is easy to check that $\alpha$  is a group automorphism of $G\times \mathcal{O}$. Then the evolution at step 2 corresponds to the action of the automorphism gate  $U_\alpha$:
\begin{equation}
U_\alpha \sum_g \ket{g,e}= \sum_g \ket{\alpha(g,e)} = \sum_g \ket{g,f(g)\cdot e} = \sum_g \ket{g,f(g)}.
\end{equation}
Of course, our choice to begin the computation in the state $\ket{0,e}$ and to apply $U_\alpha$ in step 3 is only one possible way to implement the first three steps of the algorithm. We could have alternatively initialized the computation on some $\ket{0,0}$ state and used a slightly different gate $U_\textrm{add}=\sum \ket{g,x+f(g)}\bra{g,x}$ in step 3. The latter sequence of gates can however be regarded as an exact gate-by-gate simulation of the former, so that it is perfectly licit to call the algorithm a normalizer computation---at least up to steps 3 and 4.

Finally, note that in the last step of the algorithm  we measure the  $\mathcal{H}_G$ in the standard basis like a normalizer computation. Therefore, every step in the quantum algorithm \ref{alg:HSP} corresponds to one of the allowed operations in a normalizer circuit over $G\times \mathcal{O}$. This finishes the proof of theorem \ref{thm:HSP}.

\subsubsection*{The oracular group $\mathcal{O}$ is not a black-box group (but almost)}\label{sect:Oracular Group is not Black Box}

We ought to stress, at this point, that although theorem \ref{thm:HSP} shows that the Abelian HSP quantum algorithm is a normalizer computation over an Abelian group $G\times \mathcal{O}$, the oracular group $\mathcal{O}$ is not a black-box group (as defined in section \ref{sect:Black Box Groups}), since it is not clear how to compute the group operation (\ref{eq:Oracular Group Operation}), due to our lack of knowledge about the oracular function which defines the multiplication rule. Yet, even in the absence of an efficiently computable group operation, we regard it natural to call the Abelian HSP quantum algorithm a normalizer circuit over $G\times \mathcal{O}$. Our reasons are multi-fold. 

First,  there is a manifest strong similarity between the quantum circuit in algorithm \ref{alg:HSP} and the other normalizer circuits that we have studied in previous sections, which suggests that normalizer operations naturally capture the logic of the Abelian HSP quantum algorithm. 

Second, it is in fact possible to argue that, although $\mathcal{O}$ is not a black-box group, it behaves \emph{effectively} as a black-box group in the quantum algorithm. Observe that, although it is true that one cannot generally compute $x\cdot y$ for arbitrary $x,\,y\in \mathcal{O}$, it is indeed always possible to multiply any element $x$  by the neutral element $e$, since the computation is trivial in this case: $x\cdot e = e\cdot x = x$. Similarly, in the previous section, it is not clear at all how to implement the unitary transformation $U_\alpha\ket{g,x}=\ket{g,f(g)\cdot x}$ for arbitrary inputs. However, for the restricted set of inputs that we need in the quantum algorithm (which is just the state $\ket{e}$), it is trivial to implement the unitary, for in this case $U_\alpha\ket{g,e}=\ket{g,f(g)}$; since quantum queries to the oracle function are allowed (as in step 2 of the algorithm), the unitary can be simulated by such process, regardless of how it is implemented. Consequently, the circuit \emph{effectively} behaves as a normalizer circuit over a black-box group.

Third,  although the oracular model in the black-box normalizer circuit setting is slightly different from the one used in the Abelian HSP they are still \emph{remarkably close} to each other. To see this, let $x_i$ be the elements of $X$ defined as $x_i:=f(e_i)$ where $e_i$ is the bit string containing a 1 in the $i$th position and zeroes elsewhere. Since the $e_i$s form a generating set of $G$, the $x_i$s generate the group $\mathcal{O}$. Moreover, the value of the function $f$ evaluated on  an element $g=\sum g(i)e_i$ is $f(g)=x_1^{g(1)} x_2^{g(2)}\cdots  x_m^{g(m)}$, since $f$ is a group homomorphism. It follows from this expression that the group homomorphism is \emph{implicitly multiplying} elements of the group $\mathcal{O}$. We cannot use this property to multiply elements of $\mathcal{O}$ ourselves, since everything happens at the hidden level. However, this observation shows that the assuming that $f$ is computable is \emph{tightly related} to the assumption that we can multiply in $\mathcal{O}$, although slightly weaker. (See also the next section.)

Finally, we mention that this very last feature can be exploited to extend several of our main results, which we derive in the black-box setting, to the more-general ``HSP oracular group setting'' (although proofs become more technical). For details, we refer the reader to sections \ref{sect:Simulation}-\ref{sect:Complete Problems} and appendix  \ref{app:Extending}.

\subsubsection*{A connection to a result by Mosca and Ekert}

Prior to our work,  it was observed by Mosca and Ekert \cite{MoscaEkert98_The_HSP_and_Eigenvalue_Estimation,mosca_phd} that $f$ must have a hidden homomorphism structure, i.e. that $f$ can be decomposed as $\mathcal{E}\circ\alpha$ where $\alpha$ is a group homomorphism between $G$ and another Abelian group $Q\cong G/H$, and $\mathcal{E}$ is a one-to-one hiding function from $Q$ to the set $X$. In this decomposition, $\mathcal{E}$ hides the homomorphism structure of the oracle.

Our result differs from Mosca-Ekert's in that we show that $X$ \emph{itself} can always be viewed as a group, with a group operation that is induced by the oracle, with no need to know the decomposition $\mathcal{E}\circ\alpha$. 

It is possible to relate both results as follows.  Since both $Q$ and $\mathcal{O}$  are isomorphic to  $G/H$,  they are also mutually isomorphic. Explicitly, if  $\beta$ is an isomorphism from $Q$ to $G/H$ (this map depends on the particular decomposition  $f=\mathcal{E}\circ\alpha$), then  $Q$ and $\mathcal{O}$ are isomorphic via the map $\tilde{f}\circ \beta$.

\subsection{Decomposing finite Abelian groups}\label{sect:Group Decomposition}

As mentioned earlier, there is a quantum algorithm for decomposing Abelian groups, due to Cheung and Mosca \cite{mosca_phd,cheung_mosca_01_decomp_abelian_groups}. In this section, we will introduce this problem, and present a quantum algorithm that solves it, which uses only  black-box normalizer circuits supplemented with classical computation.  The algorithm we give is based on Cheung-Mosca's, but it reveals some additional  information about the structure of the black-box group. We will refer to it as the \emph{\textbf{extended Cheung-Mosca's algorithm}}.

\subsubsection*{The group decomposition problem}

In this work, we define the    \textbf{group decomposition} problem as follows. The input of the problem is a list  of generators $\alpha=( \alpha_1,\cdots,\alpha_k)$ of some Abelian black-box group $\mathbf{B}$. Our task is to return a \emph{group-decomposition table} for $\mathbf{B}$. A group-decomposition table is a tuple $(\alpha, \beta, A, B, c)$ consisting of the original string $\alpha$ and four additional elements:
\begin{enumerate}
\item[(a)] A new generating set $\beta={\beta_1,\ldots,\beta_\ell}$ with the property $\mathbf{B}=\langle \beta_1\rangle\oplus\cdots\oplus\langle \beta_\ell\rangle$. We will say that these new generators are \emph{linearly independent}.
\item[(b)] An integer vector $c$ containing the orders of the linearly independent generators $\beta_i$.
\item[(c)]  Two integer matrices $A$, $B$ that relate the old and new generators as follows:
\begin{equation}\label{eq:Matrix of Relationships}
\begin{pmatrix}
 \beta_1,\ldots,\beta_\ell
\end{pmatrix}=
\begin{pmatrix}
\alpha_1, 
\ldots
\alpha_k
\end{pmatrix} A,\qquad\quad \begin{pmatrix}
\alpha_1, 
\ldots
\alpha_k
\end{pmatrix}= \begin{pmatrix}
 \beta_1,\ldots,\beta_\ell
\end{pmatrix}B.
\end{equation}
This last equation should be read in multiplicative notation (as in e.g.\ \cite{Cohen00_Advanced_Topics_ComputationalNumber_Theory}), where ``vectors'' of group elements are right-multiplied by matrices as follows: given  the $i$th column $a_i$  of $A$ (for the left hand case), we have $\beta_{i}=(\alpha_1,\ldots,\alpha_k) a_i=\alpha_1^{a_i(1)}\cdots \alpha_k^{a_i(k)}$.
\end{enumerate}
Our definition of the group decomposition is more general than the one given in \cite{mosca_phd,cheung_mosca_01_decomp_abelian_groups}. In Cheung and Mosca's formulation, the task is to find just $\beta$ and $c$. The algorithm they give also computes the matrix $A$ in order to find the generators $\beta_i$ (cf.\ the next section). What is completely new in our formulation is that we ask in addition for the matrix $B$.

Note that a {\textbf{group-decomposition table}} $(\alpha,\beta, A, B, c)$ contains a lot of information about the group structure of  $\mathbf{B}$. First of all, the tuple elements (a-b)  tell us  that  $\mathbf{B}$ is isomorphic to a decomposed group $G=\DProd{c}{k}$. In addition, the matrices $A$ and $B$ provide us with an efficient method to re-write linear combinations of the original generators $\alpha_i$ as linear combinations of the new generators $\beta_j$ (and vice-versa). Indeed, equation (\ref{eq:Matrix of Relationships})  implies
\begin{align}\alpha_{1}^{x_1}\cdots \alpha_{k}^{x_k}&=  \begin{pmatrix}
\alpha_1, 
\ldots
\alpha_k
\end{pmatrix}x= \begin{pmatrix}
 \beta_1,\ldots,\beta_\ell
\end{pmatrix}(Bx) ,\quad\textnormal{  for any  $x\in\Z^k$,}\notag\\
\beta_{1}^{y_1}\cdots \beta_{1}^{y_\ell} &= \begin{pmatrix}
 \beta_1,\ldots,\beta_\ell
\end{pmatrix}y=
\begin{pmatrix}
\alpha_1, 
\ldots
\alpha_k
\end{pmatrix} (Ay),\quad\textnormal{  for any $y\in\Z^\ell$}.\notag
\end{align}
It follows that, for any given $x$, the integer string $y=Bx$ (which can be efficiently computed classically)  fulfills the condition $\alpha_{1}^{x_1}\cdots \alpha_{1}^{x_k}=\beta_{1}^{y_1}\cdots \beta_{1}^{y_\ell}$. (A symmetric argument proves the opposite direction.)

As we  discussed earlier in the introduction, the group decomposition problem is \emph{provably hard} for classical computers within the black-box setting, and it is at least \emph{as hard as} Factoring (or Order Finding) for matrix groups of the form $\Z_N^\times$ (the latter being polynomial-time reducible to group decomposition).\ It can be also shown that group decomposition is also at least as hard as computing discrete logarithms, a fact that we will use in the proof of theorems \ref{thm:Simulation}, \ref{thm:No Go Theorem}:
\begin{lemma}[\textbf{Multivariate discrete logarithms}]\label{lemma:Multivariate Discrete Log} Let $\beta_1,\ldots,\beta_\ell$ be  generators of some Abelian black-box group $\mathbf{B}$ with the  property
$\mathbf{B}=\langle \beta_1\rangle\oplus\cdots\oplus\langle \beta_\ell\rangle$. Then, the following generalized version of the discrete-logarithm problem is polynomial time reducible to group decomposition: for a given $\beta\in\mathbf{B}$, find an integer string $x$ such that $\beta_{1}^{x_1}\cdots \beta_{\ell}^{x_\ell}=\beta$. 
\end{lemma}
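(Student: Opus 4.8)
The plan is to give a polynomial-time Turing reduction that makes a single call to the group-decomposition oracle followed by classical integer linear algebra. First I would enlarge the input generating set: given the independent generators $\beta_1,\dots,\beta_\ell$ of $\mathbf{B}$ and the target $\beta\in\mathbf{B}$, form the (possibly redundant) list $\gamma=(\beta_1,\dots,\beta_\ell,\beta)$, which still generates $\mathbf{B}$, and feed $\gamma$ to the group-decomposition subroutine. By definition this returns a group-decomposition table $(\gamma,\delta,A,B,c)$: a fresh linearly independent generating set $\delta=(\delta_1,\dots,\delta_m)$ with $\mathbf{B}=\langle\delta_1\rangle\oplus\cdots\oplus\langle\delta_m\rangle$, the vector $c=(c_1,\dots,c_m)$ of orders $c_j=|\delta_j|$, and an integer matrix $B\in\Z^{m\times(\ell+1)}$ satisfying $(\beta_1,\dots,\beta_\ell,\beta)=(\delta_1,\dots,\delta_m)B$ in the multiplicative sense of (\ref{eq:Matrix of Relationships}).

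Next I would split $B$ into its first $\ell$ columns $B'\in\Z^{m\times\ell}$ and its last column $b\in\Z^{m}$, so that $\beta_i=(\delta_1,\dots,\delta_m)\,B'_{\cdot,i}$ for $i\le\ell$ while $\beta=(\delta_1,\dots,\delta_m)\,b$. Because $\mathbf{B}=\langle\delta_1\rangle\oplus\cdots\oplus\langle\delta_m\rangle$ with $|\delta_j|=c_j$, the assignment $\delta_1^{u_1}\cdots\delta_m^{u_m}\mapsto(u_1\bmod c_1,\dots,u_m\bmod c_m)$ is an isomorphism $\mathbf{B}\cong\DProd{c}{m}$; hence two formal products $(\delta)u$ and $(\delta)v$ agree in $\mathbf{B}$ iff $u\equiv v$ componentwise modulo $c$. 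Consequently an integer vector $x\in\Z^{\ell}$ solves $\beta_1^{x_1}\cdots\beta_\ell^{x_\ell}=\beta$ precisely when $B'x\equiv b\pmod{c}$, i.e.\ when there is $k\in\Z^{m}$ with
\[
\begin{pmatrix}B' & -\operatorname{diag}(c)\end{pmatrix}\begin{pmatrix}x\\ k\end{pmatrix}=b .
\]
This system is consistent: since $\beta\in\mathbf{B}=\langle\beta_1,\dots,\beta_\ell\rangle$ there already exists some integer $x_0$ with $\beta_1^{x_{0,1}}\cdots\beta_\ell^{x_{0,\ell}}=\beta$, and it yields a solution of the displayed Diophantine system. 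A particular integer solution $(x,k)$ can then be computed in classical polynomial time via the Hermite (or Smith) normal form of the coefficient matrix, after which $x$ is the desired output; correctness follows by reading the relations backwards, $\beta_1^{x_1}\cdots\beta_\ell^{x_\ell}=(\delta)(B'x)=(\delta)b=\beta$.

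The whole procedure consists of one oracle call plus polynomial-time postprocessing, which is exactly a polynomial-time reduction to group decomposition. The only step that is not pure bookkeeping with the change-of-basis data guaranteed by the definition of a group-decomposition table is the classical one: I would need to invoke (and cite) the polynomial-time solvability of systems of linear congruences / linear Diophantine equations and check that all entries stay of polynomial bit-size — here the moduli $c_j$ are at most $2^n$ and $B$ is produced by a polynomial-time algorithm, so the matrix above has polynomially many bits and the normal-form computation is efficient. I expect this bit-complexity check to be the main (and essentially only) obstacle; everything else is routine manipulation of the matrices $A,B,c$.
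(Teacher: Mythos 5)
Your proposal is correct and follows essentially the same route as the paper's proof: append $\beta$ to the generating set, make one call to the group-decomposition oracle, use the returned matrix $B$ to translate everything into the decomposed group $\DProd{c}{m}$, and then solve the resulting linear congruence $B'x\equiv b\pmod{c}$ classically (the paper phrases this last step as an instance of Group Membership in a finite Abelian group and cites an existing deterministic algorithm, which amounts to the same Smith/Hermite normal-form computation you describe).
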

\begin{proof}
Define a new set of generators  for $\mathbf{B}$  by  adding the element $\beta_{\ell+1}=\beta$ to the given set $\{\beta_i\}$. The array $\alpha':=(\beta_1,\ldots,\beta_\ell+1)$ defines an instance of Group Decomposition. Assume that a group decomposition table $(\alpha', (\beta'_1,\ldots,\beta_m'), A', B', c')$ for this instance of the problem is given to us. We can now use the columns $b_i'$ of the matrix $B'$ to re-write the previous generators $\beta_i$ in terms of the new ones:
\begin{equation}\label{inproof:MultiVariate Discrete Log}
\beta_i = (\beta_1,\ldots,\beta_{\ell+1}) e_i = \begin{pmatrix}
 \beta_1',\ldots,\beta_m'
\end{pmatrix} (B'e_i) =  \begin{pmatrix}
 \beta_1',\ldots,\beta_m'
\end{pmatrix} b_i'=
 \beta_1'^{b_i'(1)}\cdots\beta_m'^{b_i'(m)}.
\end{equation}
Here, $e_i$ denotes the integer vector with $e(i)=1$ and $e(j)=0$ elsewhere. Conditions (a-b) imply that the columns $b_i'$ can be treated as elements of the group $G=\DProd{c'}{m}$. Using this identification, the original discrete logarithm problem reduces to finding an integer string $x\in G$ such that $b_{\ell+1}'=(b_1',\ldots,b_\ell')x=\sum x(i)b_i'$ (now in additive notation). The existence of such an $x$ can be easily proven using  that the elements $\beta_1,\ldots,\beta_\ell$ generate $\mathbf{B}$: the latter guarantees the existence of an $x$ such that
\begin{equation}
\beta_{\ell+1}=(\beta_1,\ldots,\beta_{\ell}) x = \begin{pmatrix}
 \beta_1',\ldots,\beta_m'
\end{pmatrix} (b_1',\ldots,b_\ell') x =   \begin{pmatrix}
 \beta_1',\ldots,\beta_m'
\end{pmatrix} b_{\ell+1}',
\end{equation}
which implies $(b_1',\ldots,b_\ell') x \equiv   \begin{pmatrix}
 \beta_1',\ldots,\beta_m'
\end{pmatrix} b_{\ell+1}' \bmod{(c_1',\ldots,c_m')}$.
By finding such an  $x$, we can  solve the multivariate discrete problem, since $\beta_{1}^{x_1}\cdots \beta_{\ell}^{x_\ell}=\beta_1'^{b_{\ell+1}'(1)}\cdots\beta_m'^{b_{\ell+1}'(m)}=\beta_{\ell+1}=\beta$, due to    (\ref{inproof:MultiVariate Discrete Log}). Finally, note that we can find $x$  efficiently with existing deterministic classical algorithms for Group Membership in finite Abelian groups (cf.\ lemma 3 in \cite{BermejoVega_12_GKTheorem}).
\end{proof}
We highlight that, in order for the latter result to hold, it seems critical to use our formulation of group decomposition instead of Cheung-Mosca's.  Consider again the discrete-log problem over the group $\Z_p^\times$ (recall section \ref{sect:Discrete Log}). This group  $\Z_p^\times$ is cyclic of order $p-1$ and a generating element $a$ is given to us as part of the input of the discrete-log problem. Although  it is not known how to solve this problem efficiently, Cheung-Mosca's group decomposition problem (find some linearly independent generators and their orders) can be solved effortlessly in this case, by simply returning $a$ and $p-1$, since  $\langle a \rangle = \Z_p^\times \cong \Z_{p-1}$. The crucial difference is that Cheung-Mosca's algorithm returns a factorization $ \DProd{c}{\ell}$ of $\mathbf{B}$, but it cannot be used to convert elements between the two representations efficiently (one direction is easy; the other requires computing discrete logarithms). In our formulation, the matrices $A$, $B$ provide such a method.

\subsubsection*{Quantum algorithm for group decomposition}

We now present a quantum algorithm that solves the group decomposition problem.

\begin{algorithm}[\textbf{Extended Cheung-Mosca's algorithm}]
\label{alg:group decomposition}

\begin{alg_in} A list of generators $\alpha=(\alpha_1,\ldots,\alpha_k)$ of an Abelian black-box group $\mathbf{B}$\footnote{Cheung and Mosca's original presentation first used Shor's algorithm to decompose $B$ into Sylow $p$-subgroups, and then performed the group decomposition on these subgroups. As they commented, that step was not strictly necessary, although it would reduce the computational resources required. Also, their algorithm does not receive a list of generators as an input, but this is not a big difference since such a list can always be computed in probabilistic classical polynomial-time for a uniquely encoded black-box group.}.

\end{alg_in}

\begin{alg_out} A group decomposition table $(\alpha, \beta, A, B, c)$.

\end{alg_out}

\begin{enumerate}

\item  Use the order finding algorithm (comprising normalizer circuit over $\Z\times \mathbf{B}$ and classical postprocessing) to obtain the orders $d_i$ of the generators $\alpha_i$. Then, compute (classically) and store their least common multiplier $d=\mathrm{lcm}(d_1,\ldots,d_k)$.

\item  Define the function $f : \Z_{d}^k \rightarrow \mathbf{B}$ as $f(x) = \alpha_1^{x(1)} \cdots \alpha_k^{x(k)}$, which is a group homomorphism and hides the subgroup $\ker f$ (its own kernel). Apply the Abelian HSP algorithm to compute a set  of generators $h_1,\ldots,h_m$ of $\ker f$.  This round uses normalizer circuits over $\Z_d^k\times \mathbf{B}$ and classical post-processing (cf.\ section \ref{sect:Abelian HSPs}).

\item Given the generators $h_i$ of $\ker f$ one can classically compute a $k\times \ell$ matrix $A$ (for some $\ell$) such that $(\beta_1,\ldots,\beta_\ell)=(\alpha_1,\ldots,\alpha_k) A$ is a system of linearly independent generators   \cite[theorem 7]{cheung_mosca_01_decomp_abelian_groups}.  $\beta$, $A$ and the orders $c_i$ of the $\beta_i$s (computed again via an order-finding subroutine) will form part of the output.
\item Finally, we show how to classically compute a valid relationship matrix $B$. (This step is not part of Cheung-Mosca's original algorithm.) The problem reduces to finding a $k\times \ell $ integer matrix $X$ with two properties: 
\begin{itemize}
\item[(a)] $X$ is a solution to the equation $(\alpha_1,\ldots,\alpha_k)X=(\alpha_1,\ldots,\alpha_k)$. Equivalently, every column $x_i$ of $X$ is equal (modulo $d$) to some element of the coset $e_i+\ker{f}\subset\Z_d^k$.
\item[(b)] Every column $x_i$ is an element of the image of the matrix $A$.
\end{itemize}
It is easy to see  that a matrix $X$ fulfilling (a-b) always exists, since for any $\alpha_i$, there exists some $y_i$ such that $\alpha_i=(\beta_1,\ldots,\beta_\ell)y_i$ (because the $\beta_i$s generate the group). It follows that $\alpha_i=(\alpha_1,\ldots,\alpha_k) (Ay_i)$. Then, the matrix with columns $x_i=Ay_i$ has the desired properties.

$\quad$ Our existence proof for $X$ is constructive, and tells us that  $X$ can be computed in quantum polynomial time by solving a multivariate discrete logarithm problem (lemma \ref{lemma:Multivariate Discrete Log}). However,  we will use a more subtle efficient \emph{classical} approach to obtain $X$, by reducing the problem to a {\textbf{system of linear equations over Abelian groups}} \cite{BermejoVega_12_GKTheorem,BermejoLinVdN13_Infinite_Normalizers}. Let $H$ be a matrix formed column-wise by the generators $h_i$ of $\ker f$. By construction, the image of the map $H:\Z_d^m \rightarrow \Z_d^k$ fulfills $\mathrm{im} H =\ker{f}$. Properties (a-b) imply that the $i$th column $x_i$ of $X$ must be a particular solution to the equations $x_i= A y_i$ with $y_i\in \Z^\ell$ and $x_i = e_i + Hz_i \bmod{d}$, with $z_i\in \Z_d^m$. These equations can be equivalently written as a system of linear equations over $\Z^{m+\ell}$:
\begin{equation}
\begin{pmatrix}
A & -H
\end{pmatrix}\begin{pmatrix}
y_i \\ z_i
\end{pmatrix} = e_i\mod{d},\qquad (y_i,z_i)\in \Z_d^{m}\times \Z^\ell,
\end{equation}
which can be  solved in classical polynomial time using e.g.\ algorithms from \cite{BermejoLinVdN13_Infinite_Normalizers}. Then, the matrix $X$ can be constructed column wise taking $x_i=Ay_i$.

$\quad$Finally, given such an $X$, it is easy to find a valid  $B$ by computing a Hurt-Waid integral pseudo-inverse $A^\#$ of $A$ \cite{HurtWaid70_Integral_Generalized_Inverse,BowmanBurget74_systems-Mixed-Integer_Linear_equations}:
\begin{equation}
\alpha= \alpha X = \alpha (A A^\#) X = (\alpha A ) (A^\#X) =(\beta_1,\ldots,\beta_\ell) (A^\# X).
\end{equation}
In the third step, we used that $A^\#$ acts as the inverse of $A$ on inputs $x\in \Z^k$ that live in the image of $A$ \cite{HurtWaid70_Integral_Generalized_Inverse}. Since integral pseudo-inverses can be computed efficiently using the Smith normal form (see e.g.\ our dicussion in \cite[appendix~D]{BermejoLinVdN13_Infinite_Normalizers}), we finally set $B:=A^\# X$. 
\end{enumerate}
\end{algorithm}

\section{Simulation of  black-box normalizer circuits}\label{sect:Simulation}

Our results so far show that the computational power of normalizer circuits over black-box groups (supplemented with classical pre- and post- processing) is \emph{strikingly high}: they can solve several problems believed to be classically intractable and render the RSA, Diffie-Hellman, and elliptic curve public-key cryptosystems vulnerable. In contrast,  standard normalizer circuits, which are associated with Abelian groups that are \emph{explicitly decomposed}, can be efficiently simulated classically, by exploiting a generalized stabilizer formalism \cite{VDNest_12_QFTs,BermejoVega_12_GKTheorem,BermejoLinVdN13_Infinite_Normalizers} over Abelian groups.

It is natural to wonder at this point where the computational  power of black-box normalizer circuits originates. In this section, we will argue that the hardness of simulating black-box normalizer circuits resides \emph{precisely} in the hardness of decomposing black-box Abelian groups. An equivalence is suggested by the fact that we can use these circuits to solve the group decomposition problem and, in turn, when the group is decomposed, the techniques in \cite{VDNest_12_QFTs,BermejoVega_12_GKTheorem,BermejoLinVdN13_Infinite_Normalizers} render these circuits classically simulable. In this sense, then, the \emph{quantum speedup }of such circuits appears to be completely encapsulated in the group decomposition algorithm. This intuition can be made precise and be stated as a theorem.
\begin{theorem}[\textbf{Simulation of black-box normalizer circuits}]\label{thm:Simulation} Black-box normalizer circuits can be efficiently simulated classically using the stabilizer formalism over Abelian groups  \cite{VDNest_12_QFTs,BermejoVega_12_GKTheorem,BermejoLinVdN13_Infinite_Normalizers} if a subroutine for solving the group-decomposition problem is provided as an oracle.
\end{theorem}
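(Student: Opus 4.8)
The plan is to show that, given an oracle for group decomposition, every step of a black-box normalizer circuit over $G=\Z^{a}\times\T^{b}\times\DProd{N}{c}\times\mathbf{B}$ can be tracked efficiently using the generalized stabilizer formalism of \cite{VDNest_12_QFTs,BermejoVega_12_GKTheorem,BermejoLinVdN13_Infinite_Normalizers}. First I would invoke the group-decomposition oracle on (a generating set of) $\mathbf{B}$ to obtain a group-decomposition table $(\alpha,\beta,A,B,c)$; by the Fundamental Theorem this yields an explicit isomorphism $\varphi:\mathbf{B}\xrightarrow{\ \sim\ }\DProd{c}{\ell}$ together with, crucially, the matrices $A,B$ which let us \emph{compute} $\varphi$ and $\varphi^{-1}$ on concrete elements in classical polynomial time (this is exactly what the extended formulation in section \ref{sect:Group Decomposition} buys us over Cheung--Mosca's). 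Via $\varphi$ we transport the whole computation to the fully-decomposed group $G'=\Z^{a}\times\T^{b}\times\DProd{N}{c}\times\DProd{c}{\ell}$, whose Hilbert space carries the tensor-product structure (\ref{eq:Elementary Group}) for which the stabilizer/Pauli formalism of \cite{BermejoLinVdN13_Infinite_Normalizers} is defined.

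The main work is to check that each normalizer gate of the original circuit is mapped, under this change of basis, to a normalizer gate over $G'$ whose classical description can be produced efficiently. The input standard-basis state $\ket{g}$ becomes a stabilizer state with an explicit stabilizer group; a QFT (partial or full, over $\Z$, $\T$, or some $\Z_{N_i}$ — recall QFTs never touch $\mathcal{H}_\mathbf{B}$) already has a known action on stabilizer groups from the earlier papers. For an automorphism gate $U_\alpha$ or quadratic phase gate $D_\xi$, the subtlety is that $\alpha$ and $\xi$ are only given to us as black boxes / efficiently computable rational functions querying the group oracle; I would argue that, conjugating by $\varphi$, the induced map $\alpha'=\varphi\circ\alpha\circ\varphi^{-1}$ on $G'$ is again a continuous rational automorphism, and that the finitely many evaluations of $\alpha',\xi'$ needed to read off its matrix/bicharacter data (on a generating set of $G'$, using the precision bound $n_\mathrm{out}$ and the efficiency assumptions of section \ref{sect:circuit model}) can each be performed by composing the oracle queries of $\alpha$ with the polynomial-time computation of $\varphi,\varphi^{-1}$ from $A,B$. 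Having extracted this data, the update rules of \cite{BermejoLinVdN13_Infinite_Normalizers} propagate the stabilizer description through the gate. After the last gate one reads off the outcome distribution of the final measurement in the designated basis from the final stabilizer group, exactly as in the decomposed-group case, and pulls it back through $\varphi$ to the original black-box labels.

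The step I expect to be the main obstacle is handling the black-box/oracular data cleanly: one must be careful that ``efficiently computable'' for $\alpha$ and $\xi$ — defined in section \ref{sect:circuit model} in terms of classical circuits querying the group oracle on rational inputs of controlled bit-size — genuinely survives conjugation by $\varphi$, i.e.\ that $\varphi^{-1}$ never blows up the required precision and that the composed object still fits the rational/efficient/precision-bounded template, so that the stabilizer-formalism machinery of \cite{BermejoLinVdN13_Infinite_Normalizers} applies verbatim. A related point requiring care is that the group-decomposition oracle is only invoked once, at the start, on $\mathbf{B}$ itself; all subsequent manipulations are classical polynomial-time plus bookkeeping, so the total overhead is polynomial and the simulation indeed runs ``step by step.'' The infinite factors $\Z$ and $\T$ are handled exactly as in \cite{BermejoLinVdN13_Infinite_Normalizers} and contribute no new difficulty, since $\mathbf{B}$ and the QFTs do not interact.
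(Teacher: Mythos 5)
Your overall architecture matches the paper's proof: decompose $\mathbf{B}$ with the oracle, transport the computation to the decomposed group, extract normal forms for the automorphism and quadratic phase gates by evaluating them on (scaled) generators, and then hand everything to the simulation theorem of \cite{BermejoLinVdN13_Infinite_Normalizers}. However, there is a genuine gap at exactly the point you flag as "bookkeeping": the computation of $\varphi^{-1}$. You claim that the matrices $A,B$ from a single group-decomposition table let you compute $\varphi^{-1}$ on concrete elements in classical polynomial time, and that consequently the oracle is invoked only once. This is not correct. The matrices $A,B$ only translate between \emph{words in the two generating sets} $\{\alpha_i\}$ and $\{\beta_j\}$; they do not tell you how to take an arbitrary element $\mathbf{b}\in\mathbf{B}$, handed to you as a raw bit string (which is precisely what the black-box automorphism gate outputs), and express it as $\beta_1^{y_1}\cdots\beta_\ell^{y_\ell}$. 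That inversion is the multivariate discrete logarithm problem, which is provably hard classically in the black-box setting and is in particular at least as hard as ordinary discrete log over $\Z_p^\times$ — so it cannot be absorbed into "classical polynomial-time plus bookkeeping."

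The paper closes this gap with lemma \ref{lemma:Multivariate Discrete Log}: the multivariate discrete log reduces to group decomposition, but the reduction requires calling the decomposition oracle on a \emph{fresh instance} in which the element to be inverted is adjoined to the generating set. Hence the simulation makes polynomially many oracle calls throughout the circuit (one batch per element whose $\varphi^{-1}$ is needed when reading off the matrix representation of each $U_\alpha$), not a single call at the start. Without this ingredient your de-black-boxing step fails: you can map the unit vectors $e_j$ of the decomposed group \emph{into} $\mathbf{B}$ via $\varphi$ and apply the black-box gate, but you cannot pull the result back to the decomposed encoding, so you never obtain the matrix of $\alpha'$. A secondary, more minor omission is that evaluating an automorphism on the generators $e_j$ only determines the entries of its matrix modulo $\mathrm{char}(G_i)$, which is vacuous for the $\T\T$ block; the paper fixes this by evaluating on $e_j/\alpha$ for $\alpha>2D$ with $D$ the precision bound, and an analogous polarization trick $x^{\transpose}My\equiv q(x+y)-q(x)-q(y)$ is needed for the quadratic phase gates. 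Your appeal to "the precision bound and efficiency assumptions" gestures at this but does not supply the actual mechanism.
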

The proof of this theorem is the subject of section \ref{app:Simulation proof} in the appendix.

Since normalizer circuits can solve the group decomposition problem (section \ref{sect:Group Decomposition}), we obtain that this problem is complete for the associated normalizer-circuit complexity class, which we now define.
\begin{definition}[\textbf{Black-Box Normalizer}] The complexity class \textbf{Black-Box Normalizer} is the set of oracle problems that can be solved with bounded error by at most polynomially many rounds of efficient black-box normalizer circuits (as defined in section \ref{sect:circuit model}), with polynomial-sized classical computation interspersed between. In other words, if $N$ is an oracle that given an efficient (poly-sized) black-box normalizer circuit as input, samples from its output distribution, then
\begin{equation}
\text{\textbf{Black-Box Normalizer}} = BPP^{N}.
\end{equation}
\end{definition}

\begin{corollary}[\textbf{Completeness of group decomposition}]\label{corollary:Group Decomposition is complete}
Group decomposition is a complete problem for the complexity class \textnormal{\textbf{Black-Box Normalizer}} under classical polynomial-time Turing reductions.
\end{corollary}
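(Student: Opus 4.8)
The plan is to establish the two directions of completeness separately, both under classical polynomial-time Turing reductions. First, I would show that group decomposition is \emph{contained in} \textbf{Black-Box Normalizer}, i.e.\ that it can be solved by polynomially many efficient black-box normalizer circuits with classical pre- and post-processing. This is exactly what the extended Cheung-Mosca algorithm (algorithm \ref{alg:group decomposition}) does: steps 1 and 3 invoke the order-finding subroutine, which by theorem \ref{thm:Order Finding} consists of normalizer circuits over $\Z\times\mathbf{B}$ together with classical post-processing; step 2 invokes the Abelian HSP algorithm on $f:\Z_d^k\to\mathbf{B}$, which by theorem \ref{thm:HSP} (and the discussion of the ``HSP oracular group setting'') is a normalizer circuit over $\Z_d^k\times\mathbf{B}$; and step 4 is purely classical linear algebra over Abelian groups. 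Hence the whole algorithm is an interleaving of efficient black-box normalizer circuits and classical polynomial-time computation, which is precisely a $BPP^N$ computation. One subtle point to check is that the ``black-box group'' in the HSP round here has an \emph{efficiently computable} group operation (it is $\mathbf{B}$ itself), so no recourse to the oracular-group subtleties of section \ref{sect:Oracular Group is not Black Box} is strictly needed, keeping the argument clean.

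Second, I would show \textbf{Black-Box Normalizer} $\subseteq BPP^{\textsc{GroupDecomp}}$, i.e.\ that every problem solvable by polynomially many efficient black-box normalizer circuits plus classical processing can be solved classically in polynomial time given an oracle for group decomposition. This is where theorem \ref{thm:Simulation} does the heavy lifting: it states that a \emph{single} black-box normalizer circuit can be simulated step-by-step (via the stabilizer formalism of \cite{VDNest_12_QFTs,BermejoVega_12_GKTheorem,BermejoLinVdN13_Infinite_Normalizers}) in classical polynomial time, given a group-decomposition oracle, and in particular one can sample from its output distribution. A $BPP^N$ algorithm makes only polynomially many calls to $N$, each on an efficient (poly-sized) normalizer circuit, with poly-time classical computation in between; replacing each oracle call to $N$ by the classical simulation from theorem \ref{thm:Simulation} (which itself calls the group-decomposition oracle polynomially many times) yields an overall classical polynomial-time procedure with a group-decomposition oracle. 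Composing polynomially many such simulations preserves polynomial running time and only polynomially many oracle queries, so the inclusion holds.

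Combining the two inclusions with the fact that group decomposition lies in \textbf{Black-Box Normalizer} gives that it is \textbf{Black-Box Normalizer}-hard (from the second direction, since any problem in the class reduces to it via the simulation) and in the class (from the first), hence complete, under classical polynomial-time Turing reductions. The main obstacle is really theorem \ref{thm:Simulation} itself, whose proof is deferred to the appendix; granting it, the corollary is a bookkeeping argument about composing $BPP$-style oracle reductions and checking that the extended Cheung-Mosca algorithm genuinely fits the definition of the class. A secondary point worth being careful about is the reduction \emph{type}: because a normalizer computation may adaptively choose later circuits based on earlier measurement outcomes and classical processing, one should phrase everything as Turing (not many-one) reductions, which is exactly what the corollary claims; and one should note that lemma \ref{lemma:Multivariate Discrete Log} (multivariate discrete log reduces to group decomposition) is what guarantees the classical post-processing steps invoked inside the simulation can themselves be discharged with the group-decomposition oracle.
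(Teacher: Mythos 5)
Your proposal is correct and follows essentially the same route as the paper: membership in \textbf{Black-Box Normalizer} via the extended Cheung-Mosca algorithm of section \ref{sect:Group Decomposition}, and hardness via the step-by-step classical simulation of theorem \ref{thm:Simulation} given a group-decomposition oracle, composed over the polynomially many (possibly adaptive) circuit invocations of a $BPP^{N}$ computation. The extra care you take about the Turing (rather than many-one) reduction type and about the HSP round in Cheung-Mosca being a genuine hidden kernel problem over $\mathbf{B}$ is consistent with, and slightly more explicit than, the paper's own brief argument.
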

We stress that theorem \ref{thm:Simulation}  tells us even more than the completeness of group decomposition. As we discussed in the introduction, an oracle for group decomposition gives us an efficient classical algorithm to simulate Shor's factoring and discrete-log algorithm (and all the others) \emph{step-by-step} with a stabilizer-picture approach ``à la Gottesman-Knill''.

We also highlight that theorem \ref{thm:Simulation} can be restated as a \emph{no-go theorem} for finding new quantum algorithms based on black-box normalizer circuits.
\begin{theorem}[\textbf{No-go theorem for new quantum algorithms}]\label{thm:No Go Theorem} It is not possible to find ``fundamentally new'' quantum algorithms within the class of black-box normalizer circuits studied in this work, in the sense that any new algorithm would be efficiently simulable  using the extended Cheung-Mosca algorithm and classical post-processing.
\end{theorem}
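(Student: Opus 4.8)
The plan is to derive this statement as an almost immediate consequence of Theorem \ref{thm:Simulation} together with the extended Cheung--Mosca algorithm (Algorithm \ref{alg:group decomposition}); indeed the theorem is essentially a ``read algorithmically'' restatement of the completeness result in Corollary \ref{corollary:Group Decomposition is complete}. First I would fix the precise meaning of ``a new quantum algorithm within the class of black-box normalizer circuits'': by the definition of the model in Section \ref{sect:circuit model}, any such algorithm $\mathcal{A}$ is a classical polynomial-time procedure that makes polynomially many calls to black-box normalizer circuits over groups of the form $G=\Z^{a+b}\times\DProd{N}{c}\times\mathbf{B}$, with classical pre- and post-processing interspersed, i.e.\ $\mathcal{A}$ solves some problem in \textbf{Black-Box Normalizer}. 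The goal is then to exhibit a procedure that solves the \emph{same} problem using only the extended Cheung--Mosca algorithm plus classical post-processing.

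The main step is the following reduction. Given an instance, I would first run Algorithm \ref{alg:group decomposition} on a generating set of each black-box group $\mathbf{B}$ that occurs in $\mathcal{A}$, obtaining a group-decomposition table $(\alpha,\beta,A,B,c)$ for it. Such a table supplies (i) an explicit isomorphism $\mathbf{B}\cong\DProd{c}{k}$, (ii) the orders of the independent generators, and (iii)---via the matrices $A,B$ and the multivariate discrete-logarithm reduction of Lemma \ref{lemma:Multivariate Discrete Log}---an efficient classical way to translate group elements back and forth between the two presentations. After this preprocessing, every group $G$ appearing in $\mathcal{A}$ is \emph{effectively an explicitly decomposed} Abelian group of the kind handled in \cite{VDNest_12_QFTs,BermejoVega_12_GKTheorem,BermejoLinVdN13_Infinite_Normalizers}. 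Now I would invoke Theorem \ref{thm:Simulation}: with a group-decomposition subroutine available, each individual black-box normalizer circuit called by $\mathcal{A}$ is simulated \emph{gate by gate} classically, with only polynomial overhead, using the generalized stabilizer formalism. Feeding the classically sampled outcomes back into the classical glue of $\mathcal{A}$ reproduces the whole computation; since extended Cheung--Mosca runs in polynomial time and the stabilizer simulation adds only polynomial overhead, the replacement is efficient. Hence $\mathcal{A}$ contains no algorithmic ingredient beyond extended Cheung--Mosca plus classical post-processing, which is exactly the assertion of the theorem.

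The hard part will not be this composition---Theorem \ref{thm:Simulation} already does the heavy lifting---but rather making sure the argument is non-circular and covers every case. Two points deserve care. First, the extended Cheung--Mosca algorithm is itself built from black-box normalizer circuits (over $\Z\times\mathbf{B}$ and $\Z_d^k\times\mathbf{B}$), so one must check that invoking it does not secretly require a decomposition oracle for a group we have not yet decomposed: it does not, because in those circuits the only black-box factor is $\mathbf{B}$ itself while the ancilla factors $\Z$, $\Z_d^k$ are already explicitly decomposed, and $\mathbf{B}$ is precisely the group whose decomposition the algorithm outputs---so the bootstrapping terminates and no genuine recursion occurs. Second, I would note that the statement as phrased concerns black-box normalizer circuits in the strict sense of Section \ref{sect:circuit model}; the Abelian HSP algorithm of Theorem \ref{thm:HSP} lives over an \emph{oracular} group $\mathcal{O}$ that is not literally a black-box group, so to include it (and thereby cover essentially all the algorithms surveyed in Section \ref{sect:Quantum Algorithms}) one must additionally invoke the extensions of Theorem \ref{thm:Simulation} to the oracular-group setting discussed in Section \ref{sect:Abelian HSPs} and Appendix \ref{app:Extending}. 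Finally, I would emphasize the intended reading of the conclusion: any \emph{exponential} speedup achievable in this model is already achievable by running extended Cheung--Mosca and performing classical post-processing, so no ``fundamentally new'' quantum algorithm---one not reducible to this recipe---can exist here; the argument says nothing about polynomial speedups, nor about the modified settings discussed in Section \ref{sect:Discussion}.
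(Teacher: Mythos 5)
Your proposal is correct and follows essentially the same route as the paper, which presents this theorem as a direct restatement of Theorem \ref{thm:Simulation} combined with the fact that the extended Cheung--Mosca algorithm (Algorithm \ref{alg:group decomposition}) instantiates the required group-decomposition oracle, i.e.\ it is a corollary of the completeness of group decomposition (Corollary \ref{corollary:Group Decomposition is complete}). The two caveats you raise (non-circularity of the bootstrapping, and the need for the Appendix \ref{app:Extending} extension to cover the oracular-group HSP case) are consistent with the paper's own remarks and add useful precision.
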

This theorem tells us that  black-box normalizer circuits cannot give exponential speedups over classical circuits that are not already covered by the extended Cheung-Mosca algorithm; the theorem may thus have applications to algorithm design. 

Note, however, that this no-go theorem says nothing about other possible \emph{polynomial} speed-ups for black-box normalizer circuits; there may well be other normalizer circuits that are polynomially faster, conceptually simpler, or easier to implement than the extended Cheung-Mosca algorithm. Our theorem neither denies that investigating black-box normalizer could be of pedagogical or practical value if, e.g., this led  to new interesting complete problems for the class \textbf{Black-Box Normalizer}.

Finally, we note that theorem \ref{thm:Simulation}  can be extended to  the general Abelian hidden subgroup problem to show that the quantum algorithm for the Abelian HSP becomes efficiently classically simulable if an algorithm for decomposing the oracular group $\mathcal{O}$ is given to us (cf.\ section \ref{sect:Abelian HSPs} and refer to appendix \ref{app:Extending} for a proof). We discuss some implications of this fact in the next sections.

\section{Universality of short quantum circuits}\label{sect:Universality}

Since all problems in \textbf{Black-Box Normalizer} are solvable by the extended Cheung-Mosca quantum algorithm (supplemented with classical processing), the structure of said quantum algorithm allows us to state the following:
\begin{theorem}[\textbf{Universality of short normalizer circuits}]\label{thm:Universality Short Circuits}
Any problem in the class \textbf{\emph{Black-Box Normalizer}} can be solved by a quantum algorithm composed of polynomially-many rounds of \emph{short} normalizer circuits, each with at most a \textbf{constant} number of normalizer gates, and additional classical computation. More precisely, in every round, normalizer circuits containing two quantum Fourier transforms and one automorphism gate (and no quadratic phase gate) are already sufficient.
\end{theorem}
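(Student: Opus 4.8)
The plan is to combine the completeness result of Corollary~\ref{corollary:Group Decomposition is complete} with an explicit gate count of the extended Cheung-Mosca algorithm (Algorithm~\ref{alg:group decomposition}). By Corollary~\ref{corollary:Group Decomposition is complete}, every problem in \textbf{Black-Box Normalizer} reduces, under a classical polynomial-time Turing reduction, to polynomially many instances of group decomposition. Hence it is enough to exhibit an implementation of Algorithm~\ref{alg:group decomposition} in which \emph{every} quantum circuit is a normalizer circuit containing at most two QFTs, one automorphism gate, and no quadratic phase gate, with only classical polynomial-time computation interspersed between such circuits. Since composing a classical reduction with polynomially many such rounds again gives polynomially many such rounds, this yields the theorem.

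The second step is to inspect the quantum subroutines of Algorithm~\ref{alg:group decomposition}: steps~1 and~3 invoke the order-finding routine (Algorithm~\ref{alg:Order Finding Finite precision}), step~2 invokes the Abelian HSP routine (Algorithm~\ref{alg:HSP}) for the black-box homomorphism $f(x)=\alpha_1^{x(1)}\cdots\alpha_k^{x(k)}$, and the remaining parts of steps~3--4 are purely classical linear algebra (lcm, computing $A$, solving a system of linear equations over Abelian groups, a Hurt--Waid integral pseudo-inverse). For order finding, the quantum part prepares an (approximate) Fourier-basis input $\ket{0}\ket{1}$ with $0\in\T$, $1\in\mathbf{B}$ — a legitimate input in the model of Section~\ref{sect:circuit model} — then applies the partial QFT $\mathcal{F}_\T$, the automorphism gate $c\mbox{\,-}V_a$, which is a genuine continuous group automorphism of $\Z\times\mathbf{B}$ by Eq.~(\ref{eq:ModExp is Automorphism}) in the proof of Theorem~\ref{thm:Order Finding}, and then the partial QFT $\mathcal{F}_\Z$ followed by a measurement. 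The intermediate measurement of $\mathcal{H}_\mathbf{B}$ in Algorithm~\ref{alg:Order Finding infinite precision} acts on a register disjoint from the one affected by $\mathcal{F}_\Z$, so by the deferred-measurement principle it can be postponed to the final read-out without altering the output statistics of $\mathcal{H}_\Z$; this gives a single circuit with at most two QFTs, one automorphism gate, and no quadratic phase gate. Similarly, the quantum part of Algorithm~\ref{alg:HSP} prepares $\ket{0}\ket{e}$, applies the QFT over $G$, applies the automorphism gate $U_\alpha$ of Theorem~\ref{thm:HSP} (a bona fide automorphism here since $f$ is an explicitly computable homomorphism), applies the QFT over $G$ again, and measures — again two QFTs, one automorphism gate, no quadratic phase gate.

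Assembling the pieces, each round of the extended Cheung-Mosca algorithm is either a classical polynomial-time computation or one of the above short normalizer circuits, each possibly repeated polynomially many times for probability amplification (which only multiplies the round count by a polynomial), with classical post-processing in between. The number of rounds is polynomial in the input size and in the precision level $\log d$, the latter being polynomial by the efficient-precision analysis of Section~\ref{sect:Normalizer circuits over blackbox groups}. Combined with the classical reduction from Corollary~\ref{corollary:Group Decomposition is complete}, this establishes that every problem in \textbf{Black-Box Normalizer} is solved by polynomially many rounds of short normalizer circuits of the stated form.

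The step I expect to require the most care is essentially bookkeeping rather than a genuine obstacle: verifying that \emph{no} quadratic phase gate is ever invoked — i.e.\ that the only non-Fourier ingredients of both building blocks are the automorphism gates $c\mbox{\,-}V_a$ and $U_\alpha$ — and arguing cleanly that the intermediate measurement in order finding commutes past the remaining gate, so that each round is a single short circuit and not two. Both facts are already implicit in the proofs of Theorems~\ref{thm:Order Finding} and~\ref{thm:HSP}; what remains is to spell out the deferred-measurement argument and to track the polynomial blow-up in the number of rounds through the composition of the reduction of Corollary~\ref{corollary:Group Decomposition is complete} with the (bounded-depth) expansion of Algorithm~\ref{alg:group decomposition}.
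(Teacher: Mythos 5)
Your proposal is correct and follows essentially the same route as the paper: the paper's own proof simply invokes the completeness of group decomposition (Corollary~\ref{corollary:Group Decomposition is complete}) together with the observation that the extended Cheung--Mosca algorithm already has the required structure of polynomially many rounds, each containing two QFTs and one automorphism gate. Your additional bookkeeping (verifying that no quadratic phase gate appears and deferring the intermediate measurement in order finding) just makes explicit what the paper leaves implicit.
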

\begin{proof}
This result follows immediately form the fact that group decomposition is complete for this class (theorem \ref{corollary:Group Decomposition is complete}) and from the structure of the extended Cheung-Mosca quantum algorithm with this problem, which has precisely this structure. 
\end{proof}
Similarly to theorem \ref{thm:Simulation}, theorem \ref{thm:Universality Short Circuits} can be extended to the general Abelian HSP  setting. For details, we refer the reader to appendix \ref{app:Extending}.

We find the latter result is insightful, in that it actually explains a somewhat intriguing feature present in Deutsch's, Simon's, Shor's and virtually all known quantum algorithms for solving Abelian hidden subgroup problems: they all contain at most two quantum Fourier transforms! Clearly, it follows from this theorem than no more than two are enough. 

Also, theorem \ref{thm:Universality Short Circuits} tells us that it is actually pretty \emph{useless} to use logarithmically or polynomially long  sequences of quantum Fourier transforms for solving Abelian hidden subgroup problems, since just two of them suffice\footnote{This last comment does not imply that building up sequences of Fourier transforms is useless in general. On the contrary, this can be actually be useful, e.g., in QMA amplification  \cite{Nagaj2009_Fast_Amplification_QMA}.}. In this sense, the Abelian HSP quantum algorithm uses an \emph{asymptotically optimal} number of quantum Fourier transforms. Furthermore, the normalizer-gate depth of this algorithm is optimal in general.
 
\section{Other Complete problems}\label{sect:Complete Problems}

We end this paper by giving two other complete problems for the complexity class \textbf{Black Box Normalizer}.
\begin{theorem}[\textbf{Hidden kernel problem is complete}]\label{thm:Hidden Kernel Problem} Let the Abelian hidden kernel problem (Abelian HKP) be the subcase of the hidden subgroup problem where the oracle function $f$ is a group homomorphism from a group of the form  $G=\Z^a\times\DProd{N}{b}$ into a black-box group $\mathbf{B}$. This problem is complete for \textbf{\emph{Black Box Normalizer}} under polynomial-time Turing reductions.
\end{theorem}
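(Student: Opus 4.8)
The plan is to show containment in \textbf{Black Box Normalizer} and hardness for the class separately. For \emph{containment}, observe that the Abelian HKP is a special case of the Abelian hidden subgroup problem in which the oracle $f:G\to\mathbf{B}$ is already a group homomorphism into a genuine black-box group, with $G=\Z^a\times\DProd{N}{b}$ explicitly decomposed. By Theorem~\ref{thm:HSP}, the generic Abelian HSP quantum algorithm (Algorithm~\ref{alg:HSP}, or its infinite-group variant as in the order-finding case of Section~\ref{sect:Factoring}) is a normalizer circuit over $G\times\mathcal{O}$, where $\mathcal{O}\cong G/H$ is the group structure induced on the image. In the HKP subcase the hidden subgroup $H$ is exactly $\ker f$, and crucially the induced group $\mathcal{O}\cong G/\ker f\hookrightarrow\mathbf{B}$ \emph{is} a black-box group: its group operation can be computed by multiplying coset representatives inside $\mathbf{B}$ using the group oracle, so the slippery ``oracular group'' issue of Section~\ref{sect:Oracular Group is not Black Box} disappears. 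Hence Algorithm~\ref{alg:HSP} together with its classical post-processing is a bona fide black-box normalizer computation solving the HKP, placing it in \textbf{Black Box Normalizer}.

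For \emph{hardness}, since group decomposition is complete for \textbf{Black Box Normalizer} (Corollary~\ref{corollary:Group Decomposition is complete}), it suffices to give a polynomial-time Turing reduction from the group decomposition problem to the Abelian HKP. I would recover the extended Cheung--Mosca pipeline of Algorithm~\ref{alg:group decomposition} using HKP oracle calls in place of the internal quantum subroutines. Given generators $\alpha=(\alpha_1,\dots,\alpha_k)$ of a black-box group $\mathbf{B}$: first obtain the orders $d_i$ of the $\alpha_i$ by calling the HKP oracle on the homomorphisms $\Z\to\mathbf{B}$, $n\mapsto\alpha_i^{\,n}$ (whose kernel is $|\alpha_i|\Z$), and set $d=\mathrm{lcm}(d_1,\dots,d_k)$; then call the HKP oracle on the homomorphism $f:\Z_d^{\,k}\to\mathbf{B}$, $f(x)=\alpha_1^{x(1)}\cdots\alpha_k^{x(k)}$, which is exactly the HKP instance whose solution is a generating set $h_1,\dots,h_m$ of $\ker f$. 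From $\ker f$ one computes the matrix $A$ and the linearly independent generators $\beta=\alpha A$ classically (\cite[Theorem~7]{cheung_mosca_01_decomp_abelian_groups}), their orders $c_i$ by more HKP calls as above, and finally the relation matrix $B$ by the purely classical linear-algebra-over-groups step (step~4 of Algorithm~\ref{alg:group decomposition}, solving $(A\ \ {-}H)(y_i,z_i)^{\transpose}=e_i$ and composing with a Hurt--Waid integral pseudo-inverse $A^\#$). This assembles the full group-decomposition table $(\alpha,\beta,A,B,c)$ with only polynomially many HKP queries and polynomial-time classical work, so group decomposition $\le_T$ Abelian HKP, and thus HKP is \textbf{Black Box Normalizer}-hard.

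The main obstacle I anticipate is the \emph{containment} direction, not the hardness one: one must verify carefully that the HKP subcase genuinely removes the gap identified in Section~\ref{sect:Oracular Group is not Black Box}, i.e.\ that when $f$ is a homomorphism into $\mathbf{B}$ the induced group $\mathcal{O}$ is efficiently computable (multiplication of cosets $g_x\ker f$ is implemented by multiplying the representatives in $\mathbf{B}$, which the group oracle does), and that the automorphism gate $U_\alpha:\ket{g,x}\mapsto\ket{g,f(g)\cdot x}$ used in Algorithm~\ref{alg:HSP} is an \emph{efficiently computable rational} normalizer gate in the precise sense of Section~\ref{sect:circuit model} — here $f(g)=\alpha_1^{g(1)}\cdots\alpha_k^{g(k)}$ is computed by modular exponentiation against the group oracle, exactly the family of admissible automorphism gates highlighted at the end of Section~\ref{sect:circuit model}, with precision bound $n_{\mathrm{out}}=0$. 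A minor additional point is that when $G$ contains $\Z$ factors one must invoke the infinite-dimensional, finite-precision variant of the algorithm (Algorithm~\ref{alg:Order Finding Finite precision} and Theorem~\ref{thm:Order Finding}) rather than the finite version, but Theorem~\ref{thm:Order Finding} already supplies everything needed, so this is routine.
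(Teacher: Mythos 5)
Your proposal is correct and follows essentially the same route as the paper: containment by observing that the Abelian HSP algorithm becomes a genuine black-box normalizer circuit when $f$ is a homomorphism into a black-box group (so that $\mathcal{O}$ inherits an oracle-computable group operation), and hardness by noting that the quantum steps of the extended Cheung--Mosca algorithm are themselves HKP instances, giving a Turing reduction from the complete problem of group decomposition. Your treatment of the infinite factors via order-finding matches the paper's concluding remark, which reduces to a finite HKP over $\Z_d^a\times\DProd{N}{b}$.
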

\begin{proof}
Clearly group decomposition reduces to this problem, since the quantum steps of the extended Cheung-Mosca algorithm algorithm (steps 1 and 3) are solving instances of the Abelian kernel problem. Therefore, the Abelian HKP problem is hard for \textbf{Black Box Normalizer}.

Moreover, Abelian HKP can be solved with the general Abelian HSP quantum algorithm, which manifestly becomes a black-box normalizer circuit for oracle functions $f$ that are group homomorphisms onto black-box groups. This implies that Abelian HKP is inside \textbf{Black Box Normalizer}, and therefore, it is complete.

\textbf{Note.} Although we originally stated the Abelian HSP for finite groups, one can first apply the order-finding algorithm to compute a multiple $d$ of the orders of the elements $f(e_i)$, where $e_i$ are the canonical generators of $G$. This can be used to reduce the original HKP problem to a simplified HKP over the group $\Z_{d}^a\times \DProd{N}{b}$
\end{proof}
The latter result can be extended to the HSP setting to show that the Abelian hidden subgroup problem is polynomial-time equivalent to decomposing groups of the form $\mathcal{O}$ (cf.\ appendix \ref{app:Extending}).
\begin{theorem}[\textbf{System of linear equations over groups}] Let $\alpha$ be a group homomorphism from a group $G=\Z^a\times\DProd{N}{b}$ onto a black-box group $\mathbf{B}$. An instance of a linear system of equations over $G$ and $\mathbf{B}$ \footnote{These systems where studied extensively in our previous works \cite{BermejoVega_12_GKTheorem,BermejoLinVdN13_Infinite_Normalizers} in the decomposed-group setting.}  is given by $\alpha$ and an element $\mathbf{b}\in \mathbf{B}$. Our task is to find a general $(x_0, K)$ solution to the equation
\begin{equation}\nonumber
\alpha(x)=\mathbf{b},\quad x\in G,
\end{equation}
where $x_0$ is any particular solution and  $K$ is a generating set of the kernel of $\alpha$. This problem is  complete for \textbf{\emph{Black Box Normalizer}} under polynomial-time Turing reductions.
\end{theorem}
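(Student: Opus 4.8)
The plan is to prove the two directions separately: that the linear-system problem lies in \textbf{Black Box Normalizer}, and that it is hard for this class, both under classical polynomial-time Turing reductions. Completeness then follows by combining these with Corollary~\ref{corollary:Group Decomposition is complete}, which identifies group decomposition as a complete problem for the class.

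For membership, the idea is to reduce an instance $(\alpha,\mathbf{b})$, with $\alpha\colon G\to\mathbf{B}$ a surjection and $G=\Z^a\times\DProd{N}{b}$, to the group decomposition problem, which Algorithm~\ref{alg:group decomposition} solves using black-box normalizer circuits plus classical computation. Write $e_1,\dots,e_m$ ($m=a+b$) for the canonical generators of $G$ and put $\gamma_i:=\alpha(e_i)$; by surjectivity, $\gamma:=(\gamma_1,\dots,\gamma_m,\mathbf{b})$ is a generating set of $\mathbf{B}$. First I would run the order-finding subroutine (a normalizer circuit over $\Z\times\mathbf{B}$ plus classical post-processing) on $\gamma_1,\dots,\gamma_a$ to obtain a common multiple $d$ of their orders, so that $d\,e_i\in\ker\alpha$ for $i\le a$; this collapses the infinite domain to the finite quotient $G_d:=\Z_d^a\times\DProd{N}{b}$ modulo the explicitly known sublattice $d\Z^a\times\{0\}$, on which $\alpha$ induces a well-defined homomorphism $f$. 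Next I would invoke group decomposition on $\gamma$ to obtain a decomposition table $(\gamma,\beta,A,B,c)$ for $\mathbf{B}$, so that the columns of $B$ give the coordinates of each $\gamma_i$ and of $\mathbf{b}$ in terms of the linearly independent generators $\beta_1,\dots,\beta_\ell$. From here everything is classical: solving $\sum_i x(i)\,(\text{$i$-th column of }B)\equiv(\text{column of }\mathbf{b})\bmod c$ over $\DProd{c}{\ell}$ — as in the proof of Lemma~\ref{lemma:Multivariate Discrete Log}, using the algorithms of \cite{BermejoLinVdN13_Infinite_Normalizers,BermejoVega_12_GKTheorem} — yields a particular solution $x_0$ of $\alpha(x)=\mathbf{b}$, while the solution set of the homogeneous system $\sum_i x(i)\,(\text{$i$-th column of }B)\equiv 0\bmod c$ over $G_d$ gives generators of $\ker f$, whose lifts together with $d\,e_1,\dots,d\,e_a$ generate $\ker\alpha$ over $G$. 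This places the problem in \textbf{Black Box Normalizer}.

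For hardness, the plan is to reduce group decomposition to the linear-system problem. Given generators $\alpha_1,\dots,\alpha_k$ of $\mathbf{B}$, each quantum step of the extended Cheung--Mosca algorithm is itself such an instance: step~1 (order finding of $\alpha_i$) is the linear system for the homomorphism $\Z\to\langle\alpha_i\rangle$, $n\mapsto\alpha_i^{n}$, with right-hand side the identity — here the target black-box group is taken to be the cyclic subgroup $\langle\alpha_i\rangle$ (a legitimate black-box group of possibly unknown order), so that the map is onto — and the kernel generator is the order; step~2 (finding $\ker f$ for $f\colon\Z_d^k\to\mathbf{B}$, $f(x)=\alpha_1^{x(1)}\cdots\alpha_k^{x(k)}$) is the linear system for $f$ with right-hand side the identity, whose general solution $K$ is exactly $\ker f$. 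Steps~3--4 of the algorithm are classical polynomial-time computations. Hence group decomposition reduces in classical polynomial time to the linear-system problem, and since group decomposition is hard for \textbf{Black Box Normalizer} by Corollary~\ref{corollary:Group Decomposition is complete}, so is the linear-system problem; combined with membership, it is complete.

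The main obstacle I anticipate is the bookkeeping around the infinite factors $\Z^a$ of $G$: one must argue carefully that the order-finding step legitimately collapses the instance to the finite quotient $G_d$, that $\ker\alpha$ over $G$ is correctly recovered by lifting a generating set of the finite kernel $\ker f$ and adjoining the vectors $d\,e_i$, and that the precision used in order finding remains polynomial. A secondary, purely cosmetic point is the restriction of the black-box group to a cyclic subgroup in the hardness direction so that the homomorphism is genuinely surjective as the problem statement requires. Everything else is an assembly of results already in hand: the completeness of group decomposition, the normalizer-circuit order-finding and Abelian-HSP subroutines, and the classical solvability of systems of linear equations over (decomposed) finite abelian groups.
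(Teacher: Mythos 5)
Your proposal is correct and follows essentially the same route as the paper: membership by decomposing $\mathbf{B}$ with the group-decomposition algorithm (adjoining $\mathbf{b}$ to the generating set to read off its discrete logarithm, exactly as in Lemma~\ref{lemma:Multivariate Discrete Log}), rewriting the system in decomposed coordinates, and solving it with the classical algorithms of \cite{BermejoLinVdN13_Infinite_Normalizers}; and hardness by observing that the quantum steps of the extended Cheung--Mosca algorithm are themselves linear-system (i.e.\ hidden-kernel) instances, which is precisely how the paper establishes hardness via the Abelian hidden kernel problem. Your extra care with the $\Z^a$ factors (collapsing to a finite quotient via order finding and adjoining $d\,e_i$ to the kernel generators) is a harmless elaboration of a point the paper leaves implicit.
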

\begin{proof}
Clearly, this problem is hard for our complexity class, since the Abelian hidden kernel problem reduces to finding $K$.

Moreover, this problem can be solved with black-box normalizer circuits and classical computation, proving its completeness. First, we find a decomposition $\mathbf{B}=\bigoplus\langle \beta_i \rangle \cong H=\DProd{c}{\ell}$ with black-box normalizer circuits. Second, we recycle the de-black-boxing idea from the proof of theorem \ref{thm:Simulation} to compute a matrix representation of $\alpha$, and   solve the multivariate discrete logarithm problem $\mathbf{b}=\beta_1^{b(1)}\cdots\beta_\ell^{b(\ell)}$, $b\in H$, either with black-box normalizer circuits or classically (recall section \ref{sect:Group Decomposition}). The original system of equations can now be equivalently written as $A x = b \pmod{H}$. A general solution of this system can be computed with classical algorithms given in \cite{BermejoLinVdN13_Infinite_Normalizers}.
\end{proof}

\section{Acknowledgments}

We are grateful to Mykhaylo (Mischa) Panchenko, Uri Vool, Pawel Wocjan,  Raúl García-Patrón, Geza Giedke, Mari Carmen Bañuls, Liang Jiang, Steven M.\ Girvin, Barbara M.\ Terhal, Hussain Anwar,  Tobias J.\ Osborne and Kevin C.\ Zatloukal for encouraging discussions; and to Kevin C.\ Zatloukal for comments on the manuscript.  JBV thanks the MIT Center for Theoretical Physics, where part of this work was conducted, for their warm and generous hospitality in Spring 2013. 

CYL acknowledges support from the ARO grant W911NF-12-0486 (Quantum Information Group), and the Natural Sciences and Engineering Research Council of Canada. JBV acknowledges  funding from the Elite Network of Bavaria QCCC Program. This paper is preprint number MIT-CTP/4584.

\small 
\phantomsection
\addcontentsline{toc}{section}{References}
\bibliographystyle{utphys}
\bibliography{database}
\normalsize

\appendix

\begin{comment}

\section{The Dirichlet kernel}\label{app:Fourier formulas}

Recall that in the limit $N \rightarrow \infty$ the function $ D_{N,1}(p)$ converges to the Dirac delta measure centered at zero $\delta_0(p)$ [rudin??].

\end{comment}

\section{Proof of theorem \ref{thm:ModExp requires Z}}\label{app:ModExp requires Z}

To prove the result we can assume that we know a group isomorphism $\varphi:\Z_N^\times\rightarrow G$ that decomposes the black-box group as a product of cyclic factors $G= \DProd{N}{d}$. Let $U_{\varphi}:\mathcal{H}_{\mathbf{B}}\rightarrow \mathcal{H}_G$ be the unitary that implements the isomorphism  $U_\varphi\ket{b}=\ket{\varphi(b)}$ for any $b\in\Z_N^\times$. It is easy to check that $\mathcal{C}$ is a normalizer circuit over $\Z_M\times G$ if and only if $(I\otimes U_\varphi) \mathcal{C} (I\otimes U_\varphi)^\dagger $ is a normalizer circuit over $\Z_M\times \Z_N^\times$: automorphism (resp. quadratic phase) gates get mapped to automorphism (resp. quadratic phase) gates and vice-versa; isomorphic groups have isomorphic character groups \cite{Morris77_Pontryagin_Duality_and_LCA_groups}, and therefore Fourier transforms get mapped to Fourier transforms.
  
  As a result, it is enough to prove the result in the basis labeled by elements of $\Z_M^\times\times G$. The advantage now is that we can use results from  \cite{VDNest_12_QFTs, BermejoVega_12_GKTheorem}. In fact, the rest of the proof will be similar to the proof of theorem 2 in \cite{VDNest_12_QFTs}.
  
  We define $\alpha:=\varphi(a)$. The action of $U_\mathrm{me}$ in the $G$-basis reads  $U_\mathrm{me}\ket{m,g}=\ket{m,m\alpha+g}$, in additive notation. Define a fuction $F(m,g)=(m,m\alpha+g)$.  We now assume that $M$ is not divisible by $|a|$ and that there exists a normalizer circuit $\mathcal{C}$ such that $\|\mathcal{C}-U_\mathrm{me}\|\leq \delta$ with $\delta=1-1/\sqrt{2}$ and try to arrive to a contradiction. This property implies that $\|\mathcal{C}\ket{m,g}-U_\mathrm{me}\ket{m,g}\|\leq \delta$ for any standard basis state, and consequently
  \begin{equation}\label{inproof:Ume is no Clifford}
  |\bra{F(m,g)}\mathcal{C}\ket{m,g}|\geq 1- \delta=\tfrac{1}{\sqrt{2}}
  \end{equation}
  It was shown in  \cite{VDNest_12_QFTs} that $\mathcal{C}\ket{m,g}$ is a uniform superposition over some subset $x+H$ of $G$, being $H$ a subgroup. If $H$ has more than two elements, then $\mathcal{C}\ket{m,g}$ is a uniform superposition over more than two computational basis states. It follows that $\bra{n,h}\mathcal{C}\ket{m,g}\leq\frac{1}{\sqrt{2}}$ for any basis state $\ket{n,h}$ in contradiction with \ref{inproof:Ume is no Clifford}, so that we can assume $H=\{0\}$ and  that $\mathcal{C}\ket{m,g}$ is a standard basis state. Then (\ref{inproof:Ume is no Clifford}) implies that $\ket{F(m,g)}$ and $\mathcal{C}\ket{m,g}$ must coincide for every $(m,g)\in \Z_M\times G$, so that $\mathcal{C}$ must perfectly realize the transformation $\ket{(m,g)}\rightarrow\ket{F(m,g)}$; however, the only classical functions that can be implemented by normalizer circuits of this form are affine maps \cite{VDNest_12_QFTs}, meaning that $F(m,g)=f(m,g)+b$ for some group automorphism $f:\Z_M\times G\rightarrow\Z_M\times G$ and some $b\in \Z_M\times G$. 
  
  Finally, we arrive to a contradiction showing that $F(m,g)$ is  not affine unless $M$ is a multiple of $|a|$. First, by evaluating $F(m,g)=f(m,g)+b=(m,m\alpha+g)$ at $(0,0)$,$(1,0)$ and elements of the form $(0,g)$, we can  compute $b$ and a matrix representation $A$ of the automorphism $f$ \cite{VDNest_12_QFTs}: we obtain $b=0$, so that $F(m,g)$ must be an automorphism, and $A=\begin{pmatrix}
    1 & 0\\
    \alpha & 1 
    \end{pmatrix}$.
 However, for the matrix  $A$ to be a matrix representation of a group  automorphism, the first column $a_1$ needs to fulfill the equation:  $Ma_1\pmod{\Z_M\times G}$ \cite[lemma 2]{BermejoVega_12_GKTheorem}. Expanding this equation, we finally get  that $M\alpha = 0\pmod{G}$, which means that $M$ needs to be a multiple of the order of $\alpha$.
 
\section{Proof of theorem \ref{thm:Simulation}} \label{app:Simulation proof}
In this section we will prove theorem \ref{thm:Simulation}. The proof uses results of Section \ref{sect:Group Decomposition}; the reader may wish to review that section before proceeding with this proof.

We first state the simulation result of \cite{BermejoLinVdN13_Infinite_Normalizers}, summarized below:

\begin{theorem}[{\cite[theorem~1]{BermejoLinVdN13_Infinite_Normalizers}}] Let $G = \Z^a  \times \DProd{N}{c}\times \T^b$ be an explicitly decomposed elementary abelian group. Suppose we are given a normalizer circuit over $G$, where each gate is specified as follows (assume each gate is such that all entries of the matrices and vectors below are rational):
\begin{itemize}
\item A (partial) quantum Fourier transform is specified by the elementary subgroups it acts on.
\item A group automorphism is specified as a matrix $A$, as in the normal form that we will introduce in Thm \ref{thm:Normal form of a matrix representation}.
\item A quadratic phase gate is specified as $(M,v)$, where $M$ is a matrix and $v$ is a matrix, as in the normal form that we will introduce in Thm \ref{thm:Normal form of a quadratic function}.
\end{itemize}
Then the output distribution of this normalizer circuit can be sampled classically efficiently.
\end{theorem}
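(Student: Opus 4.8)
The plan is to prove a generalized Gottesman--Knill theorem: maintain an efficiently-updatable \emph{stabilizer description} of the state as it passes through the circuit, and then read the output statistics off the final description. First I would set up a Pauli/stabilizer formalism adapted to $G=\Z^a\times\DProd{N}{c}\times\T^b$. Each such $G$ has a character group $\widehat{G}$, and Pontryagin duality interchanges $\Z\leftrightarrow\T$ while fixing every $\Z_{N_i}$. Generalized Pauli operators are indexed by the phase space $G\times\widehat{G}$: a shift $X(g)\colon\ket{h}\mapsto\ket{h+g}$ and a phase $Z(\chi)\colon\ket{h}\mapsto\chi(h)\ket{h}$, whose products obey commutation relations fixed by a symplectic bicharacter on $G\times\widehat{G}$. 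The key observation is that each allowed input -- a standard-basis state on the $\Z^a\times\DProd{N}{c}$ registers tensored with Fourier-basis states on the $\T^b$ registers -- is a \emph{stabilizer state}, the joint eigenstate of a maximal isotropic subgroup $\mathcal{S}\le G\times\widehat{G}$ together with a character specifying its eigenvalues. Hence the full state is encoded by a polynomially-sized list of generators of $\mathcal{S}$ (as label vectors in $G\times\widehat{G}$) plus eigenvalue data.

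The defining feature of normalizer gates is that they \emph{normalize} the Pauli group: conjugation $P\mapsto U_tPU_t^{\dagger}$ maps Paulis to Paulis up to a phase. I would verify this gate-by-gate and extract the induced action on the labels. An automorphism gate acts by the symplectic matrix $A$ of its normal form (Theorem \ref{thm:Normal form of a matrix representation}); a quadratic phase gate acts as a symplectic transvection determined by $(M,v)$ (Theorem \ref{thm:Normal form of a quadratic function}), shifting the $Z$-part of each label by a linear function of its $X$-part; and a partial QFT acts by the Pontryagin-duality isomorphism that swaps the $\Z$- and $\T$-components of the labels on the registers it touches, while simultaneously updating the group $G_t$ that labels the designated basis. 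Every update is a matrix--vector product over the label group, so propagating the polynomially many generators of $\mathcal{S}$ through all $T$ gates takes only polynomial time. The normal forms together with the precision bound $n_{\mathrm{out}}$ guarantee that all entries stay rational and of polynomially-bounded bit-size, even across the infinite $\Z$ factors where the support could otherwise grow uncontrollably.

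To sample the final measurement I would exploit the structure of a stabilizer state in its designated basis: its support is a coset $s_0+H$ of a subgroup $H\le G_T$, and the amplitudes are constant in modulus up to phases determined by $\mathcal{S}$. Extracting $H$ and a representative $s_0$ from the generators of the final stabilizer group reduces to solving a \emph{system of linear equations over the abelian group} $G_T$, which can be done in classical polynomial time by Smith/Hermite normal-form methods applied to the mixed factors $\Z$, $\Z_{N_i}$ and $\T$ \cite{BermejoLinVdN13_Infinite_Normalizers}. Once $(s_0,H)$ is known, a uniform sample from $s_0+H$ is drawn directly; on the continuous $\T$ registers the sample is produced within the precision model of Section \ref{sect:circuit model}, which reproduces the true output distribution to the required accuracy.

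The hard part will not be the bookkeeping but the rigorous handling of the infinite and continuous factors. Because a QFT literally interchanges $\Z$ and $\T$, and the labelled Hilbert spaces $\ell_2(\Z)$ and $L^2(\T)$ are genuinely different, the stabilizer objects live in a locally compact phase space; both the isotropy/support analysis and the linear-algebra subroutines must be carried out over this mixed group while strictly bounding the growth of numerators and denominators. This is precisely where the normal-form theorems and the precision bound do the heavy lifting: with them in place, the per-gate label updates and the final coset extraction all remain efficient, which completes the simulation.
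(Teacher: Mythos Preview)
This theorem is not actually proved in the present paper: it is quoted verbatim as a prior result from \cite{BermejoLinVdN13_Infinite_Normalizers} and then used as a black box inside the proof of Theorem~\ref{thm:Simulation}, so there is no ``paper's own proof'' to compare against here. Your sketch---build a generalized Pauli/stabilizer formalism on $G\times\widehat{G}$, track a maximal isotropic subgroup with eigenvalue data through the three gate types via their normal-form matrices, and sample the final coset by solving linear systems over the mixed group---is exactly the strategy carried out in \cite{BermejoLinVdN13_Infinite_Normalizers} (and, in the purely finite case, in \cite{VDNest_12_QFTs,BermejoVega_12_GKTheorem}), so your plan is correct and aligned with the original source.
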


We will describe precisely what we mean by the \emph{normal forms} of normalizer gates in the following sections.

Given a black-box normalizer circuit acting on a black-box group $G= \Z^a \times \T^b \times \DProd{N}{c} \times \mathbf{B}$, there are two things we need to do to de-blackbox it, so that the circuit can be classically simulated:

\begin{enumerate}
\item Decompose the black-box portion of $G$, $\mathbf{B}$, into cyclic subgroups: $\mathbf{B} = \Z_{N_{c+1}} \times \cdots \times \Z_{N_{c+d}}$.
\item Calculate \emph{normal forms} for each of the normalizer gates in the computation.
\end{enumerate}

Assuming we have access to an oracle for Group Decomposition, Task 1 can already be done. In this proof we will concentrate on tackling task 2, for group automorphisms and quadratic phase gates (a quantum Fourier transform is easily specified by the subgroup it acts on).

\subsection{Group automorphisms}
Suppose we have an abelian group $G= \Z^a \times \DProd{N}{c}\times  \T^b $; we can represent each element $g \in G$ as an $a+b+c$-tuple of real numbers $g=(g_1,\cdots,g_m)$, where each of the $g_i$'s are only defined modulo the characteristic $\text{char}(G_i)$ of the group $G_i$, multiplied by some integer. (We have $\text{char}(\Z)=0$, $\text{char}(\T)=1$, and $\text{char}(\Z_N)=N$.) Using this matrix representation, it turns out that there exist \emph{normal forms} for the group automorphisms and quadratic phase functions:

\begin{theorem}[\textbf{Normal form of a matrix representation} {\cite[lemmas~7, 8]{BermejoLinVdN13_Infinite_Normalizers}}]\label{thm:Normal form of a matrix representation}
Let $G = G_1\times\dots\times G_m$ be an elementary Abelian group. A real $n\times m$ matrix $A$ is a valid matrix representation of some group automorphism $\alpha: G\rightarrow G$ iff  $A$ is of the form 
\begin{equation}\label{eq:block-structure of matrix representations}
A:= \begin{pmatrix}
      A_{\Z\Z} & 0 & 0 \\
      A_{F\Z} & A_{FF} & 0 \\
      A_{\T\Z} & A_{\T F} & A_{\T\T}
    \end{pmatrix}
\end{equation}
with the following restrictions:
  \begin{enumerate}
  \item $A_{\Z\Z}$ and $A_{\T\T}$ are arbitrary integer matrices.
  \item $A_{F\Z}$, $A_{FF}$ are integer matrices: the first can be arbitrary, while the coefficients of the second must be of the form \begin{equation}\label{eq:coefficients of Matrix Rep for nonzero characteristic groups}
  A(i,j)= \alpha_{i,j}\, \frac{N_i}{\gcd{(N_i, N_j)}}
  \end{equation}
  where  $\alpha_{i,j}$ can be arbitrary integers, and $N_i$ is the order of the $i$-th cyclic subgroup of $F$, $\Z_{N_i}$. The coefficients of the $i$-th rows of these matrices can be chosen w.l.o.g. to lie in the range $[0,N_i)$ (by taking remainders).
  \item $A_{\T\Z}$ and $A_{\T F}$ are real matrices: the former is arbitrary, while the coefficients of the latter are of the form  $A(i,j)= \alpha_{i,j}/N_j$, where $\alpha_{i,j}$ can be arbitrary integers, and $N_i$ is the order of the $i$-th cyclic subgroup of $F$, $\Z_{N_i}$. (Due to the periodicity of the torus, the coefficients of $A_{\T\Z}$, $A_{\T F}$ can be chosen to lie in the range  $[0,1)$.)
  \end{enumerate}
\end{theorem}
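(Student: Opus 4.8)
The plan is to reduce the claim to a classification of continuous homomorphisms between the individual factors, and then read off both the block-triangular shape and the entrywise constraints. Recall first what ``matrix representation'' means: writing $g=(g_1,\dots,g_m)$ with each $g_i$ a real number defined only modulo $\mathrm{char}(G_i)$, we say $A$ represents $\alpha$ if $\alpha(g)_i\equiv\sum_j A(i,j)g_j\pmod{\mathrm{char}(G_i)}$ for all $i$. Since a finite direct product of abelian groups is also a direct sum, every homomorphism $\phi:\prod_j G_j\to\prod_i G_i$ splits uniquely into a matrix of component homomorphisms $\phi_{ij}\in\mathrm{Hom}(G_j,G_i)$, and $\phi$ is continuous iff every $\phi_{ij}$ is; the number $A(i,j)$ is precisely the ``coordinate'' parametrizing $\phi_{ij}$. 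So everything comes down to determining $\mathrm{Hom}_{\mathrm{cts}}(G_j,G_i)$ for $G_j,G_i\in\{\Z,\Z_N,\T\}$, a nine-case exercise.

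For the ``only if'' direction I would run through these cases. A homomorphism out of $\Z$ is freely determined by the image of $1$, giving $\mathrm{Hom}(\Z,\Z)\cong\Z$, $\mathrm{Hom}(\Z,\Z_{N_i})\cong\Z_{N_i}$, $\mathrm{Hom}(\Z,\T)\cong\T$ — these produce the free integer block $A_{\Z\Z}$, the integer-mod-$N_i$ block $A_{F\Z}$, and the real-mod-$1$ block $A_{\T\Z}$. There is no nonzero homomorphism from a finite group into $\Z$ (torsion versus torsion-free), so the block corresponding to $F\to\Z$ vanishes; and since $\T$ is compact and connected, continuity forces every homomorphism from $\T$ into the discrete groups $\Z$ or $\Z_N$ to be trivial, so those blocks vanish too — together these are exactly what makes $A$ lower block triangular. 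A homomorphism $\Z_{N_j}\to\Z_{N_i}$ sends the generator to a $k$ with $N_j k\equiv 0\pmod{N_i}$, i.e.\ $k\in\tfrac{N_i}{\gcd(N_i,N_j)}\Z$, which is formula \eqref{eq:coefficients of Matrix Rep for nonzero characteristic groups}; a homomorphism $\Z_{N_j}\to\T$ sends the generator to an element of order dividing $N_j$, i.e.\ to $\alpha_{ij}/N_j$, giving the form of $A_{\T F}$; and the continuous endomorphisms of $\T=\R/\Z$ are exactly $x\mapsto nx$, $n\in\Z$ (winding number), giving the free integer block $A_{\T\T}$. Continuity is essential in the last case, as a Hamel-basis argument otherwise yields many pathological (non-measurable) endomorphisms of $\T$. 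Collecting the nine cases gives precisely \eqref{eq:block-structure of matrix representations} with restrictions 1--3.

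For the ``if'' direction I would take a matrix $A$ of the stated form and verify three things: that it induces a well-defined map $G\to G$, that this map is a homomorphism, and that it is continuous. Well-definedness is the only non-trivial point: for each output index $i$, incrementing an input coordinate $g_j$ by a multiple of $\mathrm{char}(G_j)$ must change $(Ag)_i$ by a multiple of $\mathrm{char}(G_i)$; this is automatic when $G_j=\Z$ and is otherwise precisely what the divisibility constraints in 2 and 3 guarantee (for instance $N_j\,A(i,j)=\alpha_{ij}\,\mathrm{lcm}(N_i,N_j)\in N_i\Z$ in the $A_{FF}$ block, and $N_j\,A(i,j)=\alpha_{ij}\in\Z$ in the $A_{\T F}$ block). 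The map is then a homomorphism since it is $\R$-linear before reduction, and continuous since each entry defines a continuous map on its factor, so $A$ represents a continuous endomorphism of $G$. (To get an \emph{automorphism} one additionally needs bijectivity, which by the triangular shape reduces to invertibility of the diagonal blocks $A_{\Z\Z}$, $A_{FF}$, $A_{\T\T}$ over $\Z$, over the relevant product of cyclic rings, and over $\Z$ respectively; I would note that the literal ``iff'' in the statement carries these invertibility conditions implicitly, as spelled out in \cite{BermejoLinVdN13_Infinite_Normalizers}.)

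\textbf{Main obstacle.} The case analysis itself is routine; the two places needing care are setting up the ``coordinate modulo characteristic'' bookkeeping so that well-definedness of $Ag$ becomes a precise, checkable condition — this is what forces the $\gcd$/$\mathrm{lcm}$ constraints on the mixed torsion blocks — and correctly invoking the continuity hypothesis, which is what eliminates the pathological homomorphisms out of $\T$ and into discrete groups and thereby yields the clean finite-dimensional linear-algebraic picture; compactness, connectedness and divisibility of $\T$ are the structural facts doing the work there.
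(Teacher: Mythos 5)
Your proof is correct, and there is no in-paper argument to compare it against: the paper imports this statement wholesale from lemmas~7 and~8 of \cite{BermejoLinVdN13_Infinite_Normalizers} rather than proving it, and your nine-case classification of $\mathrm{Hom}_{\mathrm{cts}}(G_j,G_i)$ together with the well-definedness check modulo $\mathrm{char}(G_i)$ is exactly the standard argument one expects to underlie those lemmas. You are also right to flag the one real wrinkle in the statement itself: as written, the ``iff'' with ``automorphism'' is loose (the zero matrix satisfies all the listed constraints), since the normal form genuinely characterizes continuous \emph{endomorphisms}, with bijectivity an additional condition on the diagonal blocks; your handling of that caveat is appropriate.
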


Recall the underlying group is $G= \Z^a \times \T^b \times \DProd{N}{c} \times \mathbf{B}$ with $\mathbf{B} \cong \Z_{N_{c+1}} \times \cdots \times \Z_{N_{c+d}} $. Assume we are given black box access to a group automorphism $\alpha:\: G \rightarrow G$ implemented as a classical function (a uniformly generated circuit family, say). We wish to find a matrix representation for $A$ for $f$. We will assume (for the efficiency of this algorithm) that the size and precision of the coefficients are upper bounded by some known constant $D$, i.e. each element of $M$ can be written as $A_{i,j} = \alpha_{i,j}/\beta_{i,j}$ for integers $\alpha_{i,j},\beta_{i,j}$ with absolute value no more than $D$.\footnote{Note that $D$ can be inferred from the precision bound (see section \ref{sect:circuit model}) of the automorphism gate. If the automorphism gate increases the input size by at most $n$ bits, then it follows that the size of the denominator or numerator of every matrix element can increase by at most $D=2^n$. A similar argument will hold for quadratic phase gates.}

We will show how to find the matrix representation $A$ in two steps:

\begin{enumerate}
\item Given access to $\alpha$, we show how to switch the input and output of $\alpha$ from the black-box encoding (where the group action is implemented as a black-box circuit) to the decomposed group encoding (where elements of the group are given as a list of numbers, and the group action is simply addition of vectors), and vice versa.
\item Once we have achieved this, we can implement classically a rational function $f:\: \mathbb{Q}^{\ell+m+k} \rightarrow \mathbb{Q}^{\ell+m+k}$ that implements $\alpha$ in decomposed form. We will show how to obtain the matrix representation $A$ from this $f$.
\end{enumerate}

\subsubsection{Switching from black-box encoding to decomposed group encoding.}\label{sect:Encodings}

We need to be able to convert elements back and forth from the black-box encoding and the decomposed group encoding. We assume our black box group $\mathbf{B}$ has already been decomposed, i.e. we have found linearly independent generators $b_{1},\cdots,b_{k'}$ of $\mathbf{B}$ such that $\mathbf{B} =\langle \beta_1\rangle\oplus\cdots\oplus\langle \beta_\ell\rangle$; moreover, we know the order $N'_{i} = N_{c+i}$ of $\beta_i$. Define the explicitly decomposed group $\Z_{\mathbf{B}} = \DProd{N'}{d}$; then we need to be able to perform the following tasks:

\begin{enumerate*}
\item[(a)] Our first task is to map an element from the decomposed group $\Z_{\mathbf{B}}$ to the black box group $\mathbf{B}$. In otherwords, we need to be able to compute the following group homomorphism $\varphi$:
\begin{equation*}
\varphi:\Z_{\mathbf{B}}\rightarrow\mathbf{B},\qquad\varphi(g)=b_1^{g(1)}\cdots b_{d}^{g(d)},\quad\textnormal{for any $g\in \Z_{\mathbf{B}}$.}
\end{equation*}
\item[(b)]Our second task is to convert elements from the original black-box group encoding to the new encoding defined by $\Z_{\mathbf{B}}$. In other words, given an arbitrary $\mathbf{b}\in \mathbf{B}$, we need to be able to compute $\varphi^{-1}{(\mathbf{b})}$.
\end{enumerate*}
Note that it is always possible to compute $\varphi(g)=b_1^{g(1)}\cdots b_d^{g(d)}$ for any $g\in \Z_\mathbf{B}$, since this can be done using a polynomial number of queries to the black-box group oracle (using repeated squaring if necessary for the exponentiation). Task (a) is therefore immediate.

As for Task (b), we note that computing $\varphi^{-1}(\mathbf{b})$ for an element $\mathbf{b}\in \mathbf{B}$ is equivalent to finding a list of integers $(g(1),\cdots,g(d))$ such that $b_1^{g(1)}\cdots b_{d}^{g(d)} = \mathbf{b}$. This is a special case of the multivariate discrete logarithm problem, defined in lemma \ref{lemma:Multivariate Discrete Log}; from lemma \ref{lemma:Multivariate Discrete Log} we see that Task (b) can be solved efficiently with a polynomial number of calls to the Group Decomposition oracle.

\subsubsection{Finding the matrix representation $A$}
Now by converting the input and output of $\alpha$ to the decomposed group representation, we may assume that we instead have a classical rational function $f:\: \mathbb{Q}^{a+b+c+d} \rightarrow \mathbb{Q}^{a+b+c+d}$ such that $f$ can be treated as a group automorphism on $G$:
\be
f(x) \equiv f(x') \mbox{ mod } G  \quad \text{if } x \equiv x' \mbox{ mod } G.
\ee
(Here we say two vectors are equal modulo $G$ if each pair of corresponding entries are equal modulo $\text{char}(G_i)$.)
We wish to find a matrix representation $A$ for $f$. For most entries of $A$ this is trivial: note that we have
\be
A_{i,j} \equiv f(e_j)_i \mbox{ mod } c_i
\ee
where $c_i = \text{char}(G_i)$. Hence by evaluating $f$ on the unit vectors $e_i$, we can determine $A_{i,j}$ modulo $c_i$. Thus we can evalute $A_{\T F}$ exactly, the coefficients of the $i$-th rows of $A_{F\Z}$ and $A_{FF}$ modulo $\Z_{N_i}$, and the coefficients of $A_{\T\Z}$ and $A_{\T F}$ modulo $1$. This is sufficient for the cases listed above; the only case we still need to treat is $A_{\T\T}$, whose entries are arbitrary integers (and $c_i=\text{char}(\T)=1$ in this case). We can instead evaluate $f(e_j / \alpha)$ for some large integer $\alpha$:

\be
A_{i,j}/\alpha \equiv f(e_j/\alpha)_i \mbox{ mod } c_i
\ee
which allows us to determine $A_{i,j}$ modulo $\alpha c_i$ for our choice of $\alpha$. Choosing $\alpha > 2D$ then allows us to determine $A_{i,j}$ exactly for the case of $A_{\T\T}$.

\subsection{Quadratic phase functions}
We will continue to use the matrix representation referenced in the last section.

\begin{theorem}[\textbf{Normal form of a quadratic function}{\cite[lemmas~8,~11,~and~theorem~3]{BermejoLinVdN13_Infinite_Normalizers}}] \label{thm:Normal form of a quadratic function} \ \\
Let $G= \Z^a  \times \DProd{N}{c} \times \T^b$ be an elementary Abelian group. Define $\Z^\bullet_N = \{0, 1/N, \cdots, (N-1)/N\}$ to be a group under addition modulo 1, and let $G^\bullet = \T^a  \times \Z_{N_1}^\bullet \times \cdots \times \Z_{N_c}^\bullet\times \Z^b$. Then a function $\xi: G\rightarrow U(1)$ is quadratic if and only if
\begin{equation}
\xi(g)=\euler^{\pii \,\left(g^{\transpose} M g \: +  \: C^{\transpose} g  \: +  \: 2v^\transpose g\right)}
\end{equation}
where $C$, $v$, $M$ are, respectively, two vectors and a matrix that satisfy the following:
\begin{itemize}
\item $v$ is an element of $G^\bullet$;
\item $M$ is the matrix representation of a group homomorphism from $G$ to $G^\bullet$, which necessarily has the form
\begin{equation}
M:= \begin{pmatrix}
      M_{\T\Z} & M_{\T F} & M_{\T\T} \\
      M_{F^{\bullet}\Z} & M_{F^{\bullet}F} & 0 \\
      M_{\Z\Z} & 0 & 0
    \end{pmatrix}
\end{equation}
with the following restrictions:
\begin{itemize}
\item $M_{\Z\Z}$ and $M_{\T\T}$ are arbitrary integer matrices.
\item $M_{F^\bullet\Z}$ and $M_{\T F}$ are rational matrices, the former with the form $M(i,j) = \alpha_{i,j}/N_i$ and the latter with the form $M(i,j) = \alpha_{i,j}/N_j$, where $\alpha_{i,j}$ are arbitrary integers, and $N_i$ is the order of the $i$-th cyclic subgroup $\Z_{N_i}$.
  \item $M_{F^\bullet F}$ is a rational matrix with coefficients of the form \begin{equation}
  M(i,j)= \frac{\alpha_{i,j}}{\gcd{(N_i, N_j)}}
  \end{equation}
  where $\alpha_{i,j}$ are arbitrary integers, and $N_i$ is the order of the $i$-th cyclic subgroup $\Z_{N_i}$.
  \item $M_{\T\Z}$ is an arbitrary real matrix.
\end{itemize}
The entries of $M_{F^\bullet\Z}$, $M_{\T F}$, $M_{F^\bullet F}$, and $M_{\T\Z}$ can be assumed to lie in the interval $[0,1)$.
Moreover, $M$ can be assumed to be symmetric, i.e. $M_{\Z\Z}^\transpose = M_{\T\T}$, $M_{F^\bullet\Z}^\transpose = M_{\T F}$, $M_{F^\bullet F}^\transpose = M_{F^\bullet F}$, and $M_{\T\Z}^\transpose = M_{\T\Z}$.
\item $C$ is an integer vector dependent on $M$, defined component-wise as $C(i)=M(i,i)c_i$, where $c_i$ is the characteristic of the group $G_i$. (Recall that  $\textnormal{char}(\Z)=0$, $\textnormal{char}(\T)=1$, and $\textnormal{char}(\Z_N)=N$.)
\end{itemize}
\end{theorem}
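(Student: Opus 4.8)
The plan is to prove both implications, reducing the substantive (``only if'') direction to the normal form for matrix representations (Theorem~\ref{thm:Normal form of a matrix representation}), applied not to automorphisms of $G$ but to group homomorphisms $G\to G^\bullet$. The point is that $G^\bullet$ is isomorphic to the Pontryagin dual $\hat G$, so a symmetric continuous bicharacter on $G$ is exactly the data of such a homomorphism, and the quadratic/diagonal part of $\xi$ is then a ``canonical'' function with that bicharacter, corrected by a character of $G$.

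\emph{The ``if'' direction.} Given $\xi$ of the displayed form, I would first check that it is well defined on $G$, i.e.\ invariant under $g_i\mapsto g_i+c_i$ for each $i$, where $c_i$ is the characteristic of $G_i$ (this is vacuous on the $\Z$-slots, where $c_i=0$). Expanding $g^{\transpose}Mg$, the exponent changes by $\pii\big(2c_i(Mg)_i+2v_ic_i+M(i,i)c_i^2+C(i)c_i\big)$. Since $M$ represents a homomorphism into $G^\bullet$ and $v\in G^\bullet$, the entries $(Mg)_i$ and $v_i$ lie in $G^\bullet_i$, so $c_i(Mg)_i$ and $c_iv_i$ are integers and the first two summands contribute a factor $1$; the remaining $M(i,i)c_i^2+C(i)c_i$ equals $2M(i,i)c_i^2$ precisely because $C(i)=M(i,i)c_i$, hence is even and contributes another factor $1$. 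Continuity is immediate. Quadraticity is a direct computation: using $M^{\transpose}=M$,
\begin{equation}
\xi(g+h)\,\overline{\xi(g)}\,\overline{\xi(h)}=\euler^{\pii\,(g^{\transpose}Mh+h^{\transpose}Mg)}=\euler^{2\pii\,g^{\transpose}Mh}=:B(g,h),
\end{equation}
and $B$ is a genuine bicharacter on $G\times G$ exactly because $M$ represents a homomorphism $G\to G^\bullet$.

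\emph{The ``only if'' direction.} Let $\xi$ be quadratic with associated bicharacter $B$. Commutativity of $G$ forces $B$ to be symmetric, since $\xi(g+h)=\xi(h+g)$ gives $B(g,h)=B(h,g)$. A symmetric continuous bicharacter $B$ is the same data as a continuous homomorphism $\beta:G\to\hat G\cong G^\bullet$, so $B(g,h)=\euler^{2\pii\,g^{\transpose}Mh}$ for a rational matrix $M$ representing $\beta$, and symmetry of $B$ lets us take $M$ symmetric. I would then invoke the homomorphism version of Theorem~\ref{thm:Normal form of a matrix representation} to obtain the stated block structure of $M$: continuity kills homomorphisms from the connected compact factors $\T$ into the discrete factors $\Z,\Z_{N_i}$ and restricts $\T\to\T$ to integer multiplication, while well-definedness modulo the $N_i$ forces the $\gcd$-divisibility conditions on the finite blocks. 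With $M$, hence $B$, fixed, set $C(i):=M(i,i)c_i$; the computation of the ``if'' direction shows $\xi_0(g):=\euler^{\pii\,(g^{\transpose}Mg+C^{\transpose}g)}$ is a well-defined quadratic function with the same bicharacter $B$ as $\xi$. Therefore $\xi\,\overline{\xi_0}$ is a character of $G$, and every continuous character of $G$ has the form $g\mapsto\euler^{2\pii\,v^{\transpose}g}$ for a unique $v\in G^\bullet$, so $\xi=\xi_0\cdot\big(\xi\overline{\xi_0}\big)=\euler^{\pii\,(g^{\transpose}Mg+C^{\transpose}g+2v^{\transpose}g)}$ with data of exactly the claimed type.

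\emph{Main obstacle.} The technical core is extracting the precise block form of $M$ in the ``only if'' direction: one must redo the block-by-block analysis underlying Theorem~\ref{thm:Normal form of a matrix representation} in the slightly shifted setting of homomorphisms $G\to G^\bullet$, carefully combining continuity (to pin down the $\T$-factors of domain and codomain) with well-definedness modulo each $N_i$ (to get the divisibility constraints and the fractional ranges $[0,1)$). The remaining bookkeeping---symmetrizability of $M$, the diagonal correction vector $C$, and the identification $\Z_N^\bullet\cong\Z_N$---is delicate but routine once the structure of $M$ is available.
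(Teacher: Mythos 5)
This theorem is not proved in the paper at all: it is imported verbatim from the cited reference \cite{BermejoLinVdN13_Infinite_Normalizers} (lemmas 8, 11 and theorem 3), so there is no in-paper proof to compare against. Judged on its own, your argument follows the same route as that reference: attach to $\xi$ its necessarily symmetric bicharacter $B$, identify symmetric continuous bicharacters with homomorphisms $G\to G^\bullet\cong\hat G$ and invoke the block normal form for their matrix representations, build the canonical quadratic function $\xi_0(g)=\euler^{\pii(g^{\transpose}Mg+C^{\transpose}g)}$ with bicharacter $B$, and absorb $\xi\overline{\xi_0}$ into a character $\euler^{2\pii v^{\transpose}g}$ with $v\in G^\bullet$. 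The computations you do carry out (well-definedness modulo the characteristics $c_i$, using $c_i(Mg)_i\in\Z$, $M(i,i)c_i^2\in\Z$ and $C(i)=M(i,i)c_i$; the bicharacter identity; the reduction of the converse to a character) are correct. The one step you label ``routine'' that deserves a warning is the symmetrization of $M$: the relation $g^{\transpose}Mh\equiv h^{\transpose}Mg \pmod{\Z}$ only implies that $\tfrac{1}{2}(M+M^{\transpose})$ reproduces $B$ up to a sign, so naive averaging fails, and one must instead argue entry-by-entry, using the normalized ranges $[0,1)$ of the fractional blocks, that the congruences $M(i,j)\equiv M(j,i)$ force equality of suitably chosen representatives; this is exactly what lemma 11 of the cited reference supplies. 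Since you explicitly flag both this and the block analysis of homomorphisms $G\to G^\bullet$ as the deferred technical core, the proposal is a faithful and essentially complete outline rather than a full proof, with no incorrect steps.
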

Recall the underlying group is $G= \Z^a \times \T^b \times \DProd{N}{c} \times B$ with $B \cong \Z_{N_{c+1}} \times \cdots \times \Z_{N_{c+d}} $. Assume we are given a quadratic phase gate $\xi$, implemented as a classical circuit family $q:\: G \rightarrow \mathbb{Q}$ such that
\be
\xi(g) = \euler^{2\pii q(g)} \quad \forall g \in G.
\ee
Since we can switch between the black-box group and decomposed group encodings (see section \ref{sect:Encodings}), we can assume the elements of $G$ are treated as a vector in $\mathbb{Q}^{a+b+c+d}$. 

We wish to write the quadratic function $\xi(g)$ in the normal form given by theorem \ref{thm:Normal form of a quadratic function}, i.e. find $M,c,v$ as in theorem \ref{thm:Normal form of a quadratic function} such that
\begin{equation}
\xi(g)=\euler^{\pii \,\left(g^{\transpose} M g \: +  \: C^{\transpose} g  \: +  \: 2v^\transpose g\right)}.
\end{equation}
$q$, and hence $M$, $c$, and $v$, are rational by assumption. We will furthermore assume, as before, that the size and precision of the coefficients are upper bounded by some known constant $D$, i.e. each element of $M$ can be written as $M_{i,j} = \alpha_{i,j}/\beta_{i,j}$ for integers $\alpha_{i,j},\beta_{i,j}$ with absolute value no more than $D$.

To do this, let us first determine the entries of $M$. This can be done in the following manner: it should be straightforward to verify that
\begin{equation}
\xi(x+y)=\xi(x)\xi(y)\euler^{2\pii \,x^{\transpose} M y}
\end{equation}
for any $x,y \in G$, and therefore
\begin{equation}
x^{\transpose} M y \equiv q(x+y) - q(x) - q(y) \mbox{ mod } \Z.
\end{equation}

We can use this method to determine nearly all the entries of $M$ exactly, by taking $x$ and $y$ to be unit vectors $e_i$ and $e_j$; this would determine $M_{ij}$ up to an integer, i.e.
\begin{equation}
M_{i,j} = e_i^{\transpose} M e_j \equiv q(e_i+e_j) - q(e_i) - q(e_j) \mbox{ mod } \Z.
\end{equation}

This determines all entries of $M$ except for those in $M_{\Z\Z}$ and $M_{\T\T}$ (the other entries can be assumed to lie in $[0,1)$). To deal with $M_{\Z\Z}$ we take $x = \alpha^{-1}e_i$, and $y = e_j$, such that the coefficient $M(i,j)$ is in the submatrix $M_{\Z\Z}$ and $1/\alpha$ is an element of the circle group with $\alpha<2D$, where $D$ is the precision bound. We obtain an analogous equation
\begin{equation}
\left(\frac{e_i^{\transpose}}{\alpha} M e_j\right) \equiv \frac{M_{i,j}}{\alpha}  \equiv  q(\alpha^{-1}e_i+e_j) - q(\alpha^{-1}e_i) - q(e_j) \mbox{ mod } \Z,
\end{equation}
which allows us to determine $M_{i,j}$: since the number $M_{i,j}/\alpha$ is smaller than $1/2$ in absolute value, the coefficient is not truncated modulo 1. One can apply the same argument to obtain the coefficients of $M_{\T\T}$, choosing  $x = e_i$, and $y = \alpha^{-1} e_j$.

Once we determine all the entries of $M$ in this manner, we get immediately the vector $C$ as well (since $C(i) = c_iM(i,i)$). It is then straightforward to calculate the vector $\tilde{v}$. Thus we can efficiently find the normal form of $\xi(g)$ through polynomially many uses of the classical function $q$.

\section{Extending theorem \ref{thm:Simulation}  to the Abelian HSP setting}\label{app:Extending}

In this appendix, we briefly discuss that theorem \ref{thm:Simulation} (and some of the results that follow from this theorem) can be re-proven in the general hidden subgroup problem oracular setting that we studied in section \ref{sect:Abelian HSPs}.  This fact supports our view (discussed in the main text) that the oracle models in the HSP and in the black-box setting are  very close to each other.

Recall that the main result  in this section (theorem \ref{thm:HSP}) states that the quantum algorithm Abelian HSP is a normalizer circuits over a group of the form $\DProd{d}{m}\times \mathcal{O}$, where $\mathcal{O}$ is a group associated with the Abelian HSP oracle $f$ via the formula (\ref{eq:Oracular Group Operation}). The group $\mathcal{O}$ is not a black-box group, because no oracle to multiply in $\mathcal{O}$ was provided. However, we discussed at the end of section \ref{sect:Abelian HSPs} that one can use the hidden subgroup problem oracle to perform certain multiplications implicitly.

We show next that theorem \ref{thm:Simulation} can be re-casted in the HSP setting as ``\emph{the ability to decompose the oracular group $\mathcal{O}$  renders normalizer circuits over $\DProd{d}{m}\times \mathcal{O}$ efficiently classically simulable}''. To see this, assume a group decomposition table $(\alpha, \beta, A, B, c)$ is given. Then we know $\mathcal{O}\cong\Z_{c_1}\times \cdots \times \Z_{c_m}$. Let us now view the function  $\alpha(g)=(g,f(g))$ used in the HSP quantum algorithm as a  group automorphism of $G\times \mathcal \Z_{c_1}\times \cdots \times \Z_{c_m}$, where we decompose $\mathcal{O}$. Then, it is easy to check that $\begin{pmatrix}
1 & 0 \\
B & 1 \\
\end{pmatrix}$ is  a matrix representation of this map. It follows that the group decomposition table can be used to de-black-box the HSP oracle, and this fact allows us to adapt the proof of theorem \ref{thm:Simulation} step-by-step to this case.

We point out further that the extended Cheung-Mosca algorithm can be adapted to the HSP setting, showing that normalizer circuits over $G\times \mathcal{O}$ can be used to decompose $\mathcal{O}$. This follows from the fact that the function $f$ that we need to query to decompose $\mathbf{B}$ using the extended Cheung-Mosca algorithm (algorithm \ref{alg:group decomposition}) has precisely the same form as the HSP oracle. Using the HSP oracle as a subroutine in algorithm \ref{alg:group decomposition} (which we can query \emph{by promise}), the algorithm computes a group decomposition tuple for $\mathcal{O}$.

Finally, we can combine these last observations with theorem \ref{thm:Hidden Kernel Problem} and conclude that the problem of decomposing groups of the form $\mathcal{O}$ is classically polynomial-time equivalent to the Abelian hidden subgroup problem. The proof is analogous to that of theorem \ref{thm:Hidden Kernel Problem}.

\end{document}